\documentclass{article}

\usepackage{amsfonts,dsfont}
\usepackage{bm}
\usepackage{mathtools}
\usepackage{stmaryrd,ulem}
\usepackage{a4wide}

\usepackage[round]{natbib}

\bibliographystyle{apalike}

\usepackage{hyperref}

\usepackage{amsthm}
\newtheorem{theorem}{Theorem}[section]
\newtheorem{proposition}[theorem]{Proposition}
\newtheorem{corollary}[theorem]{Corollary}
\newtheorem{lemma}[theorem]{Lemma}

\theoremstyle{remark}
\newtheorem{remark}[theorem]{Remark}

\theoremstyle{definition}

\usepackage[ruled]{algorithm2e}

\newcommand{\R}{\mathbb{R}}
\newcommand{\PI}{\mathcal{I}}

\newcommand{\AdaDetect}{\mbox{AD}}
%{V_{\ell}}

%{V_{\mbox{\tiny DKW}}}

\newcommand{\Cov}{\mbox{Cov}}
\renewcommand{\P}{\mathbb{P}}
\newcommand{\E}{\mathbb{E}}

\newcommand{\TDP}{\mathrm{TDP}}
\newcommand{\FCP}{\mathrm{FCP}}
\newcommand{\FCR}{\mathrm{FCR}}

\newcommand{\FDR}{\mathrm{FDR}}

\newcommand{\FDP}{\mathrm{FDP}}

%\mathrm{JC}}

\newcommand{\mbe}{\mathbb{E}}

\newcommand{\mbp}{\mathbb{P}}

\newcommand{\mtc}{\mathcal}
\newcommand{\mbf}{\mathbf}

\newcommand{\wt}[1]{{\widetilde{#1}}}
\newcommand{\wh}[1]{{\widehat{#1}}}
\newcommand{\ol}[1]{\overline{#1}}

\RequirePackage{xstring}

% Some black magick to remove annoying extra spaces before/after \left* and \right*

\let\originalleft\left
\let\originalright\right
\renewcommand{\left}{\mathopen{}\mathclose\bgroup\originalleft}
\renewcommand{\right}{\aftergroup\egroup\originalright}

\newcommand{\paren}[2][a]{%
\IfEqCase{#1}{%
{a}{\left(#2\right)}%
{0}{(#2)}%
{1}{\big(#2\big)}%
{2}{\Big(#2\Big)}%
{3}{\bigg(#2\bigg)}%
{4}{\Bigg(#2\Bigg)}%
}[\PackageError{paren}{Undefined option to paren: #1}{}]%
}
\newcommand{\norm}[2][a]{%
\IfEqCase{#1}{%
{a}{\left\lVert#2\right\rVert}%
{0}{\lVert#2\rVert}%
{1}{\big\lVert#2\big\rVert}%
{2}{\Big\lVert#2\Big\rVert}%
{3}{\bigg\lVert#2\bigg\rVert}%
{4}{\Bigg\lVert#2\Bigg\rVert}%
}[\PackageError{norm}{Undefined option to norm: #1}{}]%
}
\newcommand{\brac}[2][a]{%
\IfEqCase{#1}{%
{a}{\left[#2\right]}%
{0}{[#2]}%
{1}{\big[#2\big]}%
{2}{\Big[#2\Big]}%
{3}{\bigg[#2\bigg]}%
{4}{\Bigg[#2\Bigg]}%
}[\PackageError{brac}{Undefined option to brac: #1}{}]%
}
\newcommand{\inner}[2][a]{%
\IfEqCase{#1}{%
{a}{\left\langle#2\right\rangle}%
{0}{\langle#2\rangle}%
{1}{\big\langle#2\big\rangle}%
{2}{\Big\langle#2\Big\rangle}%
{3}{\bigg\langle#2\bigg\rangle}%
{4}{\Bigg\langle#2\Bigg\rangle}%
}[\PackageError{inner}{Undefined option to inner: #1}{}]%
}
\newcommand{\abs}[2][a]{
\IfEqCase{#1}{%
{a}{\left\vert#2\right\rvert}%
{0}{\vert#2\rvert}%
{1}{\big\vert#2\big\rvert}%
{2}{\Big\vert#2\Big\rvert}%
{3}{\bigg\vert#2\bigg\rvert}%
{4}{\Bigg\vert#2\Bigg\rvert}%
}[\PackageError{abs}{Undefined option to abs: #1}{}]%
}

\newcommand{\set}[2][a]{
\IfEqCase{#1}{%
{a}{\left\{#2\right\}}%
{0}{\{#2\}}%
{1}{\big\{#2\big\}}%
{2}{\Big\{#2\Big\}}%
{3}{\bigg\{#2\bigg\}}%
{4}{\Bigg\{#2\Bigg\}}%
}[\PackageError{set}{Undefined option to set: #1}{}]%
}

\newcommand{\ind}[2][a]{{\mbf{1}\set[#1]{#2}}}

% Operateurs proba

%\newcommand{\var}[2][a]{{\rm Var}\brac[#1]{#2}}
%
 
\newcommand{\e}[2][a]{\mbe\brac[#1]{#2}}

\newcommand{\prob}[2][a]{\mbp\brac[#1]{#2}}

\newcommand{\cH}{{\mtc{H}}}

\renewcommand{\l}{\ell}

%%%% Needed to cure problem with todonotes
\setlength{\marginparwidth}{15mm}
\usepackage[linecolor=blue!60!,backgroundcolor=blue!10!,textwidth=1.8cm,textsize=tiny]{todonotes}
\newcommand{\stkout}[1]{\ifmmode\text{\sout{\ensuremath{#1}}}\else\sout{#1}\fi}

%%%% Paper specific notation

%{\ol{V}^{JR}}

%\newcommand{\pws}[1]{\cP(#1)} %%% notation pour ensemble des parties
 %%% notation pour ensemble des parties

% Post-hoc
%In omnia paratus }}
%$\mbox{OP}_\alpha$}
%$\mbox{OP}_\alpha$}

\definecolor{mygreen}{rgb}{0.82, 1.0, 0.82}
\definecolor{myred}{rgb}{ 1.0, 0.84, 0.84}

% Ulysse macro

%\newcommand{\range}[1]{\nd #1\nf}
\newcommand{\range}[1]{\llbracket #1\rrbracket}

% More macros

\newcommand{\dcal}{\mathcal{D}_{{\tiny \mbox{cal}}}}
\newcommand{\dtest}{\mathcal{D}_{{\tiny \mbox{test}}}}
\newcommand{\dtrain}{\mathcal{D}_{{\tiny \mbox{train}}}}
\newcommand{\dxct}{\mathcal{D}^{X}_{{\tiny \mbox{cal+test}}}}

%end local def

\begin{document}

\title{Transductive conformal inference with adaptive scores}

\author{Ulysse Gazin\footnote{Universit\'e Paris Cit\'e and Sorbonne Universit\'e, CNRS, Laboratoire de Probabilit\'es, Statistique et Mod\'elisation. Email: ugazin@lpsm.paris} $\quad$ Gilles Blanchard\footnote{Universit\'e Paris Saclay, Institut Math\'ematique d'Orsay. Email: gilles.blanchard@universite-paris-saclay.fr} $\quad$ Etienne Roquain\footnote{Sorbonne Universit\'e and Universit\'e Paris Cit\'e, CNRS, Laboratoire de Probabilit\'es, Statistique et Mod\'elisation. Email: etienne.roquain@upmc.fr}}

  \maketitle

\bigskip

\begin{abstract}
  Conformal inference is a fundamental and versatile tool that provides distribution-free guarantees for many machine learning tasks. We consider the transductive setting, where decisions are made on a test sample of $m$ new points, giving rise to 
  $m$ conformal $p$-values. {While classical results only concern their marginal distribution, we show that their joint distribution
 follows   a P\'olya urn model, 
and establish
a concentration inequality for their empirical distribution function.} The results hold for arbitrary exchangeable scores, including {\it adaptive} ones that can use the covariates of the test+calibration samples at training stage for increased accuracy. We demonstrate the usefulness of these  theoretical
results through uniform, in-probability guarantees for two machine learning tasks of current interest: interval prediction for transductive transfer learning and novelty detection based on two-class classification.
\end{abstract}

\section{Introduction}\label{sec:intro}

Conformal inference is a general framework aiming at providing sharp uncertainty quantification guarantees
for the output of machine learning algorithms used as ``black boxes''.
A central tool of that field is the construction of a ``(non)-conformity score'' $S_i$ for each sample point.
The score functions can be learnt on a training set using various machine learning
methods depending on the task at hand.
The scores observed 
on a data sample called ``calibration sample'' $\dcal$ serve as references for the scores of a ``test sample'' $\dtest$ (which may or may not be observed, depending on the setting). The central property of these scores is that they are an exchangeable family of random variables.

\subsection{Motivating tasks}

To be more concrete, we start with two specific settings serving both as motivation and
as application. 

\begin{itemize}
\item[(PI)] Prediction intervals: we observe $\mathcal{D}_{{\tiny \mbox{cal}}}=(X_1,Y_1), \dots, (X_n,Y_n)$ a sample of i.i.d. variables with unknown distribution $P$, where $X_i\in \R^d$ is a regression covariate and $Y_i\in \R$ is the outcome. Given a new independent datum $(X_{n+1},Y_{n+1})$ generated from $P$, {the task is to} build a 
  prediction interval for $Y_{n+1}$ given $X_{n+1}$ and $\mathcal{D}_{{\tiny \mbox{cal}}}$.
  More generally, in the {\it transductive} conformal setting \citep{vovk2013transductive}, the task is
  repeated $m\geq 1$ times: 
  given $m$ new data points $\mathcal{D}_{{\tiny \mbox{test}}}=(X_{n+1},Y_{n+1}), \dots, (X_{n+m},Y_{n+m})$ i.i.d from $P$, build $m$ 
  prediction intervals for $Y_{n+1},\dots,Y_{n+m}$ given $X_{n+1},\dots,X_{n+m}$ and $\mathcal{D}_{{\tiny \mbox{cal}}}$. 
\item[(ND)] Novelty detection: we observe  $\mathcal{D}_{{\tiny \mbox{cal}}}=(X_1,\dots,X_n)$, a sample of nominal data points in $\R^d$, drawn i.i.d. from an unknown (null) distribution $P_0$, and a 
  test sample $\mathcal{D}_{{\tiny \mbox{test}}}=(X_{n+1},\dots,X_{n+m})$ of independent points in $\R^d$, each of which is distributed as $P_0$ or not.  {The task is to}  decide if each $X_{n+i}$ is distributed as the training sample (i.e., from $P_0$) or is a ``novelty''. 
\end{itemize}

\begin{figure}[h!]
\vspace{-1cm}
\center
\hspace{-8mm}
\includegraphics[scale=0.38]{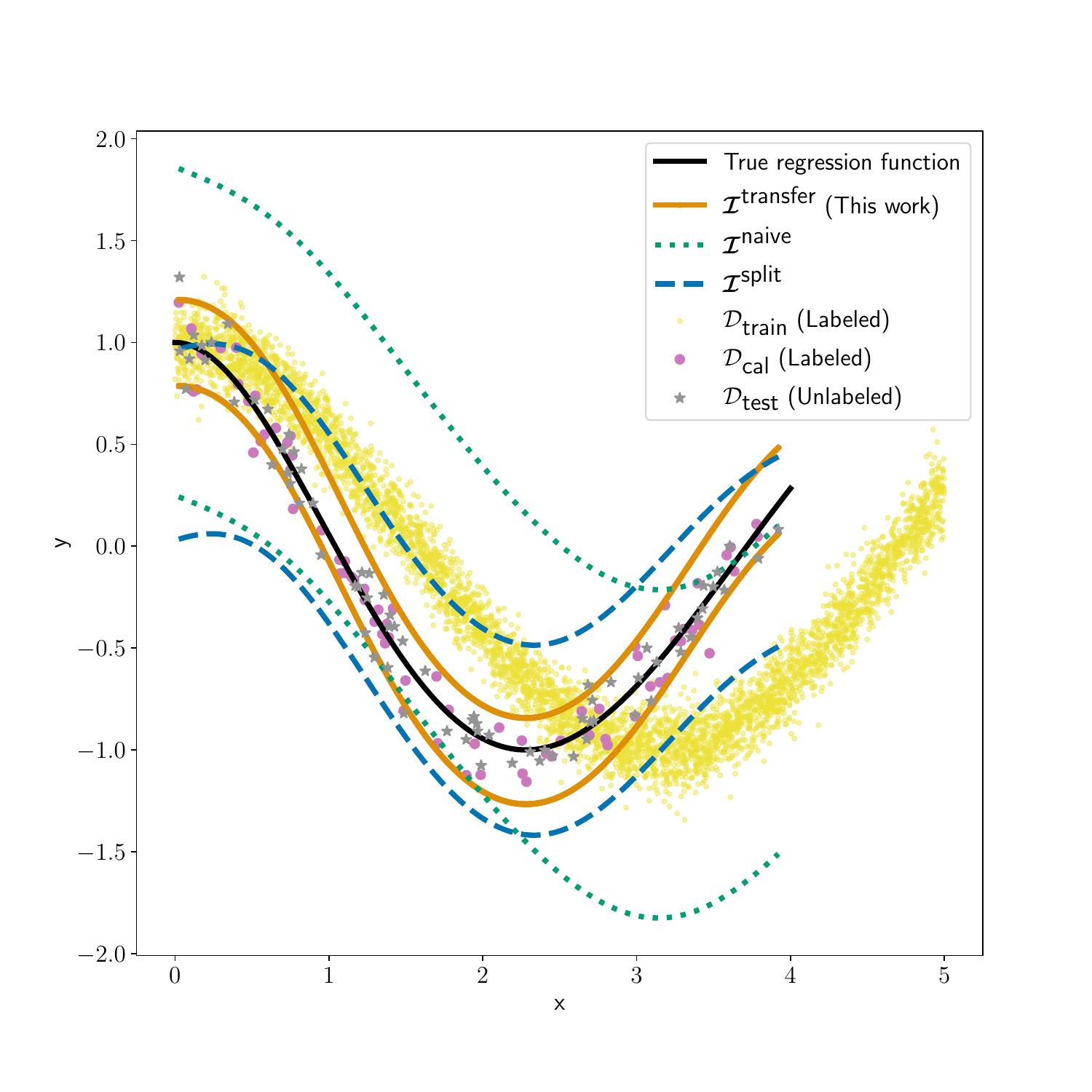}
\vspace{-1cm}
\caption{Task (PI) with adaptive scores in a non-parametric regression setting with domain shift between train and calibration+test samples
  ( 
{proof-of-concept model,} see Section~\ref{sec:numexpPI}).  Our contribution is both to propose adaptive scores and predictions relying on transfer learning (this figure), and uniform bounds on the false coverage proportion, see Figure~\ref{fig:IlluAlpha_L}. 
  \label{fig:IlluTransfert}
  }
\end{figure}

For both inference tasks, the usual pipeline is based on the construction of non-conformity real-valued scores $S_1,\dots,S_{n+m}$ for each member of $\mathcal{D}_{{\tiny \mbox{cal}}}\cup\mathcal{D}_{{\tiny \mbox{test}}}$, which requires an additional independent training sample $\mathcal{D}_{{\tiny \mbox{train}}}$ (in the so-called ``split conformal'' approach):
\begin{itemize}
\item[(PI)] the scores are (for instance) the regression residuals $S_i=|Y_i-\mu(X_i;\mathcal{D}_{{\tiny \mbox{train}}})|$, $1\leq i\leq n+m$, where the function $\mu(x;\mathcal{D}_{{\tiny \mbox{train}}})$ is a point prediction of $Y_i$ given $X_i=x$, learnt from the sample $\mathcal{D}_{{\tiny \mbox{train}}}$.
\item[(ND)] the scores are of the form $S_i=g(X_i;\mathcal{D}_{{\tiny \mbox{train}}})$, $1\leq i\leq n+m$, where the score function $g(\cdot;\mathcal{D}_{{\tiny \mbox{train}}})$ is learnt using the sample $\mathcal{D}_{{\tiny \mbox{train}}}$; $g(x)$ is meant to be large if $x$ is fairly different from the members of $\mathcal{D}_{{\tiny \mbox{train}}}$ (so that it is ``not likely'' to have been generated from $P_0$).  
\end{itemize}
In both cases, 
inference is based on the so-called split conformal $p$-values \citep{vovk2005algorithmic}:
\begin{equation}\label{equemppvalues}
p_i=(n+1)^{-1}\paren[3]{1+\sum_{j=1}^n \ind{S_j\geq S_{n+i}}},\:\: i\in \range{m}.
\end{equation} 
In other words, $(n+1)p_i$ is equal to the rank of $S_{n+i}$ in the set of values $\{S_1,\dots, S_n,S_{n+i}\}$,
and a small $p$-value $p_i$ indicates that the test score $S_{n+i}$ is abnormally high within the set of reference scores.
The link 
to the two above 
tasks is as follows: for (PI), the prediction interval {$\bm{\mathcal{C}}(\alpha)$ for $Y_{n+i}$ with coverage probability $(1-\alpha)$ is obtained by inverting the inequality $p_i>\alpha$ w.r.t. $Y_{n+i}$, see \eqref{equrules} below.} For (ND), the members of the test sample declared as novelties are those with a $p$-value $p_i\leq t$ for some 
threshold $t$.

Studying the behavior of the conformal $p$-value family is thus a cornerstone of 
conformal inference. Still, classical results only concern the {\it marginal} distribution of the $p$-values while the joint distribution remains largely unexplored in full generality.

\subsection{Contributions and overview of the paper}

In Section~\ref{sec:main}, we present new results for the joint distribution of the conformal $p$-values  \eqref{equemppvalues} for general exchangeable scores (for any sample sizes $n$ and $m$). First, in Section~\ref{sec:key}, we show that the dependence structure involved {only depends on $n$ and $m$, and} 
follows a P\'olya urn model; this entails both explicit formula and useful characterizations. Second,
we deduce a new finite sample DKW-type concentration inequality~\citep{Mass1990} for the empirical distribution function (ecdf) of the conformal $p$-values.

We illustrate the interest of the theoretical results through the application cases (PI) in Section~\ref{sec:PI} and (ND) in Section~\ref{sec:ND}, for which dedicated numerical experiments are also provided\footnote{The code used in all our experiments is made publicly available at \url{https://github.com/ulyssegazin/TransductiveAdaptive_CP}.}.

Our theory provides {\it in-probability {(i.e. confidence)} bounds} for the error proportion when $m$ decisions are taken simultaneously (transductive setting);
furthermore, these bounds are {\it uniform} over a certain class of decisions. {More precisely}, the proportion of errors among the $m$ decisions corresponds to 
the  false coverage proportion (FCP) for (PI), resp. the false discovery proportion (FDP) for (ND). We develop upper confidence bounds for these
quantities, {in dependence of prediction interval length for (PI), resp. the rejection threshold for (ND),
  and valid {\it uniformly} over the choice of these parameters.}
{This is in contrast to marginal guarantees in previous literature only providing in-expectation guarantees of FCP/FDP
  at a fixed level $\alpha$, and for specific procedures.}
{Obtaining in-probability bounds for the FDP is a classical and active theme of multiple testing theory: 
  in contrast to an in-expectation control, it takes into account the fluctuations of the error proportion. It thus brings more fine-grained reliability, while the uniform guarantee also offers the practitioner more flexibility
  for taking a data-driven decision that is still theoretically backed up.
  These guarantees can in particular be crucial when handling sensible data.}
  {Similarly, obtaining a sharp confidence bound on the actual (random) number of false inferences for (PI) for
  repeated decisions
  is much more informative than a bound on its expectation.}

We insist that we only assume that the scores are exchangeable to obtain in-probability guarantees.
Exchangeability is a classical assumption in conformal theory,
though some recent works have sometimes dropped it in favor of i.i.d. scores.
Deriving results under the weaker exchangeable assumption is crucial in the considered applications,
because while the {\it data} is assumed i.i.d., we rely on {\it adaptive conformal scores} which depend not only on the training sample (arbitrarily), but also {{\it on the calibration+test sample} in an exchangeable way.
  Adaptive scores offer superior performance in practice (see Figure \ref{fig:IlluTransfert} for our approach to transductive transfer PI), are indeed exchangeable (thus, our theory applies) but {\it not} i.i.d. This illustrates the interest to develop the joint distribution theory under the weaker exchangeable assumption 
  {\it even} if the underlying data is assumed to be i.i.d., a standard setting (our results also hold if the data is only assumed exchangeable).

\subsection{Relation to previous work}\label{sec:rel}

For fundamentals on conformal prediction, see \cite{vovk2005algorithmic,balasubramanian2014conformal}.
We only {consider} the {\it split conformal} approach, also named inductive conformal approach in the seminal work of \cite{papadopoulos2002inductive}.
The split conformal approach uses a separate training set but  is considered the
most practically amenable approach for big data (in contrast to the ``full conformal'' approach which can be sharper but computationally intractable).

The most important consequence of score exchangeability is that the marginal distribution of a conformal $p$-value is a discrete uniform under the joint (calibration and test) data distribution.
There has been  significant recent interest for the {\it conditional} distribution of a marginal $p$-value,
conditional to the calibration sample, under the stronger assumption of i.i.d. scores. The corresponding results take the form
of bounds on $\P(p_1\leq t \:|\: \dcal)$ holding with high probability over $\dcal$ (\citealp{vovk2012conditional,bian2022training,sarkar2023post,bates2023testing}, 
where in the two latter references the results are in addition uniformly valid in $t$).  
However, the i.i.d. scores assumption prevents handling adaptive scores
, for which only exchangeability is guaranteed; moreover, these works only handle a single prediction. 

Simultaneous inference for the (PI) task has been  proposed by \cite{vovk2013transductive} (see also \citealp{saunders1999transduction} for an earlier occurrence for one $p$-value with multiple new examples), referred to as transductive conformal inference. It includes a bound on the family-wise error rate
(the probability of committing one or more PI errors) based on a Bonferroni-type correction. In the present work we allow the number of PI errors to be positive but aim at a tight control of this number in probability (uniformly valid over the choice of PI length).

  Closest to our work, \cite{f2023universal,huang2023uncertainty}  analyze the false coverage proportion (FCP) of the usual prediction interval family $\bm{\mathcal{C}}(\alpha)$ repeated over $m$ test points: 
  the exact distribution of the FCP
  under data exchangeability is provided, and related in \cite{f2023universal} to a P\`olya urn model with two colors.
  We show the more general result that the {\it full joint} distribution of $(p_1,\dots,p_m)$ follows a P\`olya urn model with $(n+1)$ colors, which entails the result of \cite{f2023universal,huang2023uncertainty} as a corollary (see Supplemental~\ref{sec:polya}). This brings substantial innovations: our bounds on FCP are {\it uniform} in $\alpha$, and we provide both
  the exact joint distribution and an explicit non-asymptotic approximation via a DKW-type concentration bound.
  {Finally, \cite{bao2023selective} also established an in-expectation control 
    of the FCP after a selection stage.  
    By contrast, we provide FCP bounds in probability. In addition, while our theory is stated without selection stage, it can be applied to a permutation invariant selection, see Remark~\ref{rem:selection}.}

  The (ND) setting is alternatively referred to as Conformal Anomaly Detection (see Chapter 4 of \citealp{balasubramanian2014conformal}).
  We specifically consider here the (transductive) setting of \cite{bates2023testing} where the test sample contains novelties, and the corresponding $p$-values for `novelty' entries are not discrete uniform  but expected to be stochastically smaller.
  Due to strong connections to multiple testing,  ideas and procedures stemming from that area can be adapted to address (ND), specifically by controlling the false discovery rate (FDR, the expectation of the FDP), such as as the Benjamini-Hochberg (BH) procedure 
  \citep{BH1995}. 
  {Use of adaptive scores and corresponding} FDR control has been investigated by \cite{marandon2022machine}. 
  Our contribution with respect to that work comes from getting uniform and in-probability bounds for the FDP
  (rather than only in expectation, for the FDR).

\section{Main results}\label{sec:main}

\subsection{Setting}\label{sec:setting}

{We denote integer ranges using $\range{i}=\{1,\dots,i\}$, $\range{i,j}=\{i,\dots,j\}$.}
Let $(S_i)_{i \in \range{n+m}}$ be real random variables corresponding to non-conformity scores, for which $(S_j)_{j \in \range{n}}$ are the ``reference'' scores and $(S_{n+i})_{i \in \range{m}}$ are the ``test'' scores. 
We assume
\begin{equation}\label{as:exchangeable}\tag{Exch}
\mbox{The score vector $(S_i)_{i\in \range{n+m}}$ is exchangeable.}
\end{equation}

Under \eqref{as:exchangeable}, the $p$-values \eqref{equemppvalues} have super-uniform marginals (see, e.g., \citealp{RW2005}).
In addition, the marginal distributions are all equal and uniformly distributed on $\{\ell/(n+1),\ell\in\range{n+1}\}$ under the additional mild assumption:
\begin{equation}\label{as:noties}\tag{NoTies}
\mbox{The score vector $(S_i)_{i\in \range{n+m}}$ has no ties a.s.}
\end{equation}

While the marginal distribution is well identified, the joint distribution of the $p$-values is not well studied yet. In particular, we will be interested in the 
empirical distribution function of the $p$-value family, defined as
\begin{equation}\label{equ-ecdfpvalues}
\wh{F}_m(t):=m^{-1}\sum_{i=1}^m \ind{p_i\leq t}, \:\:\: t\in [0,1].
\end{equation}
Note that the $p$-values are not i.i.d. under \eqref{as:exchangeable}, so that most classical concentration inequalities, such as DKW's inequality \citep{Mass1990}, or Bernstein's inequality, cannot be directly used. Instead, we should take into account the specific dependence structure underlying these $p$-values.

\subsection{Key properties}\label{sec:key}

We start with  a straightforward result, under the stronger assumption
\begin{equation}\label{as:iid}\tag{IID}
\mbox{The variables $S_i ,i\in \range{n+m}$, are i.i.d.}
\end{equation}
For this, introduce, for any fixed vector $U=(U_1,\dots,U_n)\in [0,1]^n$, the discrete distribution $P^U$ on the set $\set[1]{\frac{\ell}{n+1}, \ell \in \range{n+1}}$,  defined as 
\begin{equation}\label{equPU}
P^U(\{\l/(n+1)\})=U_{(\l)}-U_{(\l-1)}, \:\:\:\l\in \range{n+1},
\end{equation}
where $0=U_{(0)}\leq U_{(1)}\leq \dots \leq U_{(n)}\leq U_{(n+1)}=1$  are the increasingly ordered values of $U=(U_1,\dots,U_n)$. {In words, the $n$ values of $U$ divide the interval $[0,1]$ into $(n+1)$ distinct cells
  (labeled $\frac{\ell}{n+1}, \ell\in \range{n+1}$), and $P^U$ is
  the probability distribution of the label of the cell
  a $\mathrm{Unif}[0,1]$ variable would fall into.}

Note that $P^U$ has for c.d.f. 
\begin{equation}\label{equFU}
F^U(x)=U_{(\lfloor (n+1)x\rfloor)}, \:\:\:x\in [0,1].
\end{equation}

{The following result can be considered as well known from previous literature (see, e.g.,  proof of Theorem~6 in \citealp{bates2023testing}); we include a short proof for completeness.}
\begin{proposition}\label{prop:iid}
  Assume \eqref{as:iid} and \eqref{as:noties} and consider the $p$-values $(p_i,i\in \range{m})$ given by \eqref{equemppvalues}. Then conditionally on  $\mathcal{D}_{{\tiny \mbox{cal}}}=(S_1,\dots,S_n)$, the $p$-values 
  are i.i.d. of common distribution given by 
$$
p_1 \:|\: \mathcal{D}_{{\tiny \mbox{cal}}} \sim P^U,
$$
where  $U=(U_1,\dots,U_n)=\paren[1]{1-F(S_1),\dots,1-F(S_n)}$ are pseudo-scores
and $F$ is the common c.d.f. of the scores of $\mathcal{D}_{{\tiny \mbox{cal}}}$, that is, $F(s)=\P(S_1\leq s)$, $s\in \R$.
In addition the pseudo-score vector $U$ is i.i.d. $\mathrm{Unif}[0,1]$ distributed.
\end{proposition}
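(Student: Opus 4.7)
The plan is to reduce everything to a uniform-order-statistics calculation by passing to the pseudo-scores $U_j=1-F(S_j)$ and $W_i=1-F(S_{n+i})$.

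First, I would note that \eqref{as:iid} together with \eqref{as:noties} forces the common distribution of the $S_i$'s to be atomless (otherwise two i.i.d.\ copies would tie with positive probability), so the c.d.f.\ $F$ is continuous. This has two consequences that I will use throughout: (a) the variables $F(S_j)$, and hence $U_j=1-F(S_j)$, are i.i.d.\ $\mathrm{Unif}[0,1]$ by the probability integral transform, which already gives the last sentence of the statement; and (b) almost surely, $S_j\geq S_{n+i}$ if and only if $F(S_j)\geq F(S_{n+i})$, i.e.\ if and only if $U_j\leq W_i$, where $W_i:=1-F(S_{n+i})$.

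Next, I would rewrite the $p$-values in terms of these uniform variables: almost surely,
\begin{equation*}
p_i=\frac{1}{n+1}\paren[3]{1+\sum_{j=1}^{n}\ind{U_j\leq W_i}},\qquad i\in \range{m}.
\end{equation*}
Condition on $\dcal$, or equivalently on $U=(U_1,\dots,U_n)$ (since $F$ is a fixed deterministic function). By \eqref{as:iid}, the vector $(W_1,\dots,W_m)$ remains independent of $\dcal$, with i.i.d.\ $\mathrm{Unif}[0,1]$ coordinates. Hence, conditionally on $\dcal$, the random variables $p_1,\dots,p_m$ are conditionally independent (each $p_i$ is a measurable function of $U$ and the single $W_i$).

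It then remains to identify the conditional law of a single $p_i$. Using the order statistics $0=U_{(0)}\leq U_{(1)}\leq \dots\leq U_{(n)}\leq U_{(n+1)}=1$ (distinct a.s.\ by continuity), the count $\sum_{j=1}^n \ind{U_j\leq W_i}$ equals $\ell-1$ exactly when $U_{(\ell-1)}\leq W_i<U_{(\ell)}$. Since $W_i\mid \dcal \sim \mathrm{Unif}[0,1]$, this event has conditional probability $U_{(\ell)}-U_{(\ell-1)}$, which is precisely $P^U(\{\ell/(n+1)\})$ by \eqref{equPU}. Thus $p_i\mid \dcal\sim P^U$, and combined with the conditional independence above this yields the claim. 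I do not anticipate a serious obstacle; the only delicate point is to justify replacing the order on the $S_i$'s by the order on the uniform variables, which is handled by the atomlessness of $F$ guaranteed by \eqref{as:noties}.
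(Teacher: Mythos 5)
Your proposal is correct and follows essentially the same route as the paper's proof: pass to the uniform pseudo-scores via the continuity of $F$ forced by \eqref{as:noties}, note that each $p_i$ depends only on $\dcal$ and its own test score (giving conditional independence), and compute the conditional law of a single $p_i$ from the uniform order statistics, the only cosmetic difference being that you identify the conditional pmf directly while the paper writes the conditional cdf $U_{(\lfloor x(n+1)\rfloor)}$. The point you flag as delicate (transferring the ordering from the $S_i$'s to the uniforms) is handled in the paper via the identity $F^\dagger(F(S_i))=S_i$ a.s., which is the same atomlessness argument you invoke.
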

{{\bf Proof sketch.} 
  The conditional distribution
  of $p_i$ only depends on score ordering which is unambiguous due to \eqref{as:noties},
  and is thus invariant by monotone transformation of the scores by $(1-F)$.
  Writing explicitly the cdf of $p_i$ from the uniformly distributed transformed scores yields~\eqref{equFU}.
  See Supplemental~\ref{proofprop:iid} for details.}

In the literature, such a result is used  to control the conditional failure  probability $\P(p_1\leq \alpha \:|\: \mathcal{D}_{{\tiny \mbox{cal}}})$ around its expectation (which is ensured to be smaller than, and close to, $\alpha$) with concentration inequalities valid under an i.i.d. assumption \citep{bates2023testing, sarkar2023post, bian2023training}.

By integration over $U$, a direct consequence of Proposition~\ref{prop:iid} is that, under \eqref{as:iid} and \eqref{as:noties}, and {\it unconditionally on $\mathcal{D}_{{\tiny \mbox{cal}}}$}, the family of conformal $p$-values $(p_i,i\in \range{m})$ has the ``universal'' distribution 
$P_{n,m}$ on $[0,1]^m$ defined as follows:
\begin{align}
P_{n,m} = \mathcal{D}\big(q_i,i\in \range{m}\big)
  \label{equ:distribution}, \text{ where }\\
  \begin{cases}
    \hfill\big(q_1,\ldots,q_m |U\big) &\stackrel{\text{i.i.d.}}{\sim} P^U;\\
  \text{ and } U=(U_1,\ldots,U_n) &\stackrel{\text{i.i.d.}}{\sim} \mathrm{Unif}([0,1]).
  \end{cases} \label{equ:generation}
\end{align}

Our first result is to note that the latter holds beyond the i.i.d. assumption.

\begin{proposition}\label{prop:exch}
  Assume \eqref{as:exchangeable} and \eqref{as:noties}, then the family of $p$-values $(p_i,i\in \range{m})$ given by \eqref{equemppvalues}
  has joint distribution $P_{n,m}$, which is
  defined by \eqref{equ:distribution}-\eqref{equ:generation} and is independent of the specific score distribution.
\end{proposition}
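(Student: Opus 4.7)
The plan is to reduce to a rank-statistic argument and then invoke Proposition~\ref{prop:iid} on a specific i.i.d.\ realization. The key point is that the $p$-values defined by \eqref{equemppvalues} depend on the score vector only through the vector of ranks, and under \eqref{as:exchangeable}+\eqref{as:noties} this rank vector has a universal distribution.

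First, I would observe that, thanks to \eqref{as:noties}, the rank vector $R=(R_1,\dots,R_{n+m})$ of $(S_1,\dots,S_{n+m})$ is well defined a.s.\ as an element of the symmetric group $\mathfrak{S}_{n+m}$. Each $p_i$ defined by \eqref{equemppvalues} satisfies
\[
(n+1)p_i \;=\; 1 + \card\set{j\in\range{n}\telque R_j \geq R_{n+i}},
\]
so the whole vector $(p_1,\dots,p_m)$ is a deterministic (measurable) function $\Phi$ of $R$. Hence the joint law of $(p_1,\dots,p_m)$ is the pushforward of the law of $R$ through $\Phi$.

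Second, I would invoke the classical distribution-freeness of ranks under exchangeability with no ties: \eqref{as:exchangeable}+\eqref{as:noties} force $R$ to be uniformly distributed on $\mathfrak{S}_{n+m}$. (The argument is short: any permutation $\sigma\in\mathfrak{S}_{n+m}$ satisfies $\P(R=\sigma) = \P(S_{\sigma^{-1}(1)}<\cdots<S_{\sigma^{-1}(n+m)})$, and by exchangeability this probability does not depend on $\sigma$; since the events partition a probability-one event by \eqref{as:noties}, each equals $1/(n+m)!$.) Consequently, the law of $(p_1,\dots,p_m) = \Phi(R)$ depends neither on the specific distribution of the scores nor on anything other than $n$ and $m$.

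Finally, to identify this universal law, I would apply the argument to the particular i.i.d.\ case $S_i \stackrel{\text{i.i.d.}}{\sim}\mathrm{Unif}([0,1])$ (which trivially satisfies both \eqref{as:exchangeable} and \eqref{as:noties}). Proposition~\ref{prop:iid} already gives the joint law in that case as $P_{n,m}$ defined by \eqref{equ:distribution}-\eqref{equ:generation}, after integrating out $\dcal$. This yields the claim. There is no real obstacle in this proof; the only substantive step is the rank-distribution-freeness lemma, which is standard but worth stating explicitly since it is the mechanism through which adaptive, non-i.i.d.\ scores still yield the same joint $p$-value law as in the i.i.d.\ case.
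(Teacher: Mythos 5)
Your proposal is correct and follows essentially the same route as the paper's own proof: reduce the $p$-values to a function of the rank vector, note that under \eqref{as:exchangeable} and \eqref{as:noties} the ranks are uniform on the permutations of $\range{n+m}$ (hence the joint law is distribution-free), and identify the universal law via the i.i.d.\ case and Proposition~\ref{prop:iid}. The only difference is that you spell out the standard rank-uniformity lemma explicitly, which the paper simply asserts.
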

{{\bf Proof sketch.} The joint distribution of the $p$-values only depends on the ranks of the $(n+m)$
  scores. Since the scores have exchangeable distribution and~\eqref{as:noties} holds, their ranks form a random permutation of $\range{n+m}$.
Thus, 
the same rank distribution (and consequently joint $p$-value distribution) is generated when the scores are i.i.d.
Applying Proposition~\ref{prop:iid}, the $p$-value distribution can be represented as~\eqref{equ:distribution}-\eqref{equ:generation}.
See also Supplemental~\ref{proofprop:exch}.}

{The next proposition is an alternative and useful characterization of the distribution $P_{n,m}$.}

\begin{proposition}\label{prop:polya}
  $P_{n,m}$ 
  is the distribution of
  the colors of $m$ successive draws in a standard P\'olya urn model with $n+1$ colors labeled $\big\{\frac{\l}{n+1},
    \l \in \range{n+1}\big\}$.
\end{proposition}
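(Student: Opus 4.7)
The plan is to identify the distribution $P_{n,m}$ as a de Finetti--type mixture representation and then invoke the classical directing-measure description of the standard Pólya urn. Concretely, I first recast \eqref{equ:distribution}--\eqref{equ:generation} as follows: conditionally on $U=(U_1,\ldots,U_n)$, the coordinates $q_1,\ldots,q_m$ are i.i.d. categorical draws on $\{\ell/(n+1),\ell\in\range{n+1}\}$ with probability vector
\[
V = (V_1,\ldots,V_{n+1}), \qquad V_\ell = U_{(\ell)} - U_{(\ell-1)},
\]
where $U_{(0)}=0$ and $U_{(n+1)}=1$. So $P_{n,m}$ is a mixture, over the random simplex-valued vector $V$, of $m$-fold products of $\mathrm{Cat}(V)$.

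Second, I would invoke the classical spacings fact: if $U_1,\ldots,U_n$ are i.i.d. $\mathrm{Unif}[0,1]$, then the vector $V$ of $(n+1)$ consecutive spacings is uniformly distributed on the $(n+1)$-simplex, i.e. $V\sim\mathrm{Dirichlet}(1,\ldots,1)$. Thus $P_{n,m}$ is exactly the Dirichlet$(1,\ldots,1)$ mixture of $m$ i.i.d. categorical draws on the label set $\{\ell/(n+1),\ell\in\range{n+1}\}$.

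Third, I would invoke the classical de Finetti representation of a standard Pólya urn: for an urn initially containing one ball of each of the $n+1$ colors (with reinforcement $+1$ at each draw), the infinite sequence of drawn colors is exchangeable, and its directing (de Finetti) measure is precisely $\mathrm{Dirichlet}(1,\ldots,1)$ on the simplex of color frequencies. In particular, the first $m$ draws have the same joint law as $m$ i.i.d. categorical draws conditional on a Dirichlet$(1,\ldots,1)$ probability vector. Combining with the previous step identifies this joint law with $P_{n,m}$, which is exactly the claim.

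The argument has no real obstacle beyond correctly matching parameters: the ``standard'' urn must start with one ball per color (so that the Dirichlet parameters are all equal to $1$), and the spacings identity must match exactly $n$ uniforms producing $n+1$ spacings, consistent with $n+1$ colors. Both matchings are built into the definitions, so once the two classical facts (uniform spacings $\sim$ Dirichlet$(1,\ldots,1)$, and Pólya urn's directing measure is Dirichlet with parameters equal to the initial composition) are recalled, the identification is immediate.
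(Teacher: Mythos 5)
Your proof is correct, and it takes a genuinely different route from the paper. You exploit the mixture representation \eqref{equ:distribution}--\eqref{equ:generation} directly: the conditional law $P^U$ is categorical with weights equal to the uniform spacings $U_{(\ell)}-U_{(\ell-1)}$, the vector of $n+1$ spacings of $n$ i.i.d.\ uniforms is $\mathrm{Dirichlet}(1,\dots,1)$, and the standard P\'olya urn started with one ball per color has, by the classical Blackwell--MacQueen/de Finetti description, exactly this Dirichlet law as directing measure, so its first $m$ draws are a $\mathrm{Dirichlet}(1,\dots,1)$-mixture of i.i.d.\ categorical draws; the identification follows. The paper instead proves the stronger Theorem~\ref{th:key} by a self-contained combinatorial computation: conditioning on the order statistics of the exchangeable scores, the ordering permutation is uniform on the permutations of $\range{n+m}$, which yields the explicit joint pmf $\P\paren[1]{\bm{p}=\bm{j}/(n+1)}=\bm{M}(\bm{j})!\,n!/(n+m)!$, and the P\'olya urn structure is then read off from the resulting predictive distribution \eqref{equMarkov}. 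What your argument buys is brevity and conceptual transparency, at the cost of importing two classical facts (uniform spacings are $\mathrm{Dirichlet}(1,\dots,1)$; the urn's directing measure is Dirichlet with parameters given by the initial composition); what the paper's argument buys is that the intermediate explicit formulas \eqref{equjoint}, \eqref{equMarkov}, \eqref{equunifhisto} are themselves needed later (e.g., to recover Theorem~1 of \cite{f2023universal} and for the explicit control in Appendix~\ref{sec:minp}), so the combinatorial computation is not wasted work. Your parameter matchings (one initial ball per color, $n$ uniforms giving $n+1$ spacings) are the only delicate points and you handle them correctly; if you wanted to avoid invoking the infinite-sequence de Finetti theorem, you could equivalently just integrate the product of categorical probabilities against the Dirichlet density and check it equals the urn's joint pmf, which is a one-line Beta/Gamma computation.
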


Proposition~\ref{prop:polya} is proved in Supplemental~\ref{sec:polya}, where several explicit formulas for $P_{n,m}$ are also provided. We also show that this generalizes the previous works of \cite{f2023universal,huang2023uncertainty}.

Comparing Proposition~\ref{prop:iid} and Proposition~\ref{prop:exch}, we see that having i.i.d. scores is more favorable because guarantees are valid conditionally on $\mathcal{D}_{{\tiny \mbox{cal}}}$ {(with an explicit expression for $U=U(\dcal)$)}. However, as we will see in Sections~\ref{sec:PI} and~\ref{sec:ND}, the class of exchangeable scores is much more flexible and includes adaptive scores, which can improve substantially inference sharpness in specific situations. For this reason, we work with the unconditional distribution as in Proposition~\ref{prop:exch} in the sequel.

\subsection{Consequences}
\label{sec:DKW}

We now provide a DKW-type envelope for 
the empirical distribution function~\eqref{equ-ecdfpvalues} of conformal $p$-values.
Let us introduce the discretized identity function
\begin{align}
I_n(t) =\lfloor (n+1)t\rfloor/(n+1)=\E \wh{F}_m(t),\:\:\:t\in [0,1],\label{discretizedid}
\end{align}
 and the following bound:
\begin{align}
B^{\mbox{\tiny DKW}}(\lambda,n,m)&:=
 \bm{1}_{\set{\lambda <1}} \left[1+\frac{2\sqrt{2\pi}\lambda \tau_{n,m}}{(n+m)^{1/2}}\right]e^{-2\tau_{n,m} \lambda^2},
\label{BDKW}
\end{align}
where $\tau_{n,m}:=nm/(n+m)\in [(n\wedge m)/2, n\wedge m]$ is an 
{``effective sample size''}.

\begin{theorem}\label{thDKW} Let us consider the process $\wh{F}_m$ defined by \eqref{equ-ecdfpvalues}, the discrete identity function $I_n(t)$ defined by \eqref{discretizedid}, and assume \eqref{as:exchangeable} and \eqref{as:noties}. Then we have  for all $\lambda>0$, $n,m\geq 1$,
\begin{align}
\P\Big(\sup_{t\in [0,1]}(\wh{F}_m(t) - I_n(t)) > \lambda\Big)&\leq B^{\mbox{\tiny DKW}}(\lambda,n,m).\label{boundDKWup}
\end{align}
In addition, $B^{\mbox{\tiny DKW}}(\lambda^{\mbox{\tiny DKW}}_{\delta,n,m},n,m)\leq \delta$ for
\begin{align}
&\lambda^{\mbox{\tiny DKW}}_{\delta,n,m}=\Psi^{(r)}(1);
\label{boundDKWupexplicit}\\
&\Psi(x)=1\wedge\bigg( \frac{\log(1/\delta)+\log\big(1+ \sqrt{2\pi}  \frac{2\tau_{n,m}x}{(n+m)^{1/2}}\big)}{2\tau_{n,m}} \bigg)^{1/2}, \notag
\end{align}
where $\Psi^{(r)}$ denotes the function $\Psi$ iterated $r$ times (for an arbitrary integer $r\geq 1$). 
\end{theorem}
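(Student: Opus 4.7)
\textbf{Proof plan for Theorem~\ref{thDKW}.} By Proposition~\ref{prop:exch} I can work in the universal representation~\eqref{equ:distribution}--\eqref{equ:generation}, where $U=(U_1,\ldots,U_n)$ is i.i.d.\ $\mathrm{Unif}[0,1]$ and, conditionally on $U$, the $p$-values are i.i.d.\ with c.d.f.\ $F^U(t)=U_{(\lfloor(n+1)t\rfloor)}$, so that $\E[F^U(t)]=I_n(t)$. A triangle inequality then yields
\[
\sup_{t\in[0,1]}(\wh F_m(t)-I_n(t)) \;\leq\; A_m + B_n,
\]
with $A_m:=\sup_t(\wh F_m(t)-F^U(t))$ and $B_n:=\sup_t(F^U(t)-I_n(t)) = \max_{\ell\in\range{n}}(U_{(\ell)}-\ell/(n+1))$. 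I would then apply the one-sided DKW inequality with Massart's constant to each piece. Conditionally on $U$, $A_m$ is a standard empirical-process deviation, so $\P(A_m>\mu\mid U)\leq e^{-2m\mu^2}$. For $B_n$, the elementary inequality $\ell/(n+1)\geq (\ell-1)/n$ for $\ell\in\range{n}$ gives $B_n\leq \sup_u(u-G_n(u))$ where $G_n$ is the empirical c.d.f.\ of $U$, hence $\P(B_n>b)\leq e^{-2nb^2}$.

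Next I combine the two bounds by conditioning on $U$ and integrating: $\P(A_m+B_n>\lambda)\leq \E[g(B_n)]$ with $g(b)=\exp(-2m(\lambda-b)_+^2)$. Using the identity $\E g(B_n)=g(0)+\int_0^\infty g'(b)\,\P(B_n>b)\,db$ together with the DKW bound on $B_n$ reduces the task to controlling
\[
e^{-2m\lambda^2} + \int_0^\lambda 4m(\lambda-b)\,e^{-2m(\lambda-b)^2-2nb^2}\,db.
\]
Completing the square rewrites the combined exponent as $-2\tau_{n,m}\lambda^2-2(n+m)(b-m\lambda/(n+m))^2$; after the substitution $u=b-m\lambda/(n+m)$, the factor $4m(\lambda-b)$ splits as $4\tau_{n,m}\lambda-4mu$. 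Extending the Gaussian integral to $\R$ handles the constant-in-$u$ term and contributes $2\sqrt{2\pi}\tau_{n,m}\lambda/\sqrt{n+m}\cdot e^{-2\tau_{n,m}\lambda^2}$. The linear-in-$u$ remainder integrates in closed form to a telescoping piece that, combined with the boundary term $e^{-2m\lambda^2}=e^{-2\tau_{n,m}\lambda^2}e^{-2m^2\lambda^2/(n+m)}$, yields the convex combination $[n/(n+m)]e^{-2m^2\lambda^2/(n+m)}+[m/(n+m)]e^{-2n^2\lambda^2/(n+m)}\leq 1$ times $e^{-2\tau_{n,m}\lambda^2}$. Summing both contributions gives exactly $B^{\mbox{\tiny DKW}}(\lambda,n,m)$. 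The delicate step is precisely this recombination: it is what produces the sharp coefficient $\tau_{n,m}$ (rather than the naive $m$) in front of $\lambda$ inside the brackets.

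For the explicit quantile, observe that $B^{\mbox{\tiny DKW}}(\lambda,n,m)\leq\delta$ on $[0,1)$ is equivalent to $\lambda\geq\Psi(\lambda)$ with $\Psi$ as in~\eqref{boundDKWupexplicit}. Since $\Psi$ is non-decreasing and $\Psi(1)\leq 1$ by the truncation at $1$, the iterates $\lambda_r:=\Psi^{(r)}(1)$ form a non-increasing sequence in $[0,1]$. A short induction based on the inequality $2\tau_{n,m}\lambda_r^2\leq \log(1/\delta)+\log(1+c\lambda_{r-1})$ with $c=2\sqrt{2\pi}\tau_{n,m}/\sqrt{n+m}$ then gives $B^{\mbox{\tiny DKW}}(\lambda_r,n,m)\leq \delta(1+c\lambda_r)/(1+c\lambda_{r-1})\leq\delta$; the boundary case $\lambda_r=1$ is trivial since then $B^{\mbox{\tiny DKW}}$ vanishes by the indicator factor.
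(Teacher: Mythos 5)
Your proposal is correct and follows essentially the same route as the paper's proof: the representation from Proposition~\ref{prop:exch}, the triangle-inequality split into the conditional empirical deviation and the uniform order-statistics deviation, two applications of one-sided DKW, tail integration plus completion of the square, and the same fixed-point iteration argument for $\lambda^{\mbox{\tiny DKW}}_{\delta,n,m}$. The only cosmetic differences are that you extend the Gaussian integral to all of $\R$ (the paper keeps the finite range, yielding a factor $C_{\lambda,n,m}\leq 1$ and the slightly sharper intermediate bound $B^{\mbox{\tiny DKWfull}}$), and you should add the one-line remark that for $\lambda\geq 1$ the left-hand side of \eqref{boundDKWup} is zero, which matches the indicator $\bm{1}_{\set{\lambda<1}}$ in $B^{\mbox{\tiny DKW}}$.
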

{{\bf Proof sketch.} Use the representation~\eqref{equ:generation}, apply the DKW inequality separately
  to $(U_1,\ldots,U_n)$ and to $(q_1,\ldots,q_m)$ conditional to $U$, and integrate over $U$.}
See supplemental Section~\ref{sec:thDKWproof} for details (a slightly more accurate bound is also proposed). 

{In supplemental Section~\ref{sec:tighnessDKW}, we illustrate the sharpness of the inequality \eqref{boundDKWup}.}

\begin{remark}\label{remSimes}
The Simes inequality \citep{Sim1986} is true for conformal $p$-values \citep{bates2023testing}, which provides a different confidence envelope on $\wh{F}_m$. A comparison with the new DKW bound \eqref{BDKW} is provided in Supplemental~\ref{sec:Simes}. It shows that the latter is sharper in a wide range of situations. 
\end{remark}

\begin{remark}\label{numispossible}
Since the distribution $P_{n,m}$ can be easily sampled from, $\lambda^{\mbox{\tiny DKW}}_{\delta,n,m}$ in \eqref{boundDKWupexplicit} 
can be further improved by considering the sharper but implicit quantile
\begin{align*}
&\lambda^{\mbox{\tiny num-DKW}}(\delta,n,m)=\min\Big\{x\geq 0\::\: \pi_{n,m,x}\leq \delta\Big\}, \text{with}\\
&\pi_{n,m,x}:= 
  P_{n,m}\bigg(\sup_{\l\in \range{n+1}}\Big(\wh{F}_m\Big(\frac{\ell}{n+1}\Big) - \frac{\ell}{n+1}\Big) > x\bigg).\nonumber
\end{align*}
In addition, numerical confidence envelopes for $\wh{F}_m$ with other shapes can be investigated.   
For instance, for any set $\mathcal{K} \subset \range{m}$ of size $K$, we can calibrate thresholds $t_1,\dots,t_K>0$ such that 
\begin{align}
&\P_{\bm{p}\sim P_{n,m}}(\forall k\in \mathcal{K},\:p_{(k+1)}> t_k)\nonumber\\
 &= \P_{\bm{p}\sim P_{n,m}}(\forall k\in \mathcal{K},\: \wh{F}_m(t_k)\leq k/m)\geq 1-\delta.\label{exactcontrol}
\end{align}
A method is to start from a ``template'' one-parameter family $(t_k(\lambda))_{k\in \mathcal{K}}$ and then adjust $\lambda$ to obtain the desired control \citep{BNR2020,li2022simultaneous}. This approach is developed in detail in Suppl.~\ref{sec:template}.
\end{remark}

\section{Application to prediction intervals}\label{sec:PI}

In this section, we apply our results to build simultaneous conformal prediction intervals, with an angle towards adaptive scores and transfer learning.

\subsection{Setting}\label{sec:settingconformal}

Let us consider a conformal prediction framework for a regression task, see, e.g., \cite{lei2018distribution}, with three independent samples of points $(X_i,Y_i)$, where $X_i\in \R^d$ is the covariable and $Y_i\in \R$ is the outcome:
\begin{itemize}
\item Training sample $\mathcal{D}_{{\tiny \mbox{train}}}$: observed and used to build predictors; 
\item Calibration sample $\mathcal{D}_{{\tiny \mbox{cal}}}=\{(X_i,Y_i),i\in \range{n} \}$;

  observed and used to calibrate the size(s) of the prediction intervals;
\item Test sample $\mathcal{D}_{{\tiny \mbox{test}}}=\{(X_{n+i},Y_{n+i}), i \in \range{m}\}$; 
only the $X_i$'s are observed and the aim is to provide prediction intervals for the labels.
\end{itemize}
In addition, we consider the following 
transfer learning setting: while the data points are i.i.d. within each sample and the distributions of $\mathcal{D}_{{\tiny \mbox{cal}}}$ and $\mathcal{D}_{{\tiny \mbox{test}}}$ are the same,  the distribution of $\mathcal{D}_{{\tiny \mbox{train}}}$ can be different. However, $\mathcal{D}_{{\tiny \mbox{train}}}$ can still help to build a good predictor by using a transfer learning toolbox, considered here as a black box 
{(see, e.g., \citealp{zhuang2020comprehensive} for a survey on transfer learning)}.
{A typical situation of use is when the training labeled data $\mathcal{D}_{{\tiny \mbox{train}}}$ is abundant but
there is a domain shift for the test data, and we have a limited number of labeled data $\mathcal{D}_{{\tiny \mbox{cal}}}$ from the new domain.}

\subsection{Adaptive scores and procedures}

Formally, the aim is to build $\bm{\mathcal{I}}=(\mathcal{I}_i)_{i\in \range{m}}$, a family of $m$ random intervals of $\R$ such that the amount of coverage errors $(\ind{Y_{n+i}\notin \PI_i})_{i\in \range{m}}$ is controlled. 
The construction of a rule $\bm{\mathcal{I}}$ is based on non-conformity scores $S_i$, $1\leq i\leq n+m$,  corresponding to residuals between $Y_{i}$ and the prediction at point $X_{i}$: 
\begin{equation}\label{equadaptscores}
  S_i:=|Y_{i}-\hat{\mu}(X_{i};(\dtrain,\dxct))|, \:\: i\in \range{n+m},
\end{equation}
where the predictor $\hat{\mu}$ 
is learnt using $\dtrain$ and the calibration+test covariates $\dxct=(X_1,\ldots,X_{n+m})$.
{More sophisticated scores than the residuals have been proposed in earlier literature \citep{romano2019conformalized, gupta2022nested}, in particular allowing for
  conditional variance or quantile prediction and resulting prediction intervals of varying length. Our theory extends to those as well and we consider here~\eqref{equadaptscores} for
simplicity.}
We call the scores \eqref{equadaptscores} {\it adaptive} because they can use the unlabeled data $\dxct$, which is particularly suitable in the transfer learning framework where the covariates of $\mathcal{D}_{{\tiny \mbox{train}}}$ should be mapped to those of $\dxct$ to build a good predictor. 
Classical scores can also be recovered via \eqref{equadaptscores}  if the predictor ignores $\dxct$.
The predictor $\wh{\mu}$ can be any ``black box" (an unspecified transfer learning algorithm)
provided the following mild assumption is satisfied, ensuring score exchangeability:
\begin{multline}
\mbox{$\forall x\in \R^d$, $\hat{\mu}\big(x;(\mathcal{D}_{{\tiny \mbox{train}}}, \dxct)\big)$}
\mbox{ is invariant by permutation of } \dxct. 
\label{as:permut}\tag{PermInv}
\end{multline}
Since $(X_i,Y_i)_{i \in \range{n+m}}$ are i.i.d. and thus exchangeable, one can  easily show that  \eqref{as:exchangeable} holds for the adaptive scores \eqref{equadaptscores} when the predictor satisfies \eqref{as:permut}.
Predictors based on transfer machine learning procedures typically satisfy \eqref{as:permut}.
In addition, \eqref{as:noties} is a mild assumption: add a negligible noise to the scores is an appropriate tie breaking that makes  \eqref{as:noties} hold.

Given the scores \eqref{equadaptscores}, we build the conformal $p$-values via \eqref{equemppvalues} and define the specific conformal procedure $\bm{\mathcal{C}}(\alpha)=(\mathcal{C}_i(\alpha))_{i\in \range{m}}$ obtained by inverting $\{p_i>\alpha\}$ with respect to $Y_{n+i}$, that is, $\{p_i>\alpha\}=\{Y_{n+i}\in \mathcal{C}_i(\alpha)\}$  almost surely with 
\begin{equation}\label{equrules}
  \mathcal{C}_i(\alpha):=\left[ \hat{\mu}(X_{n+i};
  (\dtrain,\dxct)) \pm S_{(\lceil (n+1)(1-\alpha)\rceil)}\right], 
\end{equation}
where $S_{(1)}\leq \dots \leq S_{(n)}\leq S_{(n+1)}:=+\infty$ denote the order statistics of the calibration scores $(S_1,\dots,S_n)$. 
Observe that the radius of the interval $S_{(\lceil (n+1)(1-\alpha)\rceil)}$ can be equivalently described as the $(1-\alpha)$-quantile of the distribution $\sum_{i=1}^n \frac{1}{n+1}\delta_{S_i} + \frac{1}{n+1} \delta_{+\infty}$. Note also that $\bm{\mathcal{C}}(\alpha)=\R^m$ if $\alpha<1/(n+1)$, that is, if the desired coverage error is too small w.r.t. the size of the calibration sample.

\subsection{Transductive error rates}

By Proposition~\ref{prop:exch}, the following marginal control holds for the conformal procedure $\bm{\mathcal{C}}(\alpha)$ \eqref{equrules}:
\begin{equation}\label{equ:marg}
\P(Y_{n+i}\notin \mathcal{C}_i(\alpha))\leq \alpha, \:\:i\in \range{m}.
\end{equation}
This is classical for non-adaptive scores and our result already brings an extension to adaptive scores in the transfer learning setting. 

In addition, we take into account the prediction multiplicity by considering {\it false coverage proportion} (FCP) of some procedure $\bm{\mathcal{I}}=(\PI_i)_{i\in \range{m}}$, given by
\begin{align}
\FCP(\bm{\mathcal{I}})&:=m^{-1}\sum_{i=1}^m \ind{Y_{n+i}\notin \PI_i}.\label{error}
\end{align}
It is clear from~\eqref{equ:marg} that the procedure $\bm{\mathcal{C}}(\alpha)$ \eqref{equrules} controls the {\it false coverage rate}, that is, $\FCR(\bm{\mathcal{C}}(\alpha))):=\E[\FCP(\bm{\mathcal{C}}(\alpha))]\leq \alpha$. 
However,  the error $\FCP(\bm{\mathcal{C}}(\alpha))$ naturally fluctuates around its mean and the event $\{\FCP(\bm{\mathcal{C}}(\alpha))\leq \alpha\}$ is not guaranteed. Hence, we aim at the following control in probability of the FCP:
\begin{align}
\P[\FCP(\bm{\mathcal{C}}(\alpha))\leq \ol{\alpha}]&\geq 1-\delta\label{controlalphaFWER}.
\end{align}
Several scenarios can be considered: $\alpha$ is fixed and we want to find a suitable bound $\ol{\alpha} = \ol{\FCP}_{\alpha,\delta}$ for the
``traditional'' conformal procedure $\bm{\mathcal{C}}(\alpha)$; or conversely, $\ol{\alpha}$ is fixed and we want to adjust the
parameter $\alpha = t_{\ol{\alpha},\delta}$ of the procedure to ensure the probabilistic control at target level $\ol{\alpha}$.
For $\ol{\alpha}=0$, this reduces to $\P[\forall i\in\range{m}, \:Y_{n+i}\in \PI_i]\geq 1-\delta$, i.e., no false coverage with high probability. By applying a union bound, the procedure $\bm{\mathcal{C}}(\delta/m)$ 
satisfies the latter control, as already proposed by \cite{vovk2013transductive}. However,  in this case the predicted intervals can be trivial, that is,  $\bm{\mathcal{C}}(\delta/m)=\R^m$, if the test sample is too large, namely, $m> \delta(n+1)$.
Moreover, in a more general scenario the practitioner may want to adjust the parameter $\alpha=\wh{\alpha}$ on their own
depending on the data, for example based on some personal tradeoff between the probabilistic control obtained and the length of the corresponding prediction intervals ---
this is the common practice of a ``post-hoc'' choice (made after looking at the data).
This motivates us to aim at a uniform (in $\alpha$) bound, that is,
find a family of random variables $({\overline{\FCP}}_{\alpha,\delta})_{\alpha\in (0,1)}$ such that 
\begin{align}
\prob{\forall \alpha\in (0,1),\:\: \FCP(\bm{\mathcal{C}}(\alpha))\leq {\overline{\FCP}}_{\alpha,\delta}}&\geq 1-\delta\:.\label{controlalphaFWERunif}
\end{align}
Establishing such bounds is investigated in the next section.
This gives a guarantee on the FCP in any of the above scenarios, in particular a post-hoc choice of the parameter $\wh{\alpha}$.
\begin{figure}[t]
\vspace{-1cm}
\begin{center}
\hspace{-6mm}\includegraphics[scale=0.38]{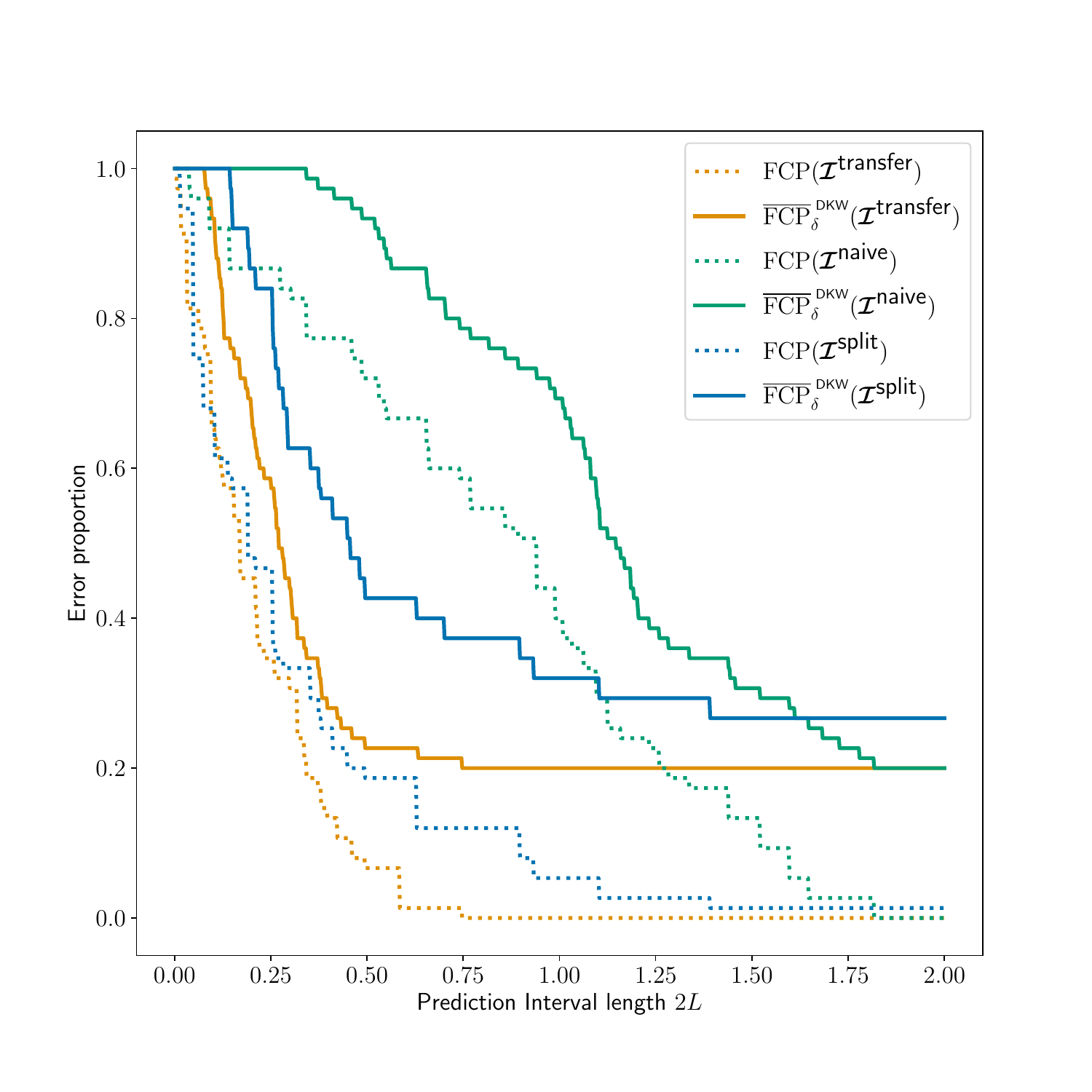}
\end{center}
\vspace{-1cm}
\caption{
Plot of $\FCP(\bm{\mathcal{I}})$ \eqref{error} (dashed) and bound ${\overline{\FCP}}^{\mbox{\tiny DKW}}_{\hat{\alpha}(L),\delta}$  \eqref{boundfalsepositiveDKW} \eqref{equ:length} (solid, $\delta=0.2$) in function of interval length $2L$ in  the same setting and procedures as in Figure~\ref{fig:IlluTransfert}. 
\label{fig:IlluAlpha_L}} 
\end{figure}
As a concrete example, one may want to choose a data-dependent $\wh{\alpha}$ to ensure prediction intervals $\bm{\mathcal{C}}(\alpha)$ of radius at most $L$, namely, 
{\begin{equation}\label{equ:length}
\widehat{\alpha}(L)=(n+1)^{-1}\Big(1+\sum_{i=1}^n \ind{S_i> L}\Big).
\end{equation}
}
Guarantee \eqref{controlalphaFWERunif} yields a $(1-\delta)$-confidence error bound  ${\overline{\FCP}}_{\widehat{\alpha}(L),\delta}$ for this choice.

\subsection{Controlling the error rates}\label{sec:controlPI}

To establish \eqref{controlalphaFWER} and \eqref{controlalphaFWERunif}, we use that from \eqref{equ-ecdfpvalues}, \eqref{equrules} and \eqref{error}, $\FCP(\bm{\mathcal{C}}(t))=\wh{F}_m(t)$ and thus for all $t\in [0,1]$,
\begin{align*}
\{\FCP(\bm{\mathcal{C}}(t))\leq \ol{\alpha}\}&=\set[1]{\wh{F}_m(t) \leq \ol{\alpha}}\\
&=\set[1]{m\wh{F}_m(t) \leq \lfloor \ol{\alpha} m\rfloor}\\
&=\set[1]{p_{(\lfloor \ol{\alpha} m\rfloor+1)}>t},
\end{align*}
where $p_{(1)}\leq \dots\leq p_{(m)}$ denote an ordered conformal $p$-values.
We deduce the following result. 

\begin{corollary}\label{corsimultaneous}
Let $n,m\geq 1$. Consider the setting of Section~\ref{sec:settingconformal}, the conformal procedure $\bm{\mathcal{C}}(\alpha)$ given by \eqref{equrules} and $P_{n,m}$ given by \eqref{equ:distribution}. Then the following holds:
\begin{itemize}
\item[(i)] for any $\ol{\alpha}\in [0,1]$, $\delta\in (0,1)$,  $\bm{\mathcal{C}}(\alpha=t_{\ol{\alpha},\delta})$ satisfies \eqref{controlalphaFWER} provided that $t_{\ol{\alpha},\delta}$ is chosen s.t.
\begin{equation}\label{equalphabaredelta}
\P_{\bm{p}\sim P_{n,m}}( p_{(\lfloor \ol{\alpha} m\rfloor+1)}\leq t_{\ol{\alpha},\delta})\leq \delta.
\end{equation}
\item[(ii)] for any $\delta\in (0,1)$,  $\paren[1]{{\overline{\FCP}}_{\alpha,\delta}}_{\alpha\in (0,1)}$ satisfies  \eqref{controlalphaFWERunif}  provided that 
\begin{equation}\label{equalphabaredeltabis}
\P_{\bm{p}\sim P_{n,m}}\paren[1]{\exists \alpha\in (0,1)\::\: \wh{F}_m(\alpha) > {\overline{\FCP}}_{\alpha,\delta} }\leq \delta.
\end{equation}
\end{itemize}
\end{corollary}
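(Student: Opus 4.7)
The plan is to reduce both claims to probability statements about the joint law of $(p_1,\dots,p_m)$, so that Proposition~\ref{prop:exch} can be applied directly. Two observations are essentially already in place: (a)~because $\bm{\mathcal{C}}(\alpha)$ in~\eqref{equrules} is constructed by inverting $\{p_i>\alpha\}\Leftrightarrow\{Y_{n+i}\in\mathcal{C}_i(\alpha)\}$ a.s., one has the pointwise identity $\FCP(\bm{\mathcal{C}}(\alpha))=\wh{F}_m(\alpha)$ for every $\alpha\in[0,1]$; and (b)~under~\eqref{as:exchangeable} and~\eqref{as:noties}, Proposition~\ref{prop:exch} asserts that $(p_1,\dots,p_m)$ has joint distribution $P_{n,m}$, regardless of the underlying score law. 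Hence any event that depends only on $(p_1,\dots,p_m)$ has the same probability under the true model as under $P_{n,m}$.

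For claim~(i), I would combine (a) and (b) with the chain of set equalities displayed just above the corollary. Specializing that chain to $t = t_{\ol{\alpha},\delta}$ rewrites the event $\{\FCP(\bm{\mathcal{C}}(t_{\ol{\alpha},\delta}))>\ol{\alpha}\}$ as $\{p_{(\lfloor\ol{\alpha}m\rfloor+1)}\leq t_{\ol{\alpha},\delta}\}$, whose probability can then be evaluated under $P_{n,m}$ by Proposition~\ref{prop:exch}; hypothesis~\eqref{equalphabaredelta} closes the argument. For claim~(ii), the same substitution $\FCP(\bm{\mathcal{C}}(\alpha)) = \wh{F}_m(\alpha)$ turns the complement of the event in~\eqref{controlalphaFWERunif} into $\{\exists\,\alpha\in(0,1):\wh{F}_m(\alpha)>{\overline{\FCP}}_{\alpha,\delta}\}$, which again depends only on $(p_1,\dots,p_m)$; a second application of Proposition~\ref{prop:exch} followed by~\eqref{equalphabaredeltabis} finishes the argument.

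There is no genuinely difficult step: the corollary is essentially a distributional reformulation, and all probabilistic content is carried by Proposition~\ref{prop:exch}, which notably accommodates adaptive scores through the exchangeability hypothesis alone. The only point requiring a small amount of care is bookkeeping with the floor function: since $\wh{F}_m$ takes values on the grid $\{k/m:0\leq k\leq m\}$, one must verify the equivalence $\{\wh{F}_m(t)\leq\ol{\alpha}\} = \{m\wh{F}_m(t)\leq\lfloor\ol{\alpha}m\rfloor\} = \{p_{(\lfloor\ol{\alpha}m\rfloor+1)} > t\}$, but this is immediate. No concentration or regularity argument is needed — all substance is already encapsulated in Section~\ref{sec:key}.
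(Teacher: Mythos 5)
Your proposal is correct and matches the paper's own argument: the paper proves the corollary precisely via the identity $\FCP(\bm{\mathcal{C}}(t))=\wh{F}_m(t)$, the displayed chain of set equalities reducing the events to statements about the ordered $p$-values, and Proposition~\ref{prop:exch} guaranteeing that these events have the same probability as under $P_{n,m}$ (the setting of Section~\ref{sec:settingconformal} supplying \eqref{as:exchangeable} via \eqref{as:permut}, together with \eqref{as:noties}). Nothing further is needed.
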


Applying Corollary~\ref{corsimultaneous} (i), for conformal prediction with guaranteed FCP, we obtain an adjusted level parameter 
which can be computed numerically
(an explicit formula can also be given for $\alpha=0$, see Supplemental~\ref{sec:minp}). Applying Corollary~\ref{corsimultaneous} (ii), and thanks to \eqref{boundDKWup}, the following family bound $({\overline{\FCP}}_{\alpha,\delta})_{\alpha\in (0,1)}$  is valid for \eqref{controlalphaFWERunif}
\begin{align}
{\overline{\FCP}}^{\mbox{\tiny DKW}}_{\alpha,\delta}&= \paren[1]{\alpha + \lambda^{\mbox{\tiny DKW}}_{\delta,n,m}}\ind{\alpha\geq 1/(n+1)},  \label{boundfalsepositiveDKW}
\end{align}
with $\lambda^{\mbox{\tiny DKW}}_{\delta,n,m}>0$ given by \eqref{boundDKWupexplicit}.
Obviously, numerical bounds can also be developed according to Remark~\ref{numispossible}.

\begin{remark}\label{rem:selection}
{Our FDP bounds extend to a selective inference framework where $\mathcal{S}=\mathcal{S}(\mathcal{D}_{{\tiny \mbox{train}}}, \dxct))\subset \range{n+m}$ is a selection rule invariant by permutation of $\dxct$, typically $\mathcal{S}=\{i\in \range{n+m}\::\:  \hat{\mu}(X_{i})\geq 0\}$. The calibration and test samples {\it over the selection $S$} are $\mathcal{D}_{\mathcal{S},{\tiny \mbox{cal}}}=\{(X_i,Y_i),i\in \mathcal{S}\cap \range{n} \}$ and $\mathcal{D}_{\mathcal{S},{\tiny \mbox{test}}}=\{(X_{n+i},Y_{n+i}), i \in \mathcal{S}\cap \range{n+1,n+m}\}$, respectively. Defining the $p$-values accordingly, our envelopes are also valid for the FCP {\it over the selection $S$} by simply replacing 
 $n$ by $|\mathcal{S}\cap \range{n}|$ and $m$ by $|\mathcal{S}\cap \range{n+1,n+m}|$. 
This complements the recent work of \cite{bao2023selective}, where only in-expectation results were established.}
\end{remark}

\subsection{Numerical experiments}\label{sec:numexpPI}

To illustrate the performance of the method, we consider the following proof-of-concept regression model: $(W_i,Y_i)$ i.i.d. with $Y_i \:|\: W_i \sim \mathcal{N}(\mu(W_i),\sigma^2)$  for some unknown function $\mu$ and parameter $\sigma>0$.
To accommodate the transfer learning setting, we assume that we observe $X_i=f_1(W_i)$ in $\dtrain$ and $X_i=f_2(W_i)$ in $\dcal \cup \dtest$ for some transformations $f_1$ and $f_2$. 
Three conformal procedures\footnote{Python code for (PI) based on implementation of \cite{BoyerCP}.} $\bm{\mathcal{I}}=\bm{\mathcal{C}}(\alpha)=(\mathcal{C}_i(\alpha))_{i\in \range{m}}$ are considered which differ only in the construction of the scores: first, $\bm{\mathcal{I}}^{\tiny \mbox{naive}}$ consists in using a predictor of the usual form $\hat{\mu}(\cdot,\mathcal{D}_{{\tiny \mbox{train}}})$ hence ignoring the distribution difference between $\dtrain$ and $\dcal \cup \dtest$ (no transfer) with a RBF kernel ridge regression;
the second procedure $\bm{\mathcal{I}}^{\tiny \mbox{split}}$ ignores completely $\dtrain$ and works by splitting $\dcal$ in two new samples of equal size to apply the usual approach with these new (reduced) samples (transfer not needed); the third approach $\bm{\mathcal{I}}^{\tiny \mbox{transfer}}$ is the proposed one, and uses the transfer predictor $\hat{\mu}(\cdot;(\dtrain,\dxct))$
 based on optimal transport
proposed by \cite{courty2017joint}.
While all methods provide the correct $(1-\alpha)$ marginal coverage, we see from Figure~\ref{fig:IlluTransfert} that $\bm{\mathcal{I}}^{\tiny \mbox{transfer}}$ is much more accurate, which shows the benefit of using transfer learning and adaptive scores. Here, $\left |\dtrain\right |=5000$, 
$n=m=75 $, $\mu(x)=\cos(x)$,  $W_i\sim\mathcal{U}(0,5)$, $f_1(x)=x$, $f_2(x)=0.6x+x^{2}/25$ and $\sigma=0.1$.
Next, for each of the three methods, the FCP and corresponding bounds \eqref{boundfalsepositiveDKW} are displayed in Figure~\ref{fig:IlluAlpha_L}. This illustrates both that each bound is uniformly valid in $L$ and that transfer learning reduces the FCP (and thus also the FCP bounds).

\section{Application to novelty detection}\label{sec:ND}

\subsection{Setting}\label{sec:settingnd}

In the novelty detection problem, we 
observe the two following independent samples:
\begin{itemize}
\item a training null sample $\mathcal{D}_{{\tiny \mbox{null}}}$ of $n_0$ nominal data points in $\R^d$ which are i.i.d. with common distribution $P_0$;
\item a test sample $\mathcal{D}_{{\tiny \mbox{test}}}=(X_i, i\in \range{m})$ of independent points in $\R^d$ either distributed as $P_0$ or not.
\end{itemize}

The aim is to decide if each $X_i$ is distributed as the training sample (that is, as $P_0$) or not. 
This long standing problem in machine learning has been recently revisited with the aim of controlling the proportion of errors among the items declared as novelties \cite{bates2023testing}; let $\cH_0=\{i\in \range{m}\::\: X_i\sim P_0\}$ corresponding to the set of non-novelty in the test sample and consider the false discovery proportion
\begin{equation}\label{FDP}
\FDP(R)=\frac{|R\cap \cH_0|}{|R|\vee 1},
\end{equation}
for any (possibly random) subset $R\subset \range{m}$ corresponding to the $X_i$'s declared as novelties. 
The advantage of considering $\FDP(R)$ for measuring the errors has been widely recognized in the multiple testing literature since the fundamental work of \cite{BH1995} and its popularity is nowadays increasing in large scale machine learning theory, see \cite{bates2023testing,marandon2022machine,jin2023model,bashari2023derandomized}, among others. 
The main advantage of $\FDP(R)$ is that the number of errors  $|R\cap \cH_0|$ is rescaled by the number of declared novelties $|R|$, which makes it scale invariant with respect to the size $m$ of the test sample, so that novelty detection can still be possible in large scale setting.

\subsection{Adaptive scores}

Following \cite{bates2023testing,marandon2022machine}, we assume that scores are computed as follows:
\begin{enumerate} 
\item Split the null sample $\mathcal{D}_{{\tiny \mbox{null}}}$ into $\mathcal{D}_{{\tiny \mbox{train}}}$ and  $\mathcal{D}_{{\tiny \mbox{cal}}}=(X_i, i\in \range{n})$ for some chosen $n\in (1,n_0)$;
\item Compute novelty scores $S_{i}=g(X_{i})$, $i\in \range{n+m}$, 
for some score function $g:\R^d\to \R$ (discussed below);
\item Compute conformal $p$-values as in \eqref{equemppvalues}. 
\end{enumerate}
In the work of \cite{bates2023testing}, the score function is built from $\mathcal{D}_{{\tiny \mbox{train}}}$ only, using a one-class classification method (classifier solely based on null examples), which makes the scores independent conditional to $\dtrain$. The follow-up work \cite{marandon2022machine} considers a score function depending both on $\mathcal{D}_{{\tiny \mbox{train}}}$ and $\mathcal{D}_{{\tiny \mbox{cal}}}\cup \mathcal{D}_{{\tiny \mbox{test}}}$ (in a permutation-invariant way of the sample $\mathcal{D}_{{\tiny \mbox{cal}}}\cup \mathcal{D}_{{\tiny \mbox{test}}}$), which allows to use a two-class classification method including test examples. Doing so, the scores are adaptive to the form of the novelties present in the test sample, which 
significantly improves novelty detection (in a nutshell: it is much easier to detect an object when we have some examples of it). 
While the independence of the scores is lost, an appropriate exchangeability property is maintained so that we can apply our theory in that case, by assuming in addition \eqref{as:noties}.

\subsection{Methods and FDP bounds}

Let us consider any thresholding novelty procedure
\begin{equation}\label{thresrule}
\mathcal{R}(t):=\{i\in \range{m}\::\:p_i\leq t\}, \:\: t\in (0,1).
\end{equation}
Then the following result holds true.
\begin{corollary}\label{cor:ThresholdFDP}
In the above novelty detection setting and under Assumption~\ref{as:noties}, the family of thresholding novelty procedures \eqref{thresrule} is such that, with probability at least $1-\delta$, we have for all  $ t\in (0,1)$,
\begin{equation}\label{ThresholdFDPboundDKW_nonStorey}
\FDP(\mathcal{R}(t))\leq  \frac{{m} I_n(t) + {m} \lambda^{\mbox{\tiny DKW}}_{\delta,n,{m}}}{1\vee |\mathcal{R}(t)|}=:{\overline{\FDP}}^{\tiny \mbox{DKW}}_{t,\delta},
\end{equation}
and with an estimation of $m_0$,
\begin{align}
&\FDP(\mathcal{R}(t)) \label{ThresholdFDPboundDKW}\leq  \frac{\hat{m}_0 I_n(t) +\max_{r\in\range{\hat{m}_0}} \{r \lambda^{\mbox{\tiny DKW}}_{\delta,n,r}\}}{1\vee |\mathcal{R}(t)|}=:{\overline{\FDP}}^{\tiny \mbox{DKW}}_{t,\delta},
\end{align}
where $\lambda^{\mbox{\tiny DKW}}_{\delta,n,r}$ is given by \eqref{boundDKWupexplicit} and
$\hat{m}_0$ is any random variable such that
\begin{align}
&\hat{m}_0\geq  \label{m0hat}
\max\bigg\{r:\inf_t\frac{\sum_{i=1}^m\ind{p_i> t} + \max_{u\in \range{r}}\{u \lambda^{\mbox{\tiny DKW}}_{\delta,n,u}\} }{1-I_{n}(t)}\geq r\bigg\},
\end{align}
where $r$ is in the range $\range{m}$ and the maximum is equal to $m$ if the set is empty. 
\end{corollary}
The proof is provided in Supplemental~\ref{sec:adadetect}.

\begin{remark}\label{rem:FDPboundalpha}
Among thresholding procedures \eqref{thresrule}, AdaDetect \citep{marandon2022machine} is obtained by applying the Benjamini-Hochberg (BH) procedure \citep{BH1995} to the conformal $p$-values. It is proved to control the expectation of the FDP (that is, the false discovery rate, FDR) at level $\alpha$.
Applying Corollary~\ref{cor:ThresholdFDP} provides in addition an FDP bound for AdaDetect, uniform in $\alpha$, see  Supplemental~\ref{sec:AdaDetect}.
\end{remark}

\begin{figure}[h!]
\vspace{-1cm} 
\begin{center}
\hspace{-6mm}\includegraphics[scale=0.38]{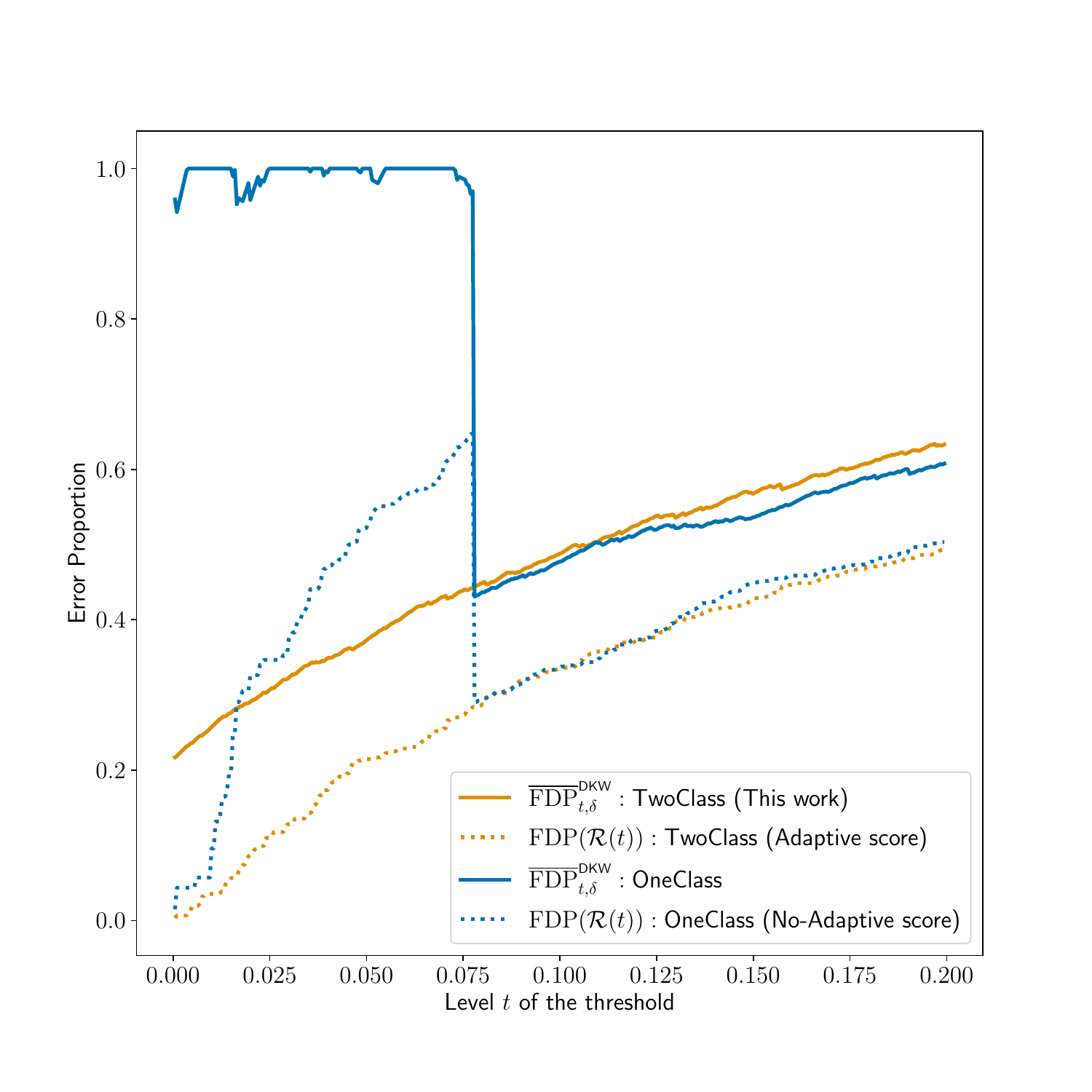}
\end{center}
\vspace{-1cm}
\caption{
  Plot of $\FDP(\mathcal{R}(t))$ \eqref{FDP}\eqref{thresrule} (dashed) and bound ${\overline{\FDP}}^{\mbox{\tiny DKW}}_{t,\delta}$  \eqref{ThresholdFDPboundDKW}  (solid, $\delta=0.2$) in function of the threshold $t$ for $\mathcal{R}(t)$ \eqref{thresrule} with a score obtained either with a one-class classification (non-adaptive) or a two-class classification (adaptive).
\label{fig:nd}}
\end{figure}

\subsection{Numerical experiments}\label{sec:numexpnd}

We follow the numerical experiments on ``Shuttle'' datasets of \cite{marandon2022machine}\footnote{The Python code uses the implementation of the procedure AdaDetect of  \cite{marandonAdaImplementation}.}.  
In Figure~\ref{fig:nd}, we displayed the true FDP and the corresponding bound \eqref{ThresholdFDPboundDKW} when computing $p$-values based on different scores: the non-adaptive scores of \cite{bates2023testing} obtained with  isolation forest one-class classifier; and the adaptive scores of \cite{marandon2022machine} obtained with random forest two-class classifier. While the advantage of considering adaptive scores is clear (smaller FDP and bound)
, it illustrates that the bound is correct simultaneously on $t$. 
Additional experiments are provided in Supplemental~\ref{sec:addexp}.

\section{Conclusion}

{The main takeaway from this work is the characterization of a ``universal'' joint distribution $P_{n,m}$ for conformal $p$-values
based on $n$ calibration points and  $m$ test points.
We derived as a consequence a non-asymptotic
concentration inequality for the $p$-value empirical distribution function; numerical procedures can also be of use for calibration in practice.
This entails uniform error bounds on the false coverage/false discovery proportion that hold with high probability, while standard results are only marginal or in expectation and not uniform in the decision. Since the results hold under the
score exchangeability assumption only, they are applicable to {\it adaptive} score procedures using the calibration and test sets for training.

\section*{Acknowledgements}

We would like to thank Anna Benhamou for constructive discussions and Romain P\'erier for  pointing out an error in a previous version of the manuscript.
The authors acknowledge the grants ANR-21-CE23-0035 (ASCAI), ANR-23-CE40-0018-01 (BACKUP)
and ANR-23-CE40-0018-01 (BISCOTTE) of the French National Research Agency ANR, and the Emergence project MARS of Sorbonne Universit\'e.

\bibliography{biblio}
\vfill

\pagebreak

\appendix

\section{Exact formulas for $P_{n,m}$}\label{sec:polya}

In this section, we provide new formulas for the distribution $P_{n,m}$ given by \eqref{equ:distribution}.
First let for $\bm{j}=(j_1,\dots,j_m)\in \range{n+1}^m$,   $\bm{M}(\bm{j}):=(M_k(\bm{j}))_{k\in \range{n+1}}$ where  $M_k(\bm{j}):=|\{i\in \range{m}\::\: j_i=k\}|$ is the number of coordinates of $\bm{j}$ equal to $k$, for $k\in \range{n+1}$, and $\bm{M}(\bm{j})!:=\prod_{k=1}^{n+1} (M_k(\bm{j})!)$.

\begin{theorem}\label{th:key}
$P_{n,m}$ corresponds to the distribution of the colors of $m$ successive draws in a standard P\'olya urn model with $n+1$ colors labeled as $\set[1]{\frac{\l}{n+1},\l\in \range{n+1}}$ (with an urn starting with $1$ ball of each color). That is, 
for $\bm{p}\sim P_{n,m}$ in \eqref{equ:distribution}, we have
\begin{itemize}
\item[(i)] Sequential distribution: for all $i\in \range{0,m-1}$, 
the distribution of $p_{i+1}$ conditionally on $p_1,\dots,p_{i}$ does not depend on $m$ and is given by 
\begin{equation}
\mathcal{D}(p_{i+1}\:|\: p_1,\dots,p_{i}) = \sum_{j=1}^{n+1} \frac{1+\sum_{k=1}^{i}\ind{p_k=j/(n+1)}}{n+1+i}\delta_{j/(n+1)}\label{equMarkov}.
\end{equation}
\item[(ii)] Joint distribution: for all vectors $\bm{j}\in \range{n+1}^m$,
\begin{equation}
\P\paren{
\bm{p}=\frac{\bm{j}}{n+1}
} = 
\bm{M}(\bm{j})!
 \frac{n!}{(n+m)!}\label{equjoint},
\end{equation}
\item[(iii)] Histogram distribution: the histogram of $\bm{p}$ is uniformly distributed on the set of histograms of $m$-sample into $n+1$ bins, that is, for all $\bm{m}=(m_1,\dots,m_{n+1})\in \range{0,m}^{n+1}$ with $ m_1+\dots+m_{n+1}=m$, 
\begin{equation}
\P\big(\bm{M}\paren[1]{(n+1)\bm{p}}=\bm{m}\big) = \binom{n+m}{m}^{-1}.
\label{equunifhisto}
\end{equation}
In particular, conditionally on $\bm{M}\paren[1]{(n+1)\bm{p}}$, the variable $\bm{p}$ is uniformly distributed on the set of possible trajectories, that is,
 for all vectors $\bm{j}\in \range{n+1}^m$, 
\begin{equation}
\P\paren{\bm{p}=\frac{\bm{j}}{n+1} \:\bigg|\: \bm{M}\paren[1]{(n+1)\bm{p}}=\bm{M}\paren[1]{\bm{j}}} = \frac{\bm{M}\paren[1]{\bm{j}}!}{m!}.
\label{equuniftraj}
\end{equation}
\end{itemize}
\end{theorem}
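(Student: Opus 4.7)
The plan is to start from the mixture representation~\eqref{equ:distribution}--\eqref{equ:generation} defining $P_{n,m}$ and to compute all three descriptions (i)--(iii) via standard Dirichlet-multinomial integration, then to recognize the resulting law as the P\'olya urn model.

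First, I will prove the joint formula~\eqref{equjoint}. The spacings $V_k := U_{(k)} - U_{(k-1)}$, $k \in \range{n+1}$, of $n$ i.i.d.\ $\mathrm{Unif}[0,1]$ variables classically follow a $\mathrm{Dirichlet}(1,\dots,1)$ distribution on the $(n+1)$-simplex. Conditionally on $U$, the draws $q_1,\dots,q_m$ are i.i.d.\ from $P^U$, so
$$\P\bigl(q_\ell = j_\ell/(n+1),\ \ell\in\range{m}\,\big|\,U\bigr) = \prod_{k=1}^{n+1} V_k^{M_k(\bm{j})}.$$
Integrating against the Dirichlet law via the identity $\E\bigl[\prod_k V_k^{a_k}\bigr] = \Gamma(n+1)\prod_k\Gamma(1+a_k)/\Gamma(n+1+\sum_k a_k)$ and using $\sum_k M_k(\bm{j}) = m$ yields $n!\,\bm{M}(\bm{j})!/(n+m)!$, which is~\eqref{equjoint}.

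Formula~\eqref{equunifhisto} then follows from a simple count: there are $m!/\bm{m}!$ sequences $\bm{j}$ with histogram $\bm{m}$, each having the probability $\bm{m}!\cdot n!/(n+m)!$ dictated by~\eqref{equjoint}; multiplying gives $m!\,n!/(n+m)! = \binom{n+m}{m}^{-1}$. Equation~\eqref{equuniftraj} is then immediate since trajectories sharing a common histogram are equiprobable by~\eqref{equjoint}. For~\eqref{equMarkov}, I will rerun the Dirichlet integration of the first step on a prefix of length $i$ rather than the full $m$-tuple, obtaining $\P(p_\ell = j_\ell/(n+1),\ \ell\le i) = \bm{M}(\bm{j}_{1:i})!\,n!/(n+i)!$, which notably depends only on $i$, not on $m$. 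Taking the ratio of this expression at lengths $i{+}1$ and $i$, and observing that appending $p_{i+1} = k/(n+1)$ only increments $M_k$ by one, yields the conditional probability $(1+M_k(\bm{j}_{1:i}))/(n+i+1)$, matching~\eqref{equMarkov}. This is precisely the transition rule of a standard P\'olya urn with $n+1$ colors starting from one ball of each color, which identifies $P_{n,m}$ with that urn model.

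The only non-elementary ingredient is the Dirichlet moment identity (equivalently, Beta--Gamma/Dirichlet-multinomial conjugacy); everything else is bookkeeping. A mild subtlety worth stating explicitly is that the length-$i$ marginal of $P_{n,m}$ coincides with $P_{n,i}$, which explains the $m$-independence of the sequential transition in~\eqref{equMarkov} and justifies viewing the three parts as three faces of the same P\'olya urn.
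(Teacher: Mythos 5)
Your proposal is correct, and it reaches (ii) by a genuinely different route than the paper. The paper never touches the Dirichlet structure: it works with the score representation behind $P_{n,m}$, notes that under \eqref{as:exchangeable} and \eqref{as:noties} the ordering permutation of the $n+m$ scores is uniform, identifies the event $\{\bm{p}=\bm{j}/(n+1)\}$ with an explicit constraint on that permutation, and counts to get $\bm{M}(\bm{j})!\,n!/(n+m)!$; parts (i) and (iii) are then deduced from (ii) exactly as you do (ratio of consecutive joint probabilities, and summation over the $m!/\bm{m}!$ trajectories with a fixed histogram). You instead take the mixture definition \eqref{equ:distribution}--\eqref{equ:generation} at face value: conditionally on $U$ the probability of a trajectory is $\prod_k V_k^{M_k(\bm{j})}$ with $(V_1,\dots,V_{n+1})$ the uniform spacings, which are $\mathrm{Dirichlet}(1,\dots,1)$, and the Dirichlet moment identity gives \eqref{equjoint} in one line. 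Both arguments are complete; your Dirichlet--multinomial computation is more self-contained with respect to the definition of $P_{n,m}$ (it does not need to go back through scores, exchangeability, or Proposition~\ref{prop:exch}) and makes the classical ``P\'olya urn as a mixture of i.i.d.\ draws over a Dirichlet prior'' picture explicit, whereas the paper's permutation counting is more elementary (no beta/Gamma integrals) and directly exhibits the combinatorial meaning of \eqref{equjoint} in terms of the ranks of calibration and test scores. Your remark that the length-$i$ marginal of $P_{n,m}$ is $P_{n,i}$ (immediate from conditional i.i.d.-ness given $U$) is indeed the point that legitimizes the ratio argument for \eqref{equMarkov} and its independence of $m$, and it is implicitly the same consistency the paper uses.
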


Theorem~\ref{th:key} is proved in Section~\ref{proofth:key} for completeness. 
Theorem~\ref{th:key} (i) gives the mechanism of the P\'olya urn model: Namely, the urn first contains one ball of each of the $n+1$ colors, so $p_1$ has a uniform distributed on $\set[1]{\frac{\l}{n+1},\l\in \range{n+1}}$; then, given $p_1=\ell/(n+1)$, we have drawn a ball of color $\ell$ and we put back this ball in the urn with another one of the same color $\ell$, so $p_2$ is generated according to the distribution on $\set[1]{\frac{\l}{n+1},\l\in \range{n+1}}$ with equal chance ($=1/(n+2)$) of generating $k/(n+1)$, $k\neq \l$, and twice more chance ($=2/(n+2)$) of generating $\l/(n+1)$. Recursively, given $p_1,\dots,p_{i}$, the random variable $p_{i+1}$ is generated in $\set[1]{\frac{\l}{n+1},\l\in \range{n+1}}$ according to the sizes of the histogram of the sample $((n+1)p_1,\dots, (n+1)p_{i})$, see Figure~\ref{fig:dyn}.

Theorem~\ref{th:key} (ii) provides the exact dependency structure between the $p$-values: for instance, $\bm{M}(\bm{j})!=1$ when the coordinates of $\bm{j}=(j_1,\dots,j_m)$ are all distinct, while $\bm{M}(\bm{j})!=m!$ when the coordinates of $\bm{j}=(j_1,\dots,j_m)$ are the same. This means that the distribution slightly favors the $\bm{j}$ with repeated entries. This shows that the conformal $p$-values are not i.i.d. but have a positive structure of dependency. This is in accordance with the specific positive dependence property (called PRDS) already shown by \cite{bates2023testing,marandon2022machine}. 

Theorem~\ref{th:key} (iii) shows an interesting non-concentration behavior of $P_{n,m}$ when $n$ is kept small: if the $p_i$'s were i.i.d. uniform on $\set[1]{\frac{\l}{n+1},\l\in \range{n+1}}$ then the histogram $\bm{M}((n+1)\bm{p})$ would follow a multinomial distribution  and the histogram would concentrate around the uniform histogram as $m$ tends to infinity. Rather, the $p_i$'s are here  only exchangeable, not i.i.d., and the histogram does not concentrate when $m$ tends to infinity while $n$ is small. As a case in point, for $n=1$, $M_1((n+1)\bm{p})$ is uniform on $\range{m}$, whatever $m$ is, see \eqref{equunifhisto}. 
Nevertheless, we will show in the next section that a concentration occurs when {\it both $m$  and $n$} tend to infinity. 

 \begin{remark}
  Note that $P^U$ in \eqref{equPU} is the conditional distribution that one would get by applying the de Finetti theorem to the infinite exchangeable sequence $(p_i)_{i\geq 1}$ with  $(p_1,\dots,p_m)\sim P_{n,m}$ for all $m$.
\end{remark}

\begin{figure}[h!]
\begin{center}
\includegraphics[scale=0.35]{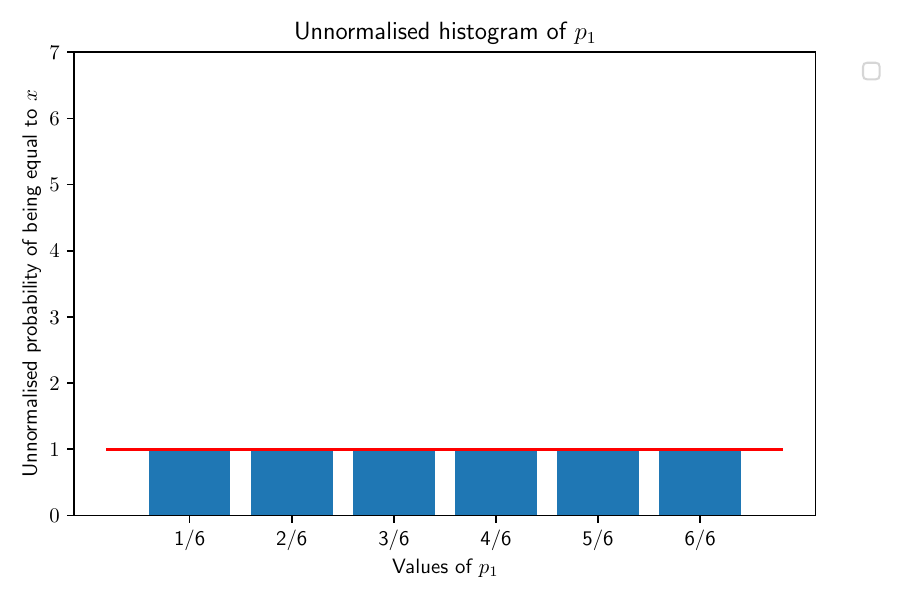}
\includegraphics[scale=0.35]{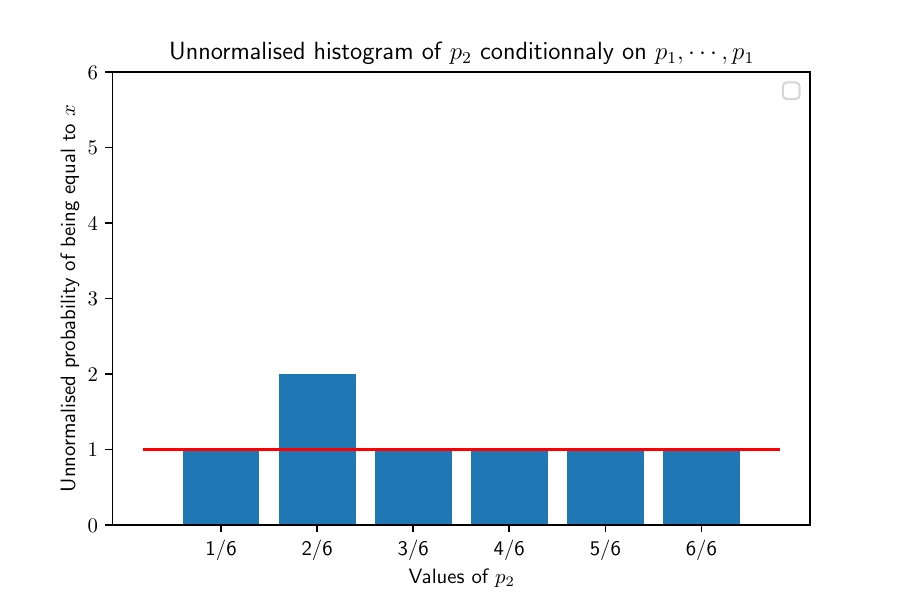}
\includegraphics[scale=0.35]{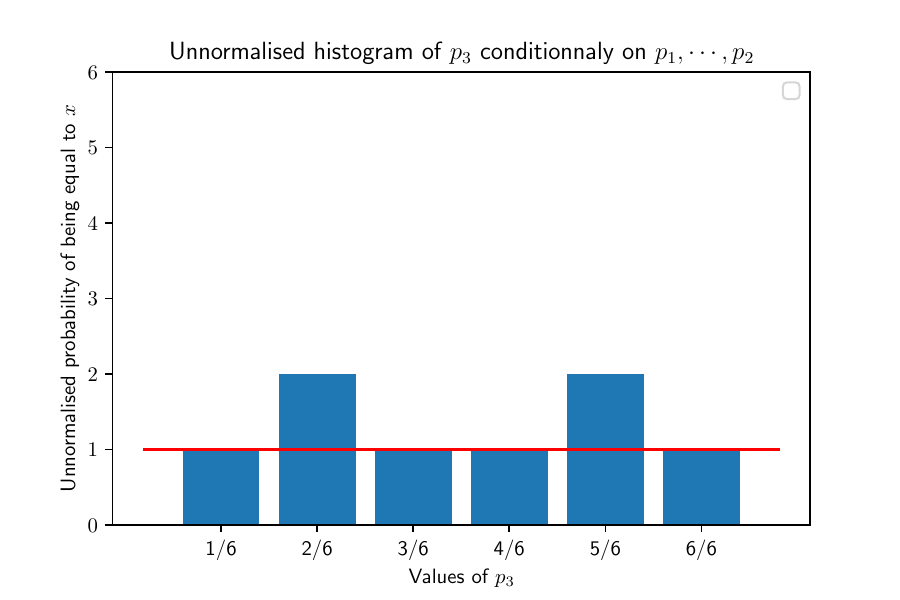}
\includegraphics[scale=0.35]{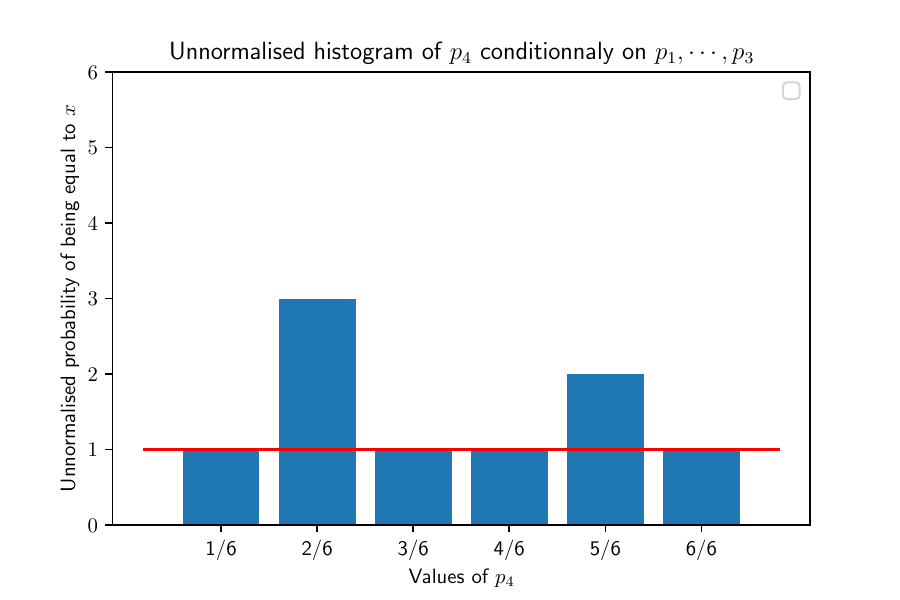}
\includegraphics[scale=0.35]{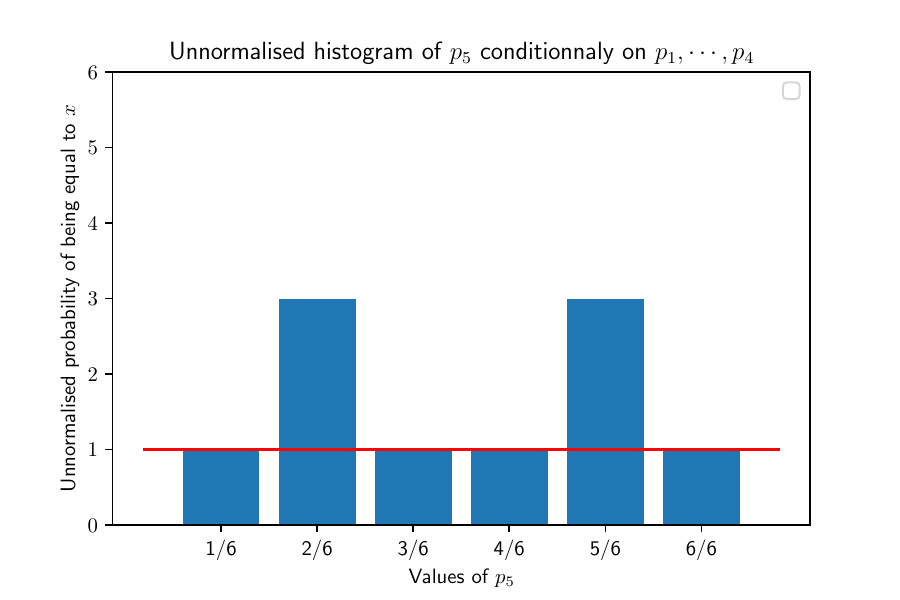}
\includegraphics[scale=0.35]{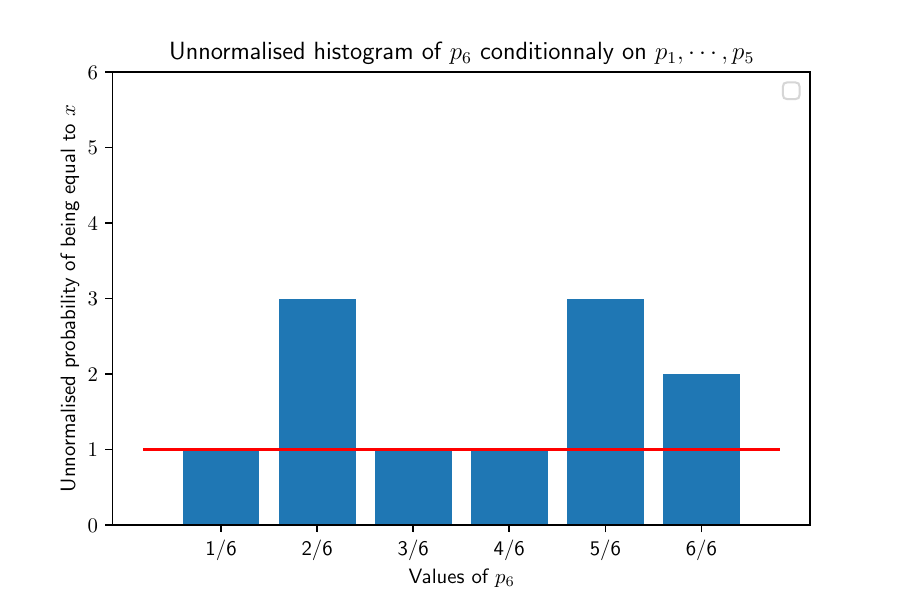}
\end{center}
\caption{Illustration of the sequential realization of $P_{n,m}$ as proved in Theorem~\ref{th:key} (ii) for $n=5$ and $m=6$.}\label{fig:dyn}
\end{figure}

\paragraph{Relation to \cite{f2023universal,huang2023uncertainty}.}
As a consequence of \eqref{equMarkov}, given any {$I\subset \set{\frac{\ell}{n+1}, \ell\in \range{n+1}}$}, we have
\begin{align*}
  \P(p_{i+1}\in {I}\:|\: p_1,\dots,p_{i}) {= \frac{|I|+N_i(I)} 
                                                {n+1+i}
  =\P\left(p_{i+1}\in {I} \:\Big|N_i(I) 
    \right),}
\end{align*}
{where $N_i(I) = |\set{ k \in \range{i}: p_k \in I)}|$.
In words, it means that the P\'olya urn model continues to hold if we group (or ``re-paint'') the
initial $(n+1)$ colors into only two colors, determined by whether the original color label belongs to $I$ or not. }

In particular, we recover the P\'olya urn model put forward by \cite{f2023universal}: letting $Z_i=\ind{p_i>\alpha}$, we have that 
for all $i\in \range{0,m-1}$, 
the distribution of $Z_{i+1}$ conditionally on $Z_1,\dots,Z_{i}$ does not depend on $m$ and is given by 
\begin{equation}
\mathcal{D}(Z_{i+1}\:|\: Z_1,\dots,Z_{i}) =  \frac{\lfloor \alpha(n+1)\rfloor+\sum_{k=1}^{i}\ind{Z_k=j}}{n+1+i}\delta_{0} +\frac{\lceil (1-\alpha)(n+1)\rceil+\sum_{k=1}^{i}\ind{Z_k=j}}{n+1+i}\delta_{1} \label{equMarkov2}.
\end{equation}
Hence, the distribution of $(Z_1,\dots,Z_m)$ corresponds to the distribution of the colors of $m$ successive draws in a standard P\'olya urn model with $2$ colors labeled as $\{0,1\}$ (with an urn starting with $\lfloor \alpha(n+1)\rfloor$ balls $0$ and $\lceil (1-\alpha)(n+1)\rceil$ balls $1$). 

{In particular, we recover} Theorem~1 of \cite{f2023universal} and Theorem~3 in \cite{huang2023uncertainty}.
\begin{corollary}[Theorem~1 in \cite{f2023universal} and Theorem~3 in \cite{huang2023uncertainty}]
In the setting of Theorem~\ref{th:key}, we have for all $\alpha\in (0,1)$ and $k\in \range{m}$, by denoting $k_0=\lceil \alpha(n+1)\rceil$,
\begin{equation}\label{equMarques}
\P\paren{\wh{F}_m(\alpha) = \frac{k}{m}}={m \choose k}  \frac{(n-k_0+1)\dots (n-k_0 + m-k)\times k_0 \dots(k_0+k-1)}{(n+1)\dots (n+m) }.
\end{equation}
\end{corollary}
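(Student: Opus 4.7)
The plan is to apply Theorem~\ref{th:key} together with the ``re-painting'' reduction to a two-color urn that is described just above the statement. The cleanest route is via part~(iii) of Theorem~\ref{th:key}, which asserts that the histogram $\bm{M}((n+1)\bm{p})$ is uniformly distributed on the $\binom{n+m}{m}$ compositions of $m$ into $n+1$ non-negative parts.

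First I would translate the event $\set{\wh{F}_m(\alpha) = k/m}$ into a condition on the histogram. Letting $I_\alpha = \set{\ell \in \range{n+1} : \ell/(n+1) \leq \alpha}$ and $a = |I_\alpha|$, the event reads $\sum_{\ell \in I_\alpha} M_\ell((n+1)\bm{p}) = k$. A careful check of the floor/ceiling conventions (distinguishing whether $\alpha(n+1)$ is an integer) identifies $a$ with $k_0 = \lceil \alpha(n+1)\rceil$ as used in the statement.

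Next I would count histograms by stars and bars: the number of compositions of $k$ into $a$ non-negative parts is $\binom{k+a-1}{a-1}$, and of $m-k$ into $n+1-a$ non-negative parts is $\binom{m-k+n-a}{n-a}$. By Theorem~\ref{th:key}(iii) each histogram carries probability $\binom{n+m}{m}^{-1}$, so
\begin{equation*}
\P\paren[1]{\wh{F}_m(\alpha)=k/m} \,=\, \binom{n+m}{m}^{-1}\binom{k+k_0-1}{k_0-1}\binom{m-k+n-k_0}{n-k_0}.
\end{equation*}
Rewriting each binomial with its rising factorial, and using $n!/(n+m)! = 1/((n+1)\cdots(n+m))$, yields directly the claimed formula \eqref{equMarques} after cancellation of $k!$ and $(m-k)!$ with the numerators of the two binomial coefficients to recover the $\binom{m}{k}$ prefactor.

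As a consistency check, one may alternatively invoke the two-color P\'olya urn representation~\eqref{equMarkov2}: the sequence $(\ind{p_i \leq \alpha})_{i\in \range{m}}$ arises from a standard P\'olya urn with initial composition $(k_0,\,n+1-k_0)$, and the classical beta-binomial-type formula for the number of draws of a given color in such an urn reproduces the same expression. The main obstacle is not conceptual but rather the bookkeeping around $\lfloor\cdot\rfloor$ versus $\lceil\cdot\rceil$; once $a=k_0$ is established, the rest is routine algebra.
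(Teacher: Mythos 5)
Your proposal is correct in substance and takes a genuinely different route from the paper's. The paper proves \eqref{equMarques} via the representation \eqref{equ:generation}: conditionally on the pseudo-scores $U$, the count $m\wh{F}_m(\alpha)$ is binomial with success probability $F^U(\alpha)$, an order statistic of $U$, and the formula follows by integrating against its $\mathrm{Beta}(k_0,n+1-k_0)$ density (a beta--binomial computation). You instead use Theorem~\ref{th:key}(iii): the histogram of $(n+1)\bm{p}$ is uniform over the $\binom{n+m}{m}$ compositions of $m$ into $n+1$ nonnegative parts, the event $\set[1]{\wh{F}_m(\alpha)=k/m}$ depends only on the total mass of the first $a$ bins, and a stars-and-bars count gives $\binom{n+m}{m}^{-1}\binom{k+a-1}{a-1}\binom{m-k+n-a}{n-a}$; your reduction of this expression to the right-hand side of \eqref{equMarques} (with $k_0$ replaced by $a$) is correct. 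The two arguments are equally short; yours is purely combinatorial and needs no beta integral, while the paper's makes the beta--binomial structure explicit and extends immediately to joint statements about $(U,\wh{F}_m)$.

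One concrete caveat: the bookkeeping you defer does not resolve the way you assert. Since the $p$-values take the values $\ell/(n+1)$, the number of bins at or below $\alpha$ is $a=\lfloor\alpha(n+1)\rfloor$, not $\lceil\alpha(n+1)\rceil$; the two coincide only when $\alpha(n+1)$ is an integer and otherwise differ by one (for $m=k=1$ the probability is $\P(p_1\leq\alpha)=\lfloor\alpha(n+1)\rfloor/(n+1)$). So your claimed identification $a=k_0$ is off by one for generic $\alpha$. This discrepancy is in fact shared by the paper itself, whose proof uses $U_{(k_0)}$ with $k_0=\lceil\alpha(n+1)\rceil$ even though \eqref{equFU} gives $F^U(\alpha)=U_{(\lfloor(n+1)\alpha\rfloor)}$, and whose two-colour urn \eqref{equMarkov2} starts with $\lfloor\alpha(n+1)\rfloor$ balls of the ``miscoverage'' colour; so it does not undermine your method, but you should state $a=\lfloor\alpha(n+1)\rfloor$ explicitly rather than claim it equals $\lceil\alpha(n+1)\rceil$. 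Your closing consistency check via the two-colour P\'olya urn obtained by re-painting is fine and is essentially the beta--binomial identity again, so it adds robustness but no new content.
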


\begin{proof}
By Proposition~\ref{prop:exch}, \eqref{equ:generation} and the notation of \eqref{equFU}, we have 
\begin{align*}
\P\paren{\wh{F}_m(\alpha) = \frac{k}{m} }&={m \choose k}\E\left[ (U_{(k_0)})^k (1-U_{(k_0)})^{m-k}\right]\\
&={m \choose k} \frac{n!}{(k_0-1)! (n-k_0)!} \int_0^1  u^{k+k_0-1} (1-u)^{m-k+n-k_0} du\\
&={m \choose k} \frac{n!}{(k_0-1)! (n-k_0)!} \frac{(k+k_0-1)!(m+n-k-k_0)!}{(m+n)!},
\end{align*}
by using that $U_{(k_0)}$ follows a beta distribution with parameter $(k_0,n+1-k_0)$  and by using the beta distribution with parameter $(k+k_0,m+n+1-k-k_0)$. This shows the result.
\end{proof}

\section{Numerical bounds and templates}\label{sec:template}

The bound proposed in Theorem~\ref{sec:DKW} are explicit and elegant, but can be conservative in some cases and we develop here the numerical approach mentioned  in Remark~\ref{numispossible}.

We rely on showing \eqref{exactcontrol}, which immediately implies a confidence envelope on $\wh{F}_m$ because 
\begin{align*}
\set{\forall k\in \mathcal{K}\::\: \wh{F}_m(t_k)\leq \frac{k}{m} }&=\set{\forall k\in \mathcal{K}\::\: \wh{F}_m(t_k)< \frac{k+1}{m} }\\
&=\{\forall k\in \mathcal{K}\::\: p_{(k+1)} >t_k \}.
\end{align*}
To establish \eqref{exactcontrol}, we use the notion of template introduced by \cite{BNR2020}, see also \cite{li2022simultaneous}.
A template is a one-parameter family $t_k(\lambda)$, $\lambda\in [0,1]$, $k\in \mathcal{K}\subset \range{m}$, such that $t_k(0) = 0$ and $t_k(\cdot)$ is continuous increasing on $[0, 1]$.   
From above, we have for all $\lambda$,
\begin{align*}
\{\forall k\in \mathcal{K}\::\: \wh{F}_m(t_k(\lambda))\leq k/m \}&=\{\forall k\in \mathcal{K}\::\: p_{(k+1)} >t_k(\lambda) \}\\
&=\Big\{\min_{k\in \mathcal{K}}\{t_k^{-1}(p_{(k+1)})\}>\lambda\Big\}.
\end{align*}
Hence, let us consider 
\begin{equation}\label{equlambdacal}
\lambda(\delta,n,m) = \max\set{\lambda \in \Lambda\::\: \P_{\bm{p}\sim P_{n,m}}\paren[2]{\min_{k\in \mathcal{K}}\{t_k^{-1}(p_{(k)})\}>\lambda}\geq 1-\delta},
\end{equation}
where $\Lambda$ is the finite set $\{t_k^{-1}(\ell/(n+1)),  k\in \mathcal{K}, \ell\in \range{n+1}\}$. 
Then by Proposition~\ref{prop:exch} we have the following result. 

\begin{theorem}\label{thlambdacal} Let us consider the process $\wh{F}_m$ defined by \eqref{equ-ecdfpvalues}, the distribution $P_{n,m}$ given by \eqref{equ:distribution}, a template $t_k(\lambda)$, $\lambda\in [0,1]$, $k\in \mathcal{K}$ as above, and assume \eqref{as:exchangeable} and \eqref{as:noties}. Then we have for all $\delta\in (0,1)$, $n,m\geq 1$,
\begin{align}
\P\paren{\forall k\in \mathcal{K}\::\: \wh{F}_m\paren[2]{t_k\paren[1]{\lambda(\delta,n,m)}}\leq \frac{k}{m}}\geq 1-\delta,
\label{ineqlambdacal}
\end{align}
for $\lambda(\delta,n,m)$ given by \eqref{equlambdacal}.
\end{theorem}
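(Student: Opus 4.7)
\textbf{Proof plan for Theorem~\ref{thlambdacal}.} The plan is to chain together (a) the reformulation of the target event in terms of ordered $p$-values and the inverse template, (b) the universality of the joint distribution $P_{n,m}$ proved in Proposition~\ref{prop:exch}, and (c) the discreteness of the $p$-values under \eqref{as:noties}, which makes the maximum in \eqref{equlambdacal} well defined and attained over the finite grid $\Lambda$.

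First I would invoke Proposition~\ref{prop:exch}: under \eqref{as:exchangeable} and \eqref{as:noties}, the conformal $p$-values $(p_1,\ldots,p_m)$ have joint distribution exactly $P_{n,m}$, independently of the underlying score distribution. This lets us carry out the probabilistic analysis entirely inside the universal model $P_{n,m}$, matching the distribution used to define $\lambda(\delta,n,m)$ in \eqref{equlambdacal}. Next, I would copy the equivalence derived just before the theorem, namely
\begin{equation*}
\{\forall k\in\mathcal{K}\,:\,\wh{F}_m(t_k(\lambda))\leq k/m\}
=\{\forall k\in\mathcal{K}\,:\,p_{(k+1)}>t_k(\lambda)\}
=\Big\{\min_{k\in\mathcal{K}}t_k^{-1}(p_{(k+1)})>\lambda\Big\},
\end{equation*}
which uses that $t_k$ is continuous strictly increasing with $t_k(0)=0$, hence $t_k^{-1}$ is well defined on $[0,1]$ and monotone. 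Composing with Proposition~\ref{prop:exch}, the probability appearing in \eqref{ineqlambdacal} equals $\P_{\bm{p}\sim P_{n,m}}(\min_{k\in\mathcal{K}}t_k^{-1}(p_{(k+1)})>\lambda)$ evaluated at $\lambda=\lambda(\delta,n,m)$.

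The remaining step is to justify that the set appearing in \eqref{equlambdacal} contains $\lambda(\delta,n,m)$, i.e.\ that the max is attained. Here the key observation is that, under \eqref{as:exchangeable}+\eqref{as:noties}, each $p_i$ takes values in $\{\ell/(n+1):\ell\in\range{n+1}\}$, so the order statistic $p_{(k+1)}$ takes values in that same finite set, and therefore the random variable $\min_{k\in\mathcal{K}}t_k^{-1}(p_{(k+1)})$ takes values in the finite grid $\Lambda=\{t_k^{-1}(\ell/(n+1)):k\in\mathcal{K},\,\ell\in\range{n+1}\}$ used in the definition of $\lambda(\delta,n,m)$. Consequently, the map $\lambda\mapsto \P_{\bm{p}\sim P_{n,m}}(\min_{k\in\mathcal{K}}t_k^{-1}(p_{(k+1)})>\lambda)$ is piecewise constant and right-continuous on $\Lambda$, so that the supremum defining $\lambda(\delta,n,m)$ is a maximum (attained at some element of $\Lambda$), and the set is non-empty since $\min_{k}t_k^{-1}(1/(n+1))>0$ gives at least one valid value of $\lambda$. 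Plugging this $\lambda=\lambda(\delta,n,m)$ into the equivalence yields \eqref{ineqlambdacal}.

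The only genuinely subtle point I anticipate is the careful handling of the discreteness: one must verify that $\Lambda$ is finite, that the probability as a function of $\lambda$ jumps only at points of $\Lambda$, and that there is at least one feasible $\lambda$ (the smallest element of $\Lambda$ automatically satisfies the constraint since the event becomes sure). Everything else is a bookkeeping exercise combining the algebraic equivalence with the universal law $P_{n,m}$ and contributes no further content beyond Proposition~\ref{prop:exch}.
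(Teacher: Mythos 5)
Your proposal follows exactly the paper's own route: rewrite the event $\{\forall k\in \mathcal{K}: \wh{F}_m(t_k(\lambda))\leq k/m\}$ as $\{\min_{k\in\mathcal{K}} t_k^{-1}(p_{(k+1)})>\lambda\}$, invoke Proposition~\ref{prop:exch} to work under the universal law $P_{n,m}$ (the same law used to calibrate $\lambda(\delta,n,m)$), and conclude from the definition \eqref{equlambdacal}; this is precisely the argument the paper sketches just before the theorem, and it is correct in substance.

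Two small remarks. First, \eqref{equlambdacal} as printed involves $p_{(k)}$ while the event equivalence involves $p_{(k+1)}$; you identify the two without comment. Either one reads this as a typo, or one adds the one-line observation that $p_{(k)}\leq p_{(k+1)}$, so calibrating with $p_{(k)}$ is only conservative and \eqref{ineqlambdacal} still follows. Second, your parenthetical justification of non-emptiness of the constraint set is incorrect: at the smallest grid point $\lambda=\min_{k}t_k^{-1}(1/(n+1))$ the event $\{\min_k t_k^{-1}(p_{(k+1)})>\lambda\}$ is \emph{not} almost sure, since the $p$-values take the value $1/(n+1)$ with positive probability and equality can then occur; for extremely small $\delta$ the set in \eqref{equlambdacal} can be empty. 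This does not affect the theorem, which takes $\lambda(\delta,n,m)$ as given (the paper glosses over the same point), and your discussion of attainment and right-continuity is superfluous anyway because $\Lambda$ is finite by construction; but the ``event becomes sure'' claim should be dropped.
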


Here are two template choices:
\begin{itemize}
\item The linear template $t_k(\lambda)=k \lambda /m$, $\mathcal{K}=\range{m}$, which leads to the inequality
$$
\P\paren{\exists t \in (0,1) \::\: \wh{F}_m( t)> \frac{\lceil tm/\lambda(\delta,n,m)\rceil}{ m}}\leq \delta,
$$
which recovers the Simes inequality \eqref{Simes} with an adjusted scaling parameter. 
\item The ``beta template'' \cite{BNR2020}, for which $t_k(\lambda)$ is the $\lambda$-quantile of the distribution $\mbox{Beta}(k,m+1-k)$ and thus
$\Lambda=\{F_{\tiny{\mbox{Beta}(k,m+1-k)}}(\ell/(n+1)),  k\in \mathcal{K}, \ell\in \range{n+1}\}$. For instance, it can be used with $\mathcal{K}=\{1+k\lceil \log(m)\rceil ,k\in \range{K}\}$.
\end{itemize}

\section{Proofs}

\subsection{Proof of Proposition~\ref{prop:iid}}\label{proofprop:iid}

Assumption \eqref{as:noties} implies that marginal score distribution is atomless, so that $F$ is
continuous and $1-F(S_i)$ has $\mathrm{Unif}[0,1]$ distribution.
Therefore, $(U_1,\dots,U_{n+m})=(1-F(S_1),\dots,1- F(S_{n+m}))$ are i.i.d. $\sim \mathrm{Unif}[0,1]$. Recall
\[
  p_i=(n+1)^{-1}\paren[3]{1+\sum_{j=1}^n \ind{S_j\geq S_{n+i}}},\:\: i\in \range{m},
\]
since $p_i$ is a function of $S_{n+i}$ and $\dcal$ only, it follows that conditionally on $\dcal$, the variables $p_1,\dots,p_m$ are independent
(and identically distributed).

Since $F$ is continuous, it holds $F^\dagger(F(S_i))=S_i$ almost surely, where $F^\dagger$ is the generalized inverse of $F$. Therefore
$\ind{S_j \geq S_{n+i}} = \ind{U_j \leq U_{n+i}}$ almost surely. Hence, $p_1$ is distributed as 
$$
(n+1)^{-1}\paren[3]{1+\sum_{j=1}^n \ind{U_j\leq U_{n+1}}} =(n+1)^{-1}\paren[3]{1+\sum_{j=1}^n \ind{U_{(j)}\leq U_{n+1}}},
$$
where $U_{(1)}\leq \dots \leq U_{(n)}$ denotes the order statistics of $(U_1,\dots,U_n)$. Therefore, we have for all $x\in [0,1] $,
 \begin{align*}
\P(p_1\leq x\:|\: \dcal)&=\P\paren[3]{1+\sum_{j=1}^n \ind{U_{(j)}\leq U_{n+1}} \leq x(n+1) \Big| \dcal }\\
&= 
\P\paren[3]{1+\sum_{j=1}^n \ind{U_{(j)}\leq U_{n+1}} \leq \lfloor x(n+1) \rfloor \Big| \dcal }\\
&= \P(U_{n+1}<U_{(\lfloor x(n+1) \rfloor)} | \dcal ) = U_{(\lfloor x(n+1) \rfloor)},
 \end{align*}
which finishes the proof.

\subsection{Proof of Proposition~\ref{prop:exch}}\label{proofprop:exch}

If there are no tied scores, which by assumption \eqref{as:noties} happens with probability 1, the ranks $R_i$ of the ordered scores are
well-defined and
the vector $(p_1,\dots,p_{m})$ is only a function of the rank vector $(R_1,\dots,R_{n+m})$.
Namely, $R_i\leq  R_j$ if and only if $S_i\leq S_j$, and the conformal $p$-values \eqref{equemppvalues} can be written as
\begin{equation*}
p_i=(n+1)^{-1}\paren[3]{1+\sum_{j=1}^n \ind{R_j\geq  R_{n+i}}},\:\: i\in \range{m}.
\end{equation*} 
Now, by \eqref{as:exchangeable}, the vector $(R_1,\dots,R_{n+m})$ is uniformly distributed on the permutations of $\range{n+m}$.
Any score distribution satisfying~\eqref{as:noties} and~\eqref{as:exchangeable} therefore gives rise to the same rank distribution,
and thus the same joint $p$-value distribution. This joint distribution has been identified as~\eqref{equ:distribution}-\eqref{equ:generation}
from the result of Proposition~\ref{prop:iid} in the particular case of i.i.d. scores. (Thus the i.i.d. assumption turns out to be unnecessary for what
concerns the joint, unconditional distribution of the $p$-values, but provides a convenient representation.)

\subsection{Proof of Theorem~\ref{th:key}}\label{proofth:key}

\paragraph{Proof of (ii)} 
By \eqref{as:exchangeable},\eqref{as:noties} the permutation that orders the scores $(S_1,\dots,S_{n+m})$ that is $\sigma$ such that 
$$
S_{()}=(S_{\sigma(1)}> \dots >S_{\sigma(n+m)}),
$$
 is uniformly distributed in the set of permutations of $\range{n+m}$. 
 In addition, $\sigma$ is independent of the order statistics $S_{()}$ and we seek for identifying the distribution of $(p_1,\dots,p_m)$ conditionally on $S_{()}$. Next, using again \eqref{as:exchangeable}, we can assume without loss of generality that $j_1\leq \dots\leq j_m$ when computing the probability in \eqref{equjoint}.
 Now, due to the definition \eqref{equemppvalues}, the event $\{(p_1,\dots,p_m)=(j_1/(n+1),\dots,j_m/(n+1)\}$ corresponds to a specific event on $\sigma$. Namely, by denoting $(a_1,\dots,a_\l)$ the vector of unique values of the set $\{j_1,\dots,j_m\}$ with $1\leq a_1<\dots<a_\l\leq n$, and $M_k=\sum_{i=1}^m \ind{j_i=a_k}$, $1\leq k\leq \ell$, the corresponding multiplicities, the above event corresponds to the situation
 \begin{align*}
&\underbrace{S_{\sigma(1)}>\cdots>S_{\sigma( a_1-1)}}_{ a_1-1 \text{ null scores}}>\underbrace{S_{\sigma(a_1)}>\cdots>S_{\sigma(a_1+M_1-1)}}_{M_1\text{ test scores in $\{S_{n+1},\dots,S_{n+M_1}\}$}}>\\
&\underbrace{S_{\sigma( a_1+M_1)}>\cdots>S_{\sigma( a_2+M_1-1)}}_{\text{$ a_2- a_1$ null scores}}>\underbrace{S_{\sigma(a_2+M_1)}>\cdots>S_{\sigma_1(a_2+M_1+M_2-1)}}_{M_2\text{ test scores in $\{S_{n+M_1+1},\dots,S_{n+M_1+M_2}\}$}}>\cdots\\
&\underbrace{S_{\sigma( a_{\ell-1}+M_1+\cdots+M_{\ell-1})}>\cdots>S_{\sigma( a_\ell+M_1+\cdots+M_{\ell-1}-1)}}_{\text{$ a_\ell- a_{\ell-1}$  null scores}}>\underbrace{S_{\sigma(a_\ell+M_1+\cdots+M_{\ell-1})}>\cdots>S_{\sigma(a_\ell+m-1)}}_{M_\ell\text{ test scores in $\{S_{ n+M_1+\cdots+M_{\ell-1}+1},\dots,S_{n+m}\}$}}>\\
&\underbrace{S_{( a_\ell+m)}>\cdots>S_{(n+m)}}_{n- a_\ell+1\text{ null scores}}.
\end{align*}
This event can be formally described as follows:
 \begin{align*}
  \Big\{\forall k \in \range{\ell}\::\: &\sigma\paren[1]{\left\{a_\ell+M_1+\cdots+M_{k-1},\dots,a_\ell+M_1+\cdots+M_{k}-1\right\}} \\
& = \left\{n+M_1+\cdots+M_{k-1}+1,\dots,n+M_1+\cdots+M_{k}\right\}\Big\}.
 \end{align*}
Since $\sigma$ is uniformly distributed in the set of permutations of $\range{n+m}$, the probability of this event (conditionally on $S_{()}$) is equal to
$
n! \big(\prod_{k=1}^\ell (M_k!)\big)/(n+m)!,
$
which yields  \eqref{equjoint}.

\paragraph{Proof of (i)} 
By using \eqref{equjoint} of (ii), we have 
$$
\P(p_{i+1}=j_{i+1}/(n+1)\:|\: (p_1,\dots,p_i)=(j_1/(n+1),\dots,j_i/(n+1))) = \frac{ \bm{M}(j_1,\dots,j_{i+1})! \frac{n!}{(n+i+1)!}}{ \bm{M}(j_1,\dots,j_i)! \frac{n!}{(n+i)!}}.
$$
Now, we have 
 \begin{align*}
\bm{M}(j_1,\dots,j_{i+1})!&=\prod_{j=1}^{n+1} \left[\left(\sum_{k=1}^{i+1} \ind{j_k=j}\right)!\right]=\prod_{j=1}^{n+1} \left[\left(\sum_{k=1}^{i} \ind{j_k=j}+\ind{j_{i+1}=j}\right)!\right]\\
&=\prod_{j=1}^{n+1} \left(\sum_{k=1}^{i} \ind{j_k=j}\right)! \left[1+\ind{j_{i+1}=j}\sum_{k=1}^{i} \ind{j_k=j}\right]\\
&=\bm{M}(j_1,\dots,j_{i})!\left[1+\ind{j_{i+1}=j}\sum_{k=1}^{i} \ind{j_k=j}\right].
 \end{align*}
This proves \eqref{equMarkov}.

\paragraph{Proof of (iii)} 
For all $\bm{m}=(m_1,\dots,m_{n+1})\in \range{0,m}^{n+1}$ with $ m_1+\dots+m_{n+1}=m$,  we have
\begin{align*}
\P\big(\bm{M}((n+1)\bm{p})=\bm{m}\big) &=\sum_{\bm{j}\in \range{n+1}^m} \ind{\bm{M}(\bm{j})=\bm{m}} \P\big((n+1)\bm{p}=\bm{j}\big)\\
&=\bm{m}! \frac{n!}{(n+m)!}\sum_{\bm{j}\in \range{n+1}^m} \ind{\bm{M}(\bm{j})=\bm{m}}  \\
&=\bm{m}! \frac{n!}{(n+m)!} \frac{m!}{\bm{m}!}= \frac{n!m!}{(n+m)!}, 
\end{align*}
where we have used (ii) and the multinomial coefficient.

\subsection{Proof of Theorem~\ref{thDKW}}\label{sec:thDKWproof}

First observe that the LHS of \eqref{boundDKWup} is $0$ if $\lambda\geq 1$ so that we can assume $\lambda<1$.

Let us  prove \eqref{boundDKWup} with the more complex bound
\begin{align}
B^{\mbox{\tiny DKWfull}}(\lambda,n,m)&:=
\frac{n}{n+m} e^{-2m \lambda^2} + \frac{m}{n+m} e^{-2n\lambda^2} +  C_{\lambda,n,m}\frac{2\sqrt{2\pi}\lambda nm}{(n+m)^{3/2}}  e^{-\frac{2nm}{n+m}\lambda^2},
\label{BDKWfull}
\end{align}
where $C_{\lambda,n,m}=\P(\mathcal{N}(\lambda \mu,\sigma^2)\in [0,\lambda])<1$, for $\sigma^2=(4(n+m))^{-1}$ and $\mu=n(n+m)^{-1}$. 
Let us comment the expression \eqref{BDKWfull} of $B^{\mbox{\tiny DKWfull}}(\lambda,n,m)$. As we can see, the role of  $n$ and $m$ are symmetric (except in $C_{\lambda,n,m}$, that we can always further upper-bound by $1$), and the two first terms are a convex combination of the usual DKW bounds for $m$ and $n$ i.i.d. variables, respectively. The third term is a ``crossed'' term between $n$ and $m$, which becomes negligible if $n\gg m$ or $n\ll m$ but should be taken into account otherwise.

Below, we establish 
\begin{align}
\P\paren[3]{\sup_{t\in [0,1]}\paren[1]{\wh{F}_m(t) - I_n(t)} > \lambda}&\leq B^{\mbox{\tiny DKWfull}}(\lambda,n,m);\label{boundDKWfullup}\\
\P\paren[3]{\sup_{t\in [0,1]}\paren[1]{-\wh{F}_m(t) + I_n(t)} > \lambda}&\leq B^{\mbox{\tiny DKWfull}}(\lambda,n,m);\label{boundDKWfulldown}\\
\P\paren{\norm[1]{\wh{F}_m - I_n}_\infty > \lambda}&\leq  2  B^{\mbox{\tiny DKW}}(\lambda,n,m).\label{boundDKWfulltwosided}
\end{align}
The result will be proved from \eqref{boundDKWfullup} because $B^{\mbox{\tiny DKWfull}}(\lambda,n,m)\leq B^{\mbox{\tiny DKW}}(\lambda,n,m)$ since $n\vee m\geq nm/(n+m)$ and $C_{\lambda,n,m}\leq 1$.

The proof relies on Proposition~\ref{prop:exch} and the representation~\eqref{equ:generation}. Let $U=(U_1,\dots,U_n)$ i.i.d. $\sim U(0,1)$,  and denote $F^U(x)=U_{(\lfloor (n+1)x\rfloor)}$, $x\in [0,1]$. Conditionally on $U$, draw $(q_i(U),i\in \range{m})$  i.i.d. of common c.d.f. $F^U$ and let
$$
\wh{G}_m(t)=m^{-1}\sum_{i=1}^m \ind{q_i(U)\leq t}, \:\:\: t\in [0,1],
$$
the empirical c.d.f. of $(q_i(U),i\in \range{m})$. By Proposition~\ref{prop:exch}, we have that
{$\wh{F}_m$ has the same distribution as $\wh{G}_m$ (unconditionally on $U$)}, so that for any fixed $n,m\geq 1$ and $\lambda>0$,
\begin{align}\label{equinterm}
\P\paren[3]{\sup_{t\in [0,1]}\paren[1]{\wh{F}_m(t) - I_n(t)} > \lambda}&=\e{\P\paren[3]{ \sup_{t\in [0,1]}\paren[1]{\wh{G}_m(t) - I_n(t)} > \lambda\:\Big|\: U}}.
\end{align}

We now prove the bound \eqref{boundDKWfullup} (the proof for \eqref{boundDKWfulldown} is analogous).
Denote $Z=\sup_{t\in [0,1]}\paren[1]{F^U(t)-I_n(t)}\in [0,1]$. 
We write by \eqref{equinterm} and the triangle inequality
\begin{align*}
\P\paren{\sup_{t\in [0,1]}\paren[1]{\wh{F}_m(t) - I_n(t)} > \lambda} &\leq \e{ \P\Big(\sup_{t\in [0,1]}(\wh{F}_m(t) - F^U(t)) + Z  > \lambda\:\Big|\: U\Big)}
\\
&\leq \e{ \P\Big(\sup_{t\in [0,1]}(\wh{F}_m(t) - F^U(t))  \geq  \big(\lambda- Z\big)_+\:\Big|\: U\Big)}\\
&\leq \e{ e^{- 2m  (\lambda- Z)^2_+}}.
\end{align*}
{The last inequality above is the DKW inequality~\citep{Mass1990} applied to control the inner conditional probability, since
  conditionally to $U$, $\wh{F}_m$ is the e.c.d.f. of $(q_i(U),\in \range{m})$, which are i.i.d. $\sim F^U$; and $Z$ conditional to $U$ is a constant.}
Now the last bound can be rewritten as
\begin{align}
  \int_0^{1} \P\paren{e^{- 2m  (\lambda- Z)^2_+}> v} dv
  &= e^{-2m \lambda^2} +  \int_{e^{-2m \lambda^2}}^1\P\paren{(\lambda- Z)_+< \sqrt{\log(1/v)/(2m)}} dv \notag\\
&= e^{-2m \lambda^2} +  \int_{e^{-2m \lambda^2}}^1 \P\paren{\lambda- Z< \sqrt{\log(1/v)/(2m)}} dv \notag \\
&= e^{-2m \lambda^2} +  \int_{e^{-2m \lambda^2}}^1 \P\paren{Z> \paren[1]{\lambda-\sqrt{\log(1/v)/(2m)}}} dv. \label{eq:intermed}
\end{align}
To upper bound the integrand above, denote $\wh{H}_n$ the ecdf of $(U_1,\ldots,U_n)$; it holds for any $x \in [0,1]$:
\begin{align*}
  \P\paren{Z>x} 
& = \P\paren[3]{\sup_{t\in [0,1]}\left(U_{(\lfloor (n+1)t\rfloor)}-\lfloor (n+1)t\rfloor/(n+1) \right)> x}\\
&=\P\left(\exists k\in \range{n}\::\: U_{(k)} > x+k/(n+1)\right)\\
&=\P\paren[3]{\exists k\in \range{n}\::\: \sum_{i=1}^{n}\ind{U_i\leq x+k/(n+1)} \leq k-1}\\
&=\P\left(\exists k\in \range{n}\::\:  \wh{H}_n\paren[1]{x+k/(n+1)} -  [x+k/(n+1)] \leq (k-1)/n -  [x+k/(n+1)] \right).\\
&\leq P\left(\exists k\in \range{n}\::\:  \wh{H}_n\paren[1]{x+k/(n+1)} -  [x+k/(n+1)] \leq -  x \right)\\
& \leq e^{-2n x^2 },
\end{align*}
where we used $(k-1)/n\leq k/(n+1)$ in the first inequality,
and the left-tail DKW inequality for the last one. Plugging this into~\eqref{eq:intermed} yields
\begin{align*}
  \int_0^{1} \P(e^{- 2m  (\lambda- Z)^2_+}> v) dv
  &\leq e^{-2m \lambda^2} +  \int_{e^{-2m \lambda^2}}^1 e^{-2 n (\lambda-\sqrt{\log(1/v)/(2m)})^2}dv.
\end{align*}
Now letting $u=\sqrt{\log(1/v)/(2m)}$ (hence $v=e^{-2m u^2}$, $dv=-4mu e^{-2m u^2}du$), we obtain
\begin{align*}
\P\paren[3]{\sup_{t\in [0,1]}\paren[1]{\wh{F}_m(t) - I_n(t)} > \lambda}&\leq e^{-2m \lambda^2} +4 m \int_0^\lambda ue^{-2 n (\lambda-u)^2} e^{-2m u^2}du.
\end{align*}
Now, 
by denoting $\sigma^2=\paren[1]{4(n+m)}^{-1}$ and $\mu=n(n+m)^{-1}$, we get
\begin{align*}
e^{\frac{2nm}{n+m}\lambda^2} \int_0^\lambda ue^{-2 n (\lambda-u)^2} e^{-2m u^2}du 
&= \int_0^\lambda ue^{-2(n+m)\left(u-\frac{n\lambda}{n+m}\right)^2 du}\\
&=\int_0^\lambda ue^{-\frac{1}{2\sigma^2}\left (u-\lambda\mu\right )^2}du\\
&=\int_0^\lambda (u-\lambda\mu)e^{-\frac{1}{2\sigma^2}\left (u-\lambda\mu\right )^2}du+\int_0^\lambda \lambda\mu e^{-\frac{1}{2\sigma^2}\left (u-\lambda\mu\right )^2}du\\
&= \sigma^2e^{-2\lambda^2\frac{n^2}{m+n}}-\sigma^2e^{-2\lambda^2\frac{m^2}{m+n}}+\lambda\mu 
\sqrt{2\pi} \sigma C_{\lambda,n,m}.
\end{align*}
where $C_{\lambda,n,m}=\P\paren[1]{\mathcal{N}(\lambda \mu,\sigma^2)\in [0,\lambda]}$.
Hence, 
\begin{align*}
 \int_0^\lambda ue^{-2 n (\lambda-u)^2} e^{-2m u^2}du 
&= 
e^{-\frac{2nm}{n+m}\lambda^2}\left(\sigma^2e^{-2\lambda^2\frac{n^2}{m+n}}-\sigma^2e^{-2\lambda^2\frac{m^2}{m+n}}+\lambda\mu 
\sqrt{2\pi} \sigma C_{\lambda,n,m}\right)\\
&=\sigma^2e^{-2n\lambda^2}-\sigma^2 e^{-2m\lambda^2}+\lambda\mu 
\sqrt{2\pi} \sigma C_{\lambda,n,m} e^{-\frac{2nm}{n+m}\lambda^2}.
\end{align*}
This leads to 
\begin{align*}
&e^{-2m \lambda^2}+4 m \int_0^\lambda ue^{-2 n (\lambda-u)^2} e^{-2m u^2}du\\
&=\frac{n}{n+m} e^{-2m \lambda^2} + \frac{m}{n+m}e^{-2n\lambda^2} + \lambda\sqrt{2\pi} \frac{nm}{(n+m)^{3/2}} 2C_{\lambda,n,m}e^{-\frac{2nm}{n+m}\lambda^2},
\end{align*}
which finishes the proof of \eqref{boundDKWfullup}.

Finally, let us prove $B^{\mbox{\tiny DKW}}(\lambda^{\mbox{\tiny DKW}}_{\delta,n,m},n,m)\leq \delta$ for $\lambda^{\mbox{\tiny DKW}}_{\delta,n,m}=\Psi^{(r)}(1)$ where $\Psi^{(r)}$ denotes the function $\Psi$ iterated $r$ times (for an arbitrary integer $r\geq 1$), where
$$
\Psi(x)=1\wedge \wt{\Psi}(x); \;\;\; \wt{\Psi}(x) := \paren{ \frac{\log(1/\delta)+\log\big(1+ \sqrt{2\pi}  \frac{2\tau_{n,m}x}{(n+m)^{1/2}}\big)}{2\tau_{n,m}} }^{1/2}.
$$
If $\Psi(1) = 1$, then $\Psi^{(r)}(1) = 1$ for all $r$ and the announced claim holds since
$B^{\mbox{\tiny DKW}}(1,n,m)=0$ by definition. We therefore assume $\Psi(1) <1$ from now on.
Since $\Psi$ is non-decreasing, by an immediate recursion we have $\Psi^{(r+1)}(1) \leq \Psi^{(r)}(1) <1 $,
for all integers $r$.

On the other hand, note that for any $x\in(0,1)$ satisfying $\Psi(x) \leq x <1$, it holds $\Psi(x) = \wt{\Psi}(x)$ and thus
\[
  B^{\mbox{\tiny DKW}}(\Psi(x),n,m) = \brac{1+\frac{2\sqrt{2\pi}\Psi(x) \tau_{n,m}}{(n+m)^{1/2}}}
                         \brac{1+\frac{2\sqrt{2\pi}x \tau_{n,m}}{(n+m)^{1/2}}}^{-1} \delta \leq \delta.
\]
Since we established that $x=\Psi^{(r)}(1)$ satisfies $\Psi(x) \leq x$ for any integer $r$ the claim follows.

\section{Explicit control of \eqref{controlalphaFWER}}
\label{sec:minp}

By applying \eqref{equMarques} with $k=0$, the control \eqref{controlalphaFWER} for $\ol{\alpha}=0$ is satisfied by choosing
$$
t_{0,\delta}=\max\set{k/(n+1)\::\:  \frac{(n-k+1)\dots (n-k + m)}{(n+1)\dots (n+m) } \geq 1-\delta, k\in \range{n+1}}.
$$
{We can also obtain an implicit formula for $t_{\ol{\alpha},\delta}$ when $\ol{\alpha}>0$ as follows. By definition, $t_{\ol{\alpha},\delta}$ is the maximum of the $t\in \range{n+1}/(n+1)$ such that $\P_{\bm{p}\sim P_{n,m}}( p_{(\lfloor \ol{\alpha} m\rfloor+1)}\leq t)  \leq \delta$, or equivalently $\P_{\bm{p}\sim P_{n,m}}( \FCP(\bm{\mathcal{C}}(t))> \ol{\alpha})\leq \delta$.  
The latter probability can be obtained explicitly from \eqref{equMarques} with the formula
\begin{align*}
\P_{\bm{p}\sim P_{n,m}}( \FCP(\bm{\mathcal{C}}(t))> \ol{\alpha}) = \sum_{k=\lfloor \ol{\alpha}m\rfloor+1}^m  {m \choose k}  \frac{(n-k_0+1)\dots (n-k_0 + m-k)\times k_0 \dots(k_0+k-1)}{(n+1)\dots (n+m) },
\end{align*}
where $k_0=\lceil t(n+1)\rceil$.
Of course whenever this formula is too computationally complex for a practical use (e.g., when $m$ is large), we can alternative use a Monte-Carlo scheme to simulate draws from $P_{n,m}$ and thus approximate $t_{\alpha,\delta}$ as an empirical $\delta$-quantile of $p_{(\lfloor \ol{\alpha} m\rfloor+1)}$ with $\mathbf{p}\sim P_{n,m}$.}

\section{On the tightness of the new DKW bound}\label{sec:tighnessDKW}

{The bound $B^{\mbox{\tiny DKW}}(\lambda,n,m)$ \eqref{BDKW} is simple and explicit and we comment here briefly about its sharpness:
\begin{itemize}
\item First,  for a fixed $m$ we have $B^{\mbox{\tiny DKW}}(\lambda,n,m)\to  \bm{1}_{\set{\lambda <1}} e^{-2m \lambda^2}$ when $n$ tends to infinity. This bound hence recovers the usual DKW  inequality \cite{Mass1990} for $\wh{F}_m$, which is well expected because $n=\infty$ corresponds to the case of i.i.d. uniform $p$-values ('theoretical' $p$-values rather than 'conformal' $p$-values). In addition, note that the usual DKW bound (in $n$) can be also recovered when $n$ is fixed and $m$ tends to infinity.
\item This bound provides the correct variance term. Indeed, we can deduce from Theorem~\ref{th:key} the following equality: for all $t,s\in \R$,
\[ \Cov_{\mathbf{p}\sim P_{n,m}}\left (\wh{F}_m(t),\wh{F}_m(s)\right )=\frac{m+n+1}{m(n+2)}\left ({I_n(t)\wedge I_n(s)-I_n(t) I_n(s)}\right ).
\] 
Clearly, we have $\frac{m+n+1}{m(n+2)}\sim {\tau_{n,m}}$ when $m\wedge n\rightarrow +\infty$.
\item The bound is compared to the true probability by using simulations in Figure~\ref{fig:DKWTight}. We observe that the bound is fairly close to the target when $\lambda$ is large enough or/and $m\wedge n$ is large.
\end{itemize}
As mentioned in Remark~\ref{numispossible}, recall that this bound can be made sharper by using (non-explicit) numerical approximations.  
}

\begin{figure}[h!]

\includegraphics[scale=0.3]{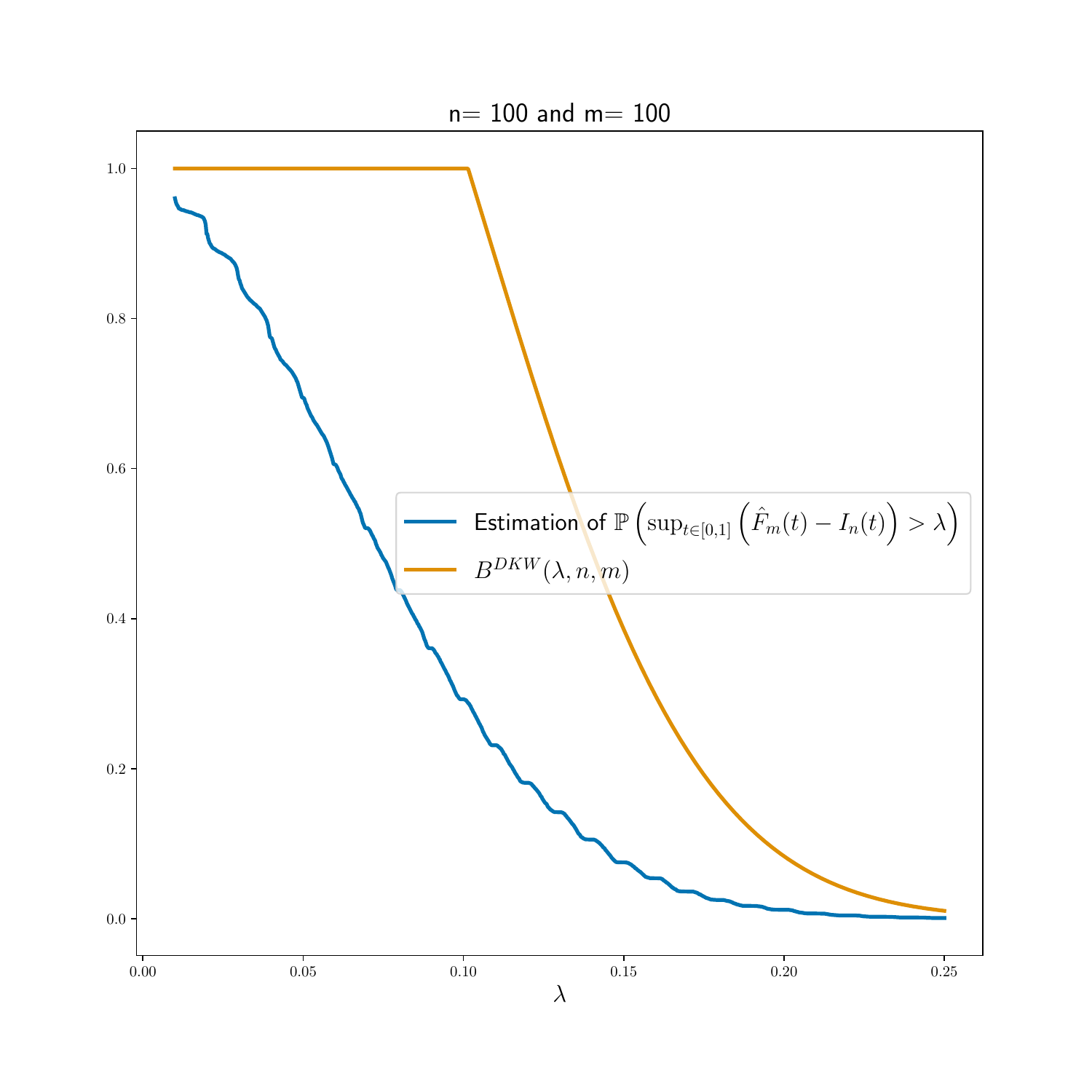}
\includegraphics[scale=0.3]{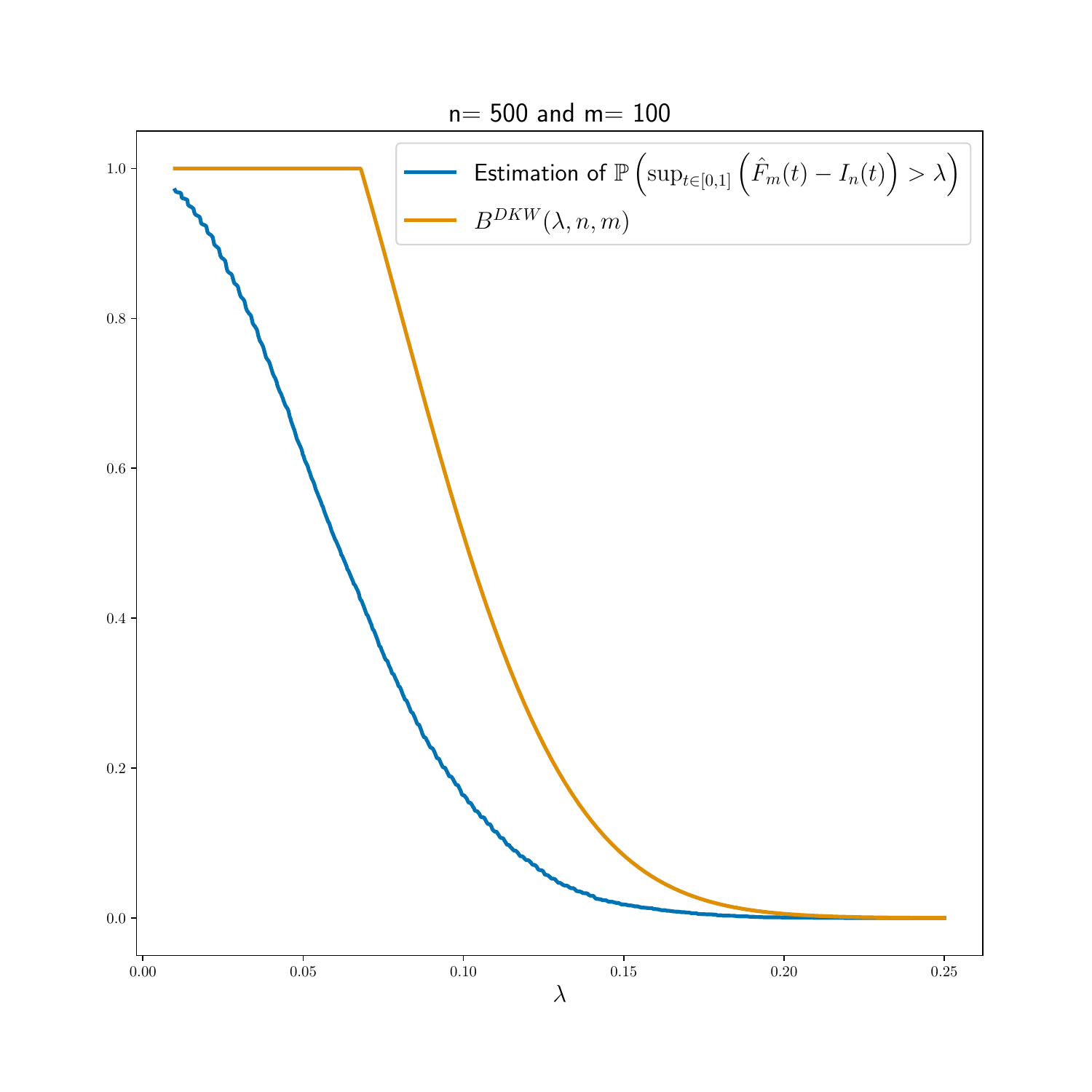}
\includegraphics[scale=0.3]{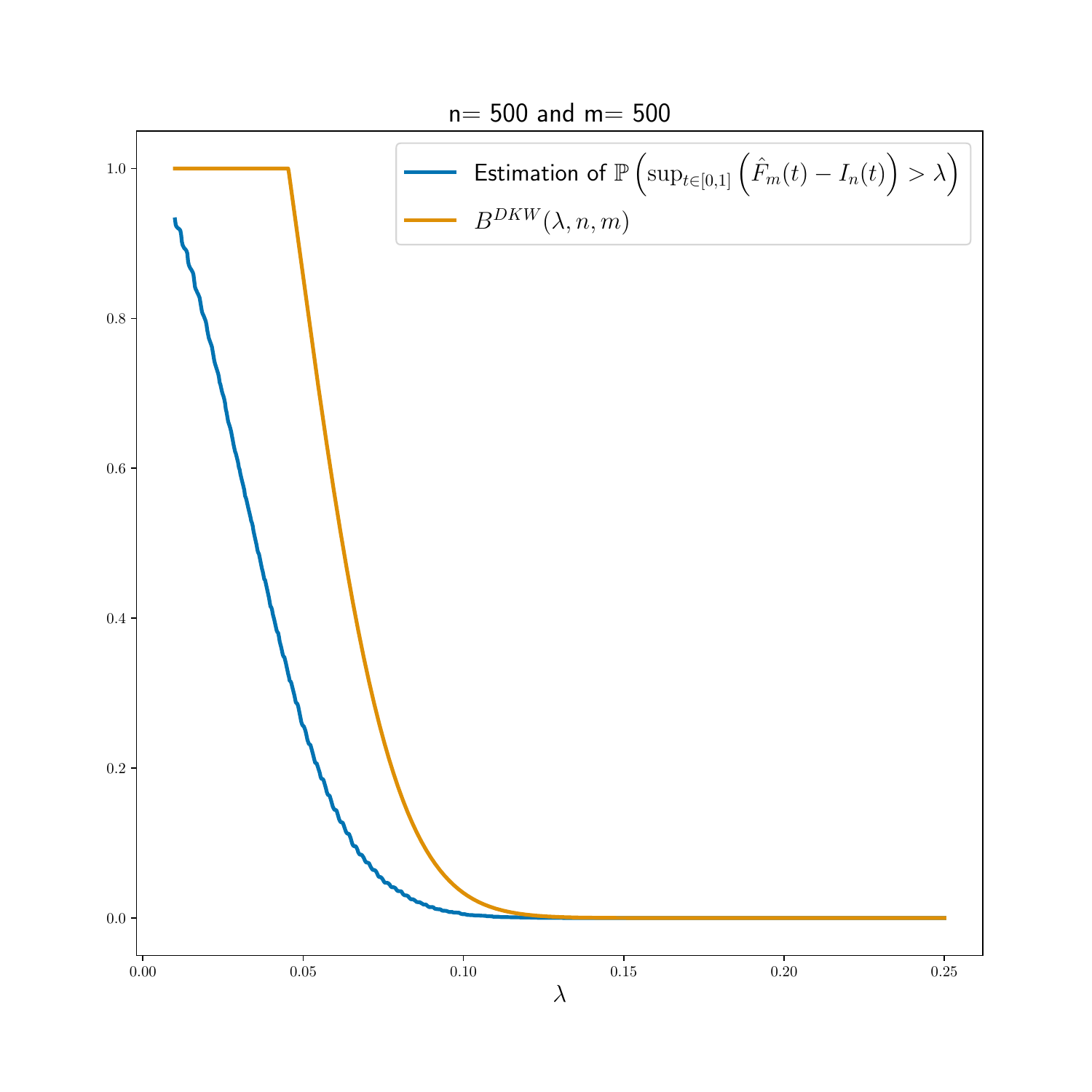}
\includegraphics[scale=0.3]{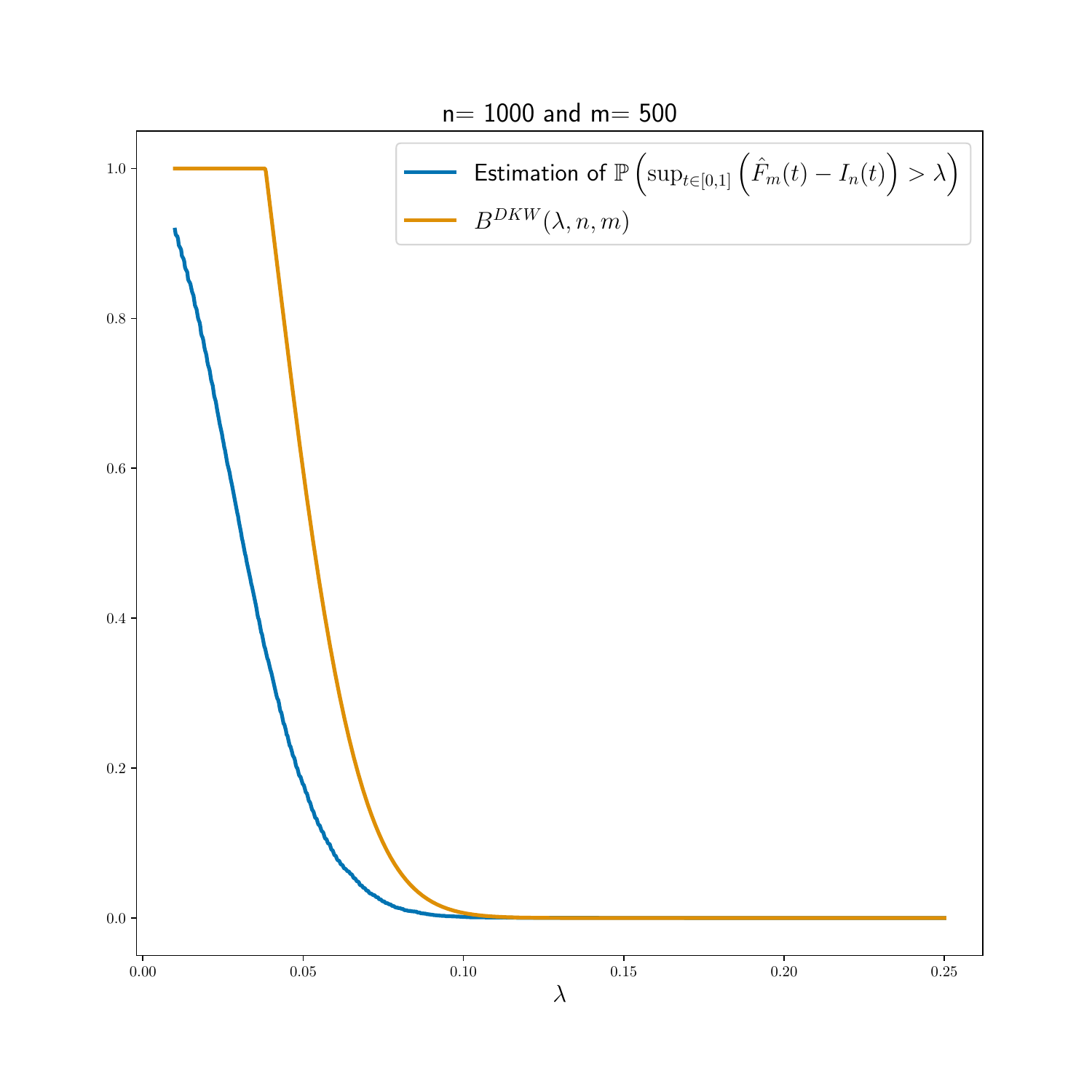}

\caption{
Plot of  $\lambda\mapsto \P(\sup_{t\in[0,1]}(\wh{F}_m(t)-I_n(t))>\lambda)$ (Blue) and of $\lambda\mapsto B^{\mbox{\tiny DKW}}(\lambda,n,m)$ (Orange) for different values of $n$ and $m$. These probabilities are estimated with $10^4$ Monte-Carlo iterations.}\label{fig:DKWTight}
\end{figure}

\section{Proof of Corollary~\ref{cor:ThresholdFDP}}\label{sec:adadetect}

Let $m_0=|\cH_0|$. We establish the following more general result.

\begin{lemma}\label{lem:interm}
With probability at least $1-\delta$, we have both
\begin{align}
& \forall t\in (0,1), \:\:\FDP(\mathcal{R}(t))\leq 
 \frac{ m_0 I_n(t)+  m_0 \lambda^{\mbox{\tiny DKW}}_{\delta,n,m_0}}{1\vee |\mathcal{R}(t)|} \label{ThresholdFDPboundDKW2};\\
 & m_0\leq \max\set{r\in \range{m}\::\:\inf_t \paren{\frac{\sum_{i=1}^m\ind{p_i> t} + \max_{u\in\range{r}}\Big(u \lambda^{\mbox{\tiny DKW}}_{\delta,n,u}\Big) }{1-I_{n}(t)}}\geq r}\label{ThresholdFDPboundDKW3}.
\end{align}
\end{lemma}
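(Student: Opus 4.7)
The plan is to apply the one-sided DKW envelope of Theorem~\ref{thDKW} to the sub-family of null $p$-values, and to read off both claims \eqref{ThresholdFDPboundDKW2} and \eqref{ThresholdFDPboundDKW3} from a single uniform-in-$t$ concentration event so that no union bound is needed.

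Let $\cH_0\subset \range{m}$ be the non-novelty indices and $m_0=|\cH_0|$. The first step is to show that $(p_j)_{j\in\cH_0}\sim P_{n,m_0}$. Since the score function $g$ of Section~\ref{sec:settingnd} depends on $\dtrain$ and on $\dcal\cup\dtest$ in a permutation-invariant way, swapping a null calibration point with a null test point leaves $\dcal\cup\dtest$ (and hence $g$) unchanged while permuting their two scores. The $(n+m_0)$ null covariates being i.i.d.~from $P_0$, the joint distribution of the associated scores is invariant under all such permutations; these $(n+m_0)$ scores therefore form an exchangeable family, and Proposition~\ref{prop:exch} combined with \eqref{as:noties} delivers $(p_j)_{j\in \cH_0}\sim P_{n,m_0}$. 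Applying \eqref{boundDKWup}--\eqref{boundDKWupexplicit} with $(n,m)\leftarrow (n,m_0)$ to the null ecdf $\widehat F^{0}_{m_0}(t):=m_0^{-1}\sum_{i\in \cH_0}\ind{p_i\leq t}$ then yields an event $\mathcal E$ of probability at least $1-\delta$ on which
\begin{equation*}
\sum_{i\in \cH_0}\ind{p_i\leq t}\leq m_0 I_n(t)+m_0\lambda^{\mbox{\tiny DKW}}_{\delta,n,m_0},\qquad \forall t\in[0,1].
\end{equation*}

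On $\mathcal E$, the bound \eqref{ThresholdFDPboundDKW2} follows from writing $|\mathcal R(t)\cap\cH_0|=\sum_{i\in\cH_0}\ind{p_i\leq t}$ and checking that for each $r\in \range{\bar m_0,m}$ one has $rI_n(t)+r\lambda^{\mbox{\tiny DKW}}_{\delta,n,r}\geq m_0 I_n(t)+m_0\lambda^{\mbox{\tiny DKW}}_{\delta,n,m_0}$: the $I_n$-term uses $r\geq \bar m_0\geq m_0$ and the residual term uses the defining property of $\bar m_0$ together with the monotonicity of $r\mapsto r\lambda^{\mbox{\tiny DKW}}_{\delta,n,r}$ on $\range{m_0,m}$ (which can be read off the explicit form \eqref{boundDKWupexplicit}, as $r\lambda^{\mbox{\tiny DKW}}_{\delta,n,r}\asymp\sqrt{r(n+r)/n}$ up to a mild logarithmic factor). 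Taking the minimum over $r$ then yields \eqref{ThresholdFDPboundDKW2}. For \eqref{ThresholdFDPboundDKW3}, the same envelope gives $\sum_{i\in\cH_0}\ind{p_i>t}\geq m_0(1-I_n(t))-m_0\lambda^{\mbox{\tiny DKW}}_{\delta,n,m_0}$, and $\sum_{i=1}^m\ind{p_i>t}\geq \sum_{i\in\cH_0}\ind{p_i>t}$ then shows that $r=m_0$ satisfies the condition $\inf_t\bigl(\cdots\bigr)\geq r$ defining the set in \eqref{ThresholdFDPboundDKW3}; the same monotonicity forces $\bar m_0=m_0$ and hence $\bar m_0\leq \max\{\cdots\}$.

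The main subtlety, and really the only non-routine step, is the monotonicity check for $r\mapsto r\lambda^{\mbox{\tiny DKW}}_{\delta,n,r}$ on $\range{m_0,m}$; once that is in hand, both claims are a direct packaging of a single one-sided DKW application to the null $p$-value vector and no $\delta$-splitting is required.
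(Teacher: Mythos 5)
Your first step is fine and matches the paper: the exchangeability of the $(n+m_0)$ null scores (the paper invokes Lemma~3.2 of \cite{marandon2022machine}) combined with Proposition~\ref{prop:exch} gives $(p_i,i\in\cH_0)\sim P_{n,m_0}$, and a single one-sided application of Theorem~\ref{thDKW} then yields a $1-\delta$ event. The genuine gap is the step you yourself flag as the only non-routine one: the claimed monotonicity of $r\mapsto r\lambda^{\mbox{\tiny DKW}}_{\delta,n,r}$ on $\range{m_0,m}$, which your derivations of \emph{both} \eqref{ThresholdFDPboundDKW2} and \eqref{ThresholdFDPboundDKW3} rely on, is asserted rather than proved. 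Recall that $\lambda^{\mbox{\tiny DKW}}_{\delta,n,r}$ is defined in \eqref{boundDKWupexplicit} as $\Psi^{(r)}(1)$ for an arbitrary number of iterations of a map that is capped at $1$; an asymptotic equivalent ``$\asymp\sqrt{r(n+r)/n}$ up to a logarithmic factor'' does not deliver the exact inequality $r\lambda^{\mbox{\tiny DKW}}_{\delta,n,r}\geq m_0\lambda^{\mbox{\tiny DKW}}_{\delta,n,m_0}$ for every $r$, $n$, $\delta$, and proving it for the iterated, capped quantity is a real piece of work, not a read-off. Moreover, even granting (non-strict) monotonicity, the minimizer need not be unique, so $\bar m_0$ (\emph{any} value achieving the min) can exceed $m_0$; your concentration event only gives $\sum_{i=1}^m\ind{p_i>t}\geq m_0(1-I_n(t))-m_0\lambda^{\mbox{\tiny DKW}}_{\delta,n,m_0}$, which certifies that $r=m_0$ lies in the set of \eqref{ThresholdFDPboundDKW3} but not that $r=\bar m_0$ does (the term $\bar m_0(1-I_n(t))$ is larger while $\bar m_0\lambda^{\mbox{\tiny DKW}}_{\delta,n,\bar m_0}\leq m_0\lambda^{\mbox{\tiny DKW}}_{\delta,n,m_0}$ goes the wrong way), so you only obtain $\max\{\cdots\}\geq m_0$, not $\geq\bar m_0$ as the statement requires.

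The paper's proof is structured precisely to avoid any monotonicity claim. It embeds the null $p$-values into a longer vector $(q_1,\dots,q_{\bar m_0})\sim P_{n,\bar m_0}$ whose first $m_0$ coordinates are $(p_i,i\in\cH_0)$ --- possible because, by the P\'olya-urn consistency of Theorem~\ref{th:key}~(i), $P_{n,m_0}$ is the restriction of $P_{n,\bar m_0}$ to the first $m_0$ coordinates --- and applies the DKW bound once, directly at scale $\bar m_0$, on the event \eqref{eqOmega}. Then $|\mathcal{R}(t)\cap\cH_0|=m_0\wh{F}_{m_0}(t)\leq\bar m_0\wh{F}_{\bar m_0}(t)$ gives \eqref{ThresholdFDPboundDKW2} using only the defining minimality of $\bar m_0\lambda^{\mbox{\tiny DKW}}_{\delta,n,\bar m_0}$ over $\range{m_0,m}$, and the same event shows that $r=\bar m_0$ itself satisfies the condition in \eqref{ThresholdFDPboundDKW3}. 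If you wish to keep your route, you must actually prove that $r\mapsto r\lambda^{\mbox{\tiny DKW}}_{\delta,n,r}$ is nondecreasing (handling the cap at $1$ and the iteration in \eqref{boundDKWupexplicit}) and deal with possible ties, e.g.\ by fixing $\bar m_0$ as the smallest minimizer; as written, the key step is missing.
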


Lemma~\ref{lem:interm} implies Corollary~\ref{cor:ThresholdFDP} because if $\hat{m}_0$ is as in \eqref{m0hat}, with probability at least $1-\delta$, $\hat{m}_0\geq {m}_0$ by \eqref{ThresholdFDPboundDKW3}, and by \eqref{ThresholdFDPboundDKW2}
$$
\forall t\in (0,1),\:\:\FDP(\mathcal{R}(t))\leq  \frac{ m_0 I_n(t)+  m_0 \lambda^{\mbox{\tiny DKW}}_{\delta,n,m_0}}{1\vee |\mathcal{R}(t)|}\leq    \frac{ \hat{m}_0 I_n(t)+  \max_{r\in\range{\hat{m}_0}}\Big(r \lambda^{\mbox{\tiny DKW}}_{\delta,n,r}\Big)}{1\vee |\mathcal{R}(t)|}.
$$

Now, let us prove Lemma~\ref{lem:interm}.

First, in the work of \cite{marandon2022machine}, it is proved that $(S_{1}, \ldots, S_{n}, S_{n+i},i\in \cH_0)$ is exchangeable conditionally on $(S_{n+i},i\in \cH_1)$ (see Lemma~3.2 therein).
Hence, the vector $(S_{1}, \ldots, S_{n}, S_{n+i},i\in \cH_0)$, of size $n+m_0$, and the $p$-value vector $(p_i,i\in \cH_0)$, of size $m_0$, fall into the setting described in Section~\ref{sec:setting} with calibration scores being $(S_{i})_{i\in \range{n}}$ and test scores being $(S_{n+i})_{i\in \cH_0}$. By Proposition~\ref{prop:exch}, this means $(p_i,i\in \cH_0)\sim P_{n,m_0}$.

Second, consider the event
\begin{equation}\label{eqOmega}
\Omega=\set{\sup_{t\in [0,1]}(\wh{F}_{ m_0}(t) - I_{n}(t)) \leq \lambda^{\mbox{\tiny DKW}}_{\delta,n, m_0}}.
\end{equation}
By applying Theorem~\ref{thDKW} and the explicit bound \eqref{boundDKWupexplicit}, we have  $\P(\Omega)\geq 1-\delta$. 
Next, $|\mathcal{R}(t)\cap \cH_0|=m_0\wh{F}_{m_0}(t)\leq m_0 I_{n}(t) +  m_0 \lambda^{\mbox{\tiny DKW}}_{\delta,n, m_0}$ on $\Omega$. This gives  \eqref{ThresholdFDPboundDKW2}.

Let us now turn to prove \eqref{ThresholdFDPboundDKW3} on $\Omega$.
For this, let us observe that on this event, we have for all $t\in(0,1)$,
\begin{align*}
\sum_{i=1}^m\ind{p_i> t}\geq \sum_{i\in \cH_0}\ind{p_i> t} &=  m_0(1-\wh{F}_{ m_0}(t)) \\
&\geq   m_0(1-I_{n}(t)) -  \max_{r\in\range{m_0}} \Big(r  \lambda^{\mbox{\tiny DKW}}_{\delta,n,r}\Big) \\
\end{align*}
Hence, $m_0$ is an integer $r\in \range{m}$ such that $\inf_t \Big(\frac{\sum_{i=1}^m\ind{p_i> t} +\max_{u\in\range{r}}\{ u \lambda^{\mbox{\tiny DKW}}_{\delta,n,u} \}}{1-I_{n}(t)}\Big)\geq r$, which gives \eqref{ThresholdFDPboundDKW3}.

\section{Confidence envelope and bounds derived from the Simes inequality}\label{sec:Simes}

As proved in \cite{bates2023testing} in the i.i.d. case, and since the joint distribution of the conformal $p$-values is the same under exchangeability of the scores (Proposition~\ref{prop:exch}), the conformal $p$-values are positively regressively dependent on each one of a subset (PRDS)  under \eqref{as:exchangeable} and \eqref{as:noties}, see \cite{BY2001} for a formal definition of the latter.

\begin{figure}[h!]
\begin{center}
\includegraphics[scale=0.4]{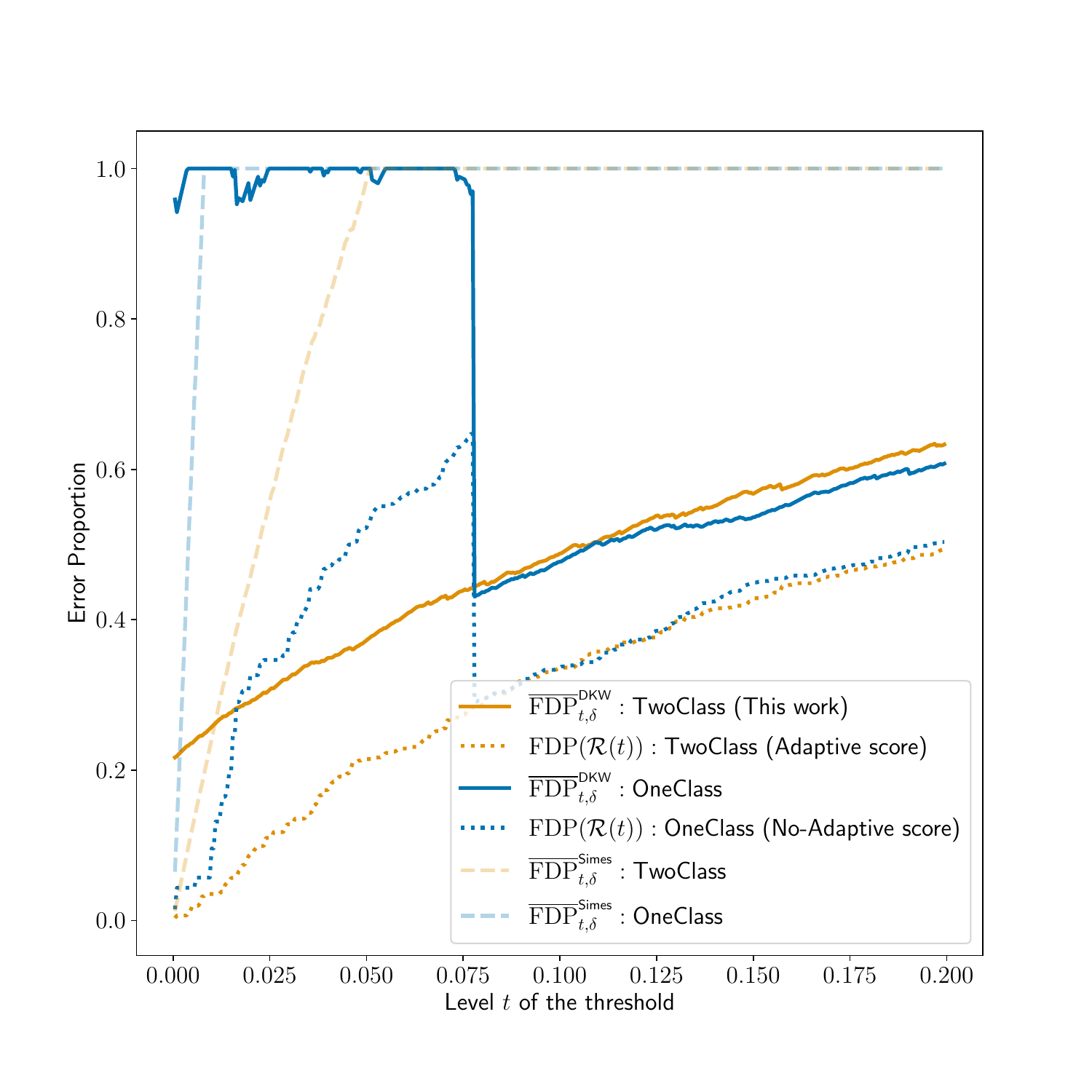}
\end{center}
\caption{
Same as Figure~\ref{fig:nd} with in addition Simes bound ${\overline{\FDP}}^{\mbox{\tiny Simes}}_{t,\delta}$  \eqref{AdaFDPboundSimesb} (transparent dashed, $\delta=0.2$).
\label{fig:ndSimes}}
\end{figure}

Hence, by \cite{BY2001}, the Simes inequality \citep{Sim1986} is valid, that is, for all $\lambda>0$, we have
\begin{align}\label{Simes}
\P\paren{\sup_{t\in (0,1]}(\wh{F}_m(t)/t) \geq \lambda}&\leq 1/\lambda. 
\end{align}
This envelope can be applied in the two applications of the paper as follows:
\begin{itemize}
\item[(PI)] Under the condition of Corollary~\ref{corsimultaneous}, the bound
\begin{align}
{\overline{\FCP}}^{\mbox{\tiny Simes}}_{\alpha,\delta}&=  (\alpha/\delta) \ind{\alpha\geq 1/(n+1)}\label{boundfalsepositiveSimes}
\end{align}
is valid for \eqref{controlalphaFWERunif}.
\item[(ND)] Under the condition of Corollary~\ref{cor:ThresholdFDP} the following control is valid
\begin{equation}\label{AdaFDPboundSimesa}
\P\Big( \forall t\in (0,1),\FDP(\mathcal{R}(t))\leq {\overline{\FDP}}^{\tiny \mbox{Simes}}_{t,\delta}\Big)\geq 1-\delta,
\end{equation}
for
\begin{align}
 \ol{\FDP}^{\mbox{\tiny Simes}}_{t,\delta} :=\frac{\hat{m}_0 t/\delta}{1\vee |\mathcal{R}(t)|}  ,\label{AdaFDPboundSimesb}
\end{align}
for any estimator 
\begin{align}
\hat{m}_0\geq m\wedge \inf_{t\in (0,\delta)} \frac{\sum_{i=1}^m \ind{p_i>t}}{1-t/\delta}.
\label{m0chapSimes}
\end{align}
\end{itemize}

A comparison between the Simes bound and the DKW bound is presented in Figure~\ref{fig:ndSimes} for the (ND) task. While the Simes bound is better for extremely small $t$, the DKW bound is in general sharper.

\section{Uniform FDP bound for AdaDetect}\label{sec:AdaDetect}

AdaDetect \citep{marandon2022machine} is obtained by applying the Benjamini-Hochberg (BH) procedure \citep{BH1995} to the conformal $p$-values, that is, $\AdaDetect_\alpha:=\mathcal{R}(\alpha\hat{k}_\alpha/m)$, where
\begin{equation}\label{equkchapeau}
\hat{k}_\alpha := \max\set{k\in \range{0,m}\::\: \sum_{i=1}^m \ind{p_i\leq \alpha k/m}\geq k}.
\end{equation}
It is proved there to control the false discovery rate (FDR), defined as the mean of the FDP:  
\begin{equation}\label{fdrcontrol}
\FDR(\AdaDetect_\alpha):=\E[\FDP(\AdaDetect_\alpha)] \leq  \alpha m_0/m.
\end{equation} 
Applying Corollary~\ref{ThresholdFDPboundDKW}, we obtain on the top of the in-expectation guarantee \eqref{fdrcontrol} the following uniform FDP bound for $\AdaDetect_\alpha$: with probability at least $1-\delta$, we have
\begin{align}
 &\forall \alpha\in (0,1),\:\:\FDP(\AdaDetect_\alpha)\leq \ol{\FDP}^{\mbox{\tiny DKW}}_{\alpha,\delta}\nonumber\\
&\ol{\FDP}^{\mbox{\tiny DKW}}_{\alpha,\delta} :=\paren{ \alpha\frac{\hat{m}_0}{m} + \frac{\hat{m}_0 \lambda^{\mbox{\tiny DKW}}_{\delta,n,\hat{m}_0}}{\hat{k}_\alpha\vee 1}}\ind{\hat{k}_\alpha>0},\label{AdaFDPboundDKW}
\end{align}
where $\hat{k}_\alpha$ is the rejection number \eqref{equkchapeau} of $\AdaDetect_\alpha$ and $\hat{m}_0$ satisfies \eqref{m0hat}.

In addition, we consider 
\begin{align}
 \ol{\FDP}^{\mbox{\tiny Simes}}_{\alpha,\delta} :=\frac{\hat{m}_0 \alpha}{m\delta }  \ind{\hat{k}_\alpha>0},\label{AdaFDPboundSimesb2}
\end{align}
for any estimator $\hat{m}_0$ given by \eqref{m0chapSimes}.

\section{Additional experiments}\label{sec:addexp}

In this section, we provide experiments to illustrate the FDP confidence bounds for AdaDetect, as mentioned in Remark~\ref{rem:FDPboundalpha} and Section~\ref{sec:AdaDetect}. 

The two procedures used are of the AdaDetect type \eqref{equkchapeau} but with two different score functions: the Random Forest classifier from \cite{marandon2022machine} (adaptive score), and the one class classifier Isolation Forest as in \citealp{bates2023testing} (non adaptive score).
The hyperparameters of these two machine learning algorithms are those given by \cite{marandonAdaImplementation}.

The FDP and the corresponding bounds are computed for the two procedures. The true discovery proportion is defined by
\begin{equation}\label{eq:TDP}
\TDP({R})=\frac{|R\cap \cH_1|}{|\cH_1|\vee 1},
\end{equation}
where $\cH_1= \range{m} \setminus \cH_0$;
this criterion will be considered in addition to the FDP to evaluate the detection power of the procedures.

Following the numerical experiments of \cite{marandon2022machine} and \cite{bates2023testing}, we consider the three different real data from OpenML dataset (CC-BY license)\citep{OpenML2013} given in Table~\ref{tab:datasets}.

\begin{table}
\caption{Summary of datasets. ``Shuttle'' is originally from UCI depository. ``Credit card" is from \cite{dal2015calibrating}. ``Mammography'' is from \cite{woods1993comparative}. \label{tab:datasets}}
\begin{center}
 \begin{tabular}{lcccc}
& Shuttle & Credit card  &  Mammography \\
\hline
Dimension $d$ & 9 & 30  & 6   \\
Feature type & Real & Real  & Real \\\hline
$|\mathcal{D}_{{\tiny \mbox{train}}}|$ & 3000 & 2000 &  2000  \\
$n$ calibration sample size & 2000  & 1000 & 1000  \\
$m_0$ (test) inlier number & 1500 & 500 & 500   \\
   $m_1$ (test) novelty number & 300 & 260& 260   \\
   $m=m_0 +m_1$ total test sample size & 1800 & 760 & 760 \\
\hline
\end{tabular}
\end{center}
\end{table} 

The results are displayed in Figure~\ref{fig:AdaDKW_alldata} for comparison of adaptive versus non-adaptive scores for the different FDP confidence bounds
and the TDP. On Figure~\ref{fig:AdaDKW_alldata2}, we focus on the adaptive scores and corresponding FDP bounds only; we compare the effect (on the bounds)
of demanding a more conservative error guarantee ($\delta=0.05$ versus $\delta=0.2$), as well as the effect of estimating
$m_0$ via \eqref{m0hat} instead of just using the inequality \eqref{ThresholdFDPboundDKW} with $\hat{m}_0 = m$.

The high-level conclusions are the following:
\begin{itemize}
\item using adaptive scores rather that non-adaptive ones results in a performance improvement (better true discovery proportion for the same
  target FDR level)
\item  for small target FDR level $\alpha$, the Simes upper bounds $\ol{\FDP}^{\tiny \mbox{Simes}}_{\alpha,\delta}$ are sharper than the DKW bound,
  elsewhere the new DKW bound is sharper than Simes.
  Furthermore, the relevant region for the Simes bound having the advantage becomes all the more tenuous as the error guarantee for the bound becomes more stringent
  (smaller $\delta$). The reason is that the Simes upper bound is linear in $\delta^{-1}$, while the DKW is only (square root) logarithmic.
\item estimating the estimator $\hat{m}_0$ from \eqref{m0hat} yields sharper bounds on the FDP and is therefore advantageous.
\end{itemize}

\begin{figure}[h!]
\begin{center}
\begin{tabular}{cc}
\rotatebox{90}{\hspace{3cm}Shuttle} &\includegraphics[scale=0.2]{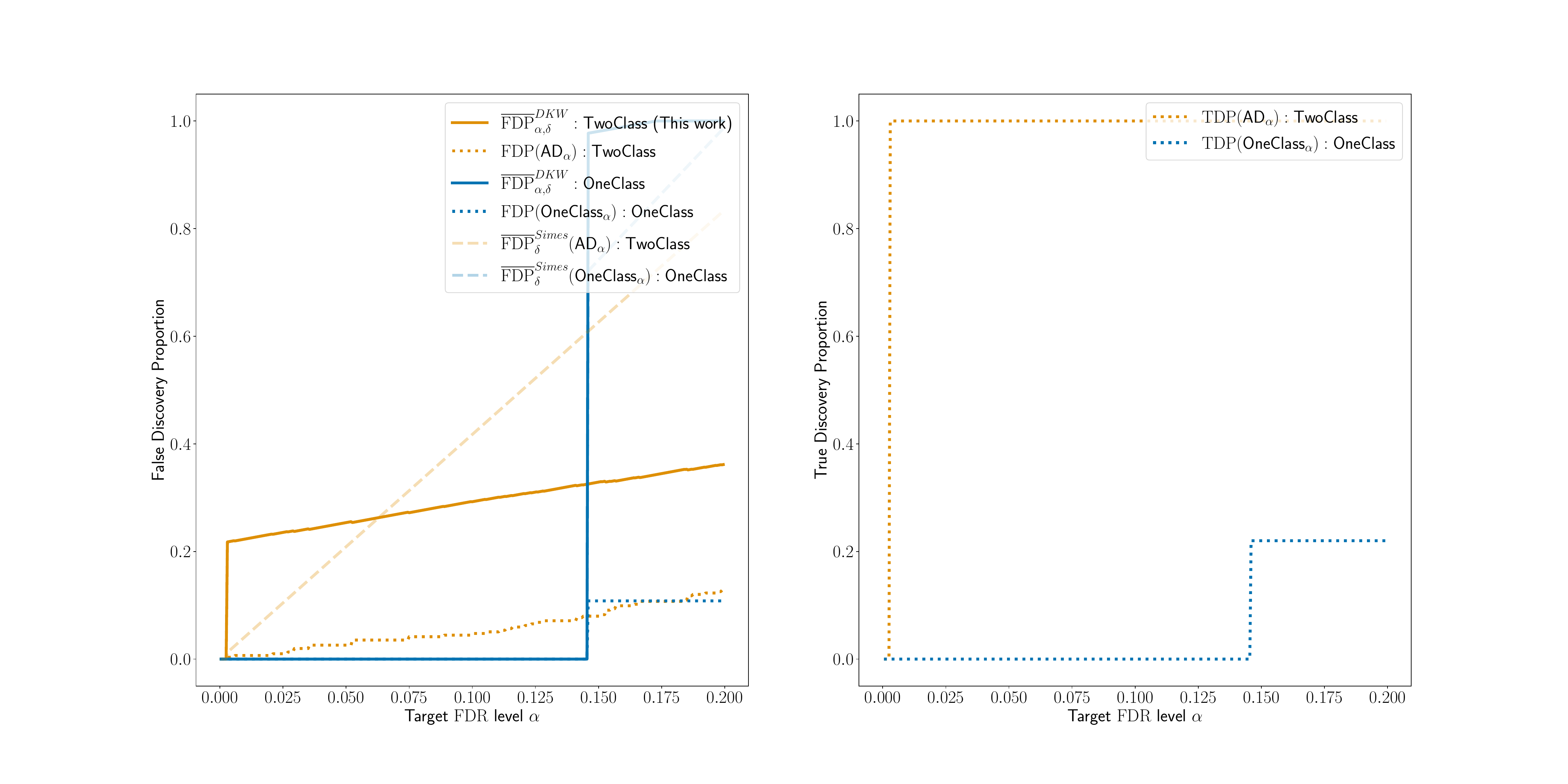}\\[-1cm]
\rotatebox{90}{\hspace{3cm}Credit Card} &\includegraphics[scale=0.2]{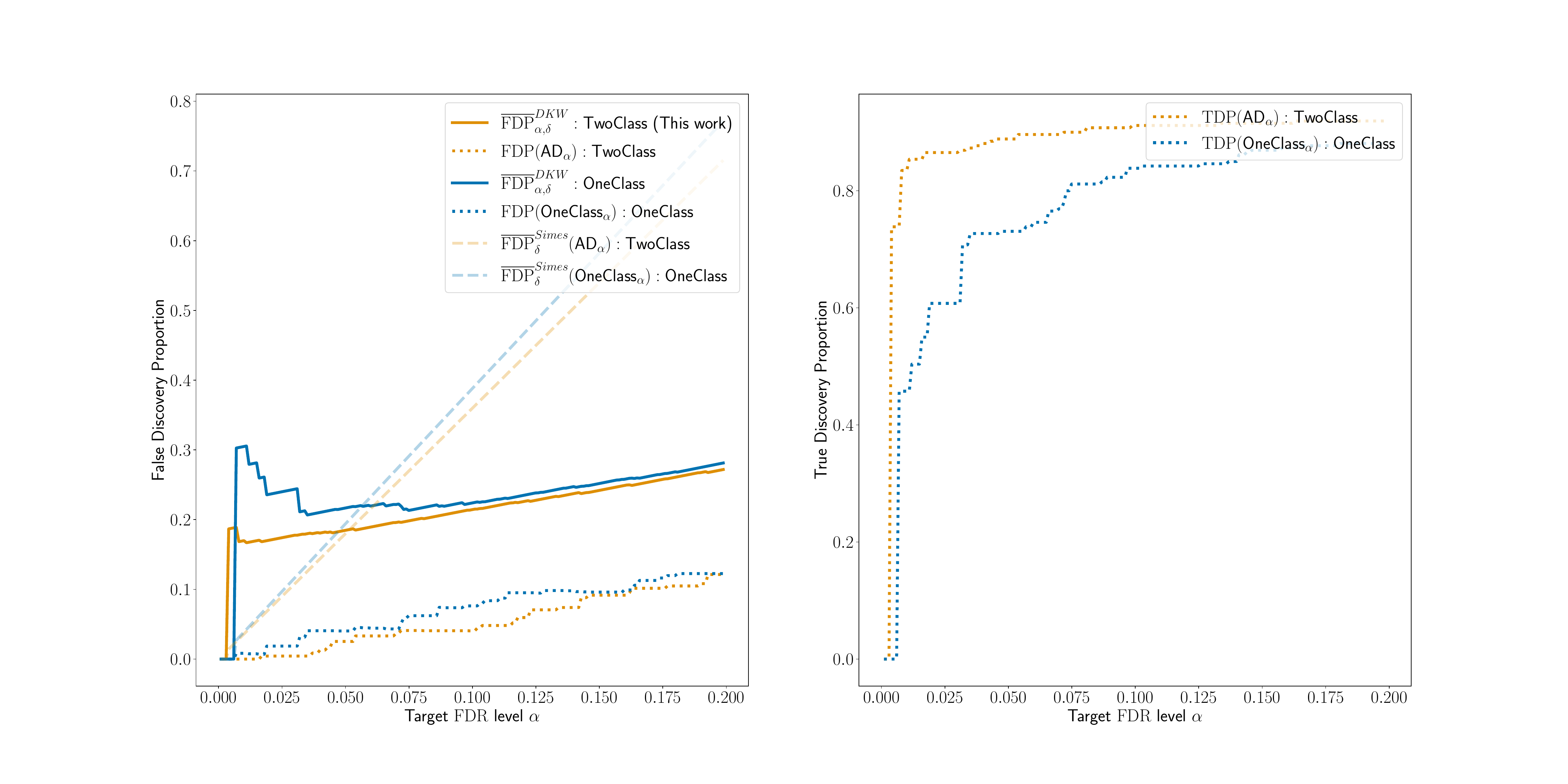}\\[-1cm]
\rotatebox{90}{\hspace{3cm}Mammography} &\includegraphics[scale=0.2]{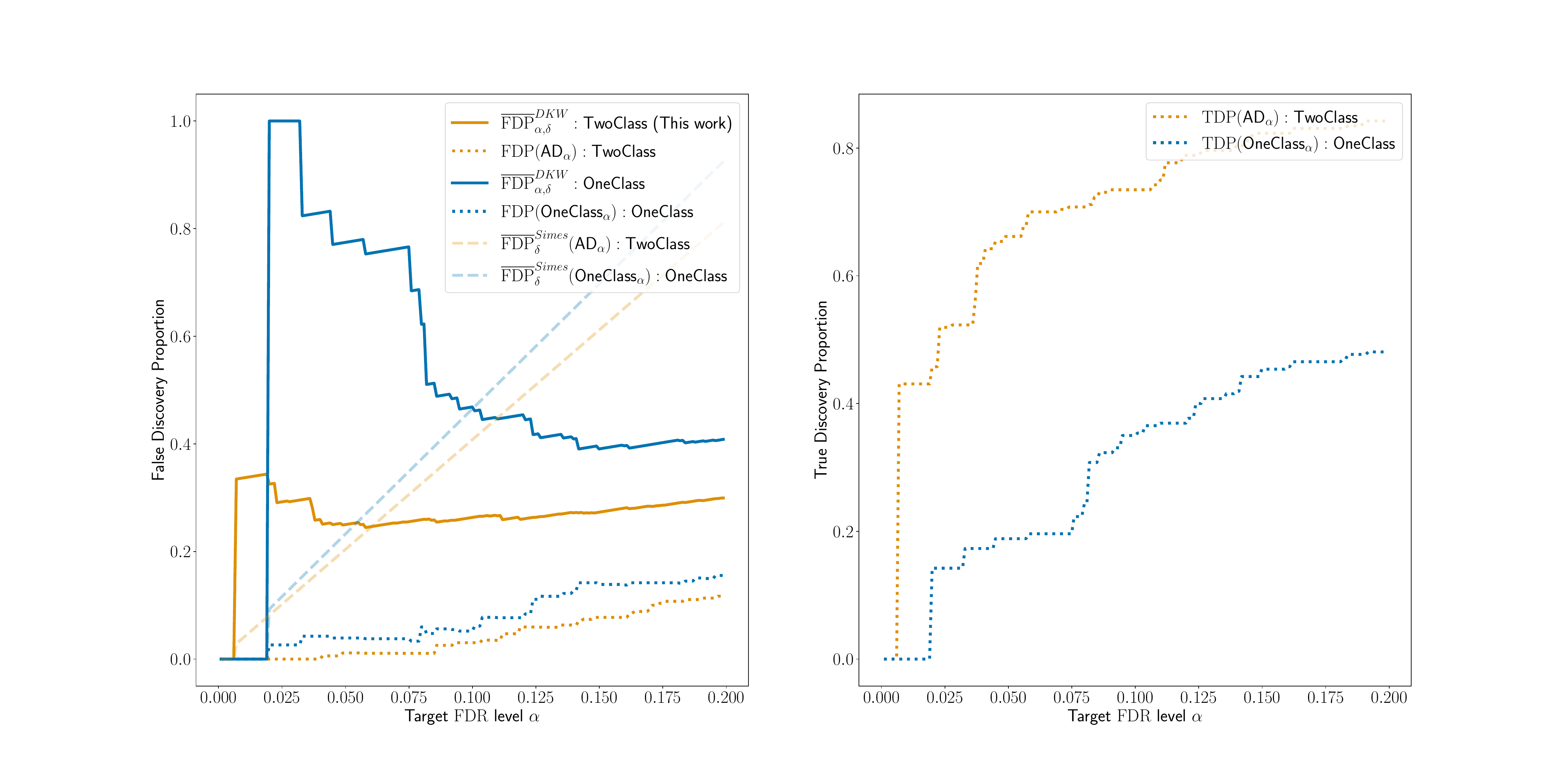}
\end{tabular}
\vspace{-1cm}
\end{center}
\caption{
Left: $\FDP(\AdaDetect_\alpha)$ \eqref{FDP}\eqref{equkchapeau} (dotted) and bounds ${\overline{\FDP}}^{\mbox{\tiny DKW}}_{\alpha,\delta}$ \eqref{AdaFDPboundDKW}  (solid)  $\ol{\FDP}^{\mbox{\tiny Simes}}_{\alpha,\delta}$ \eqref{AdaFDPboundSimesb2} (dashed) ($\delta=0.2$) in function of the nominal FDR-level $\alpha$. Right: corresponding $\TDP$ \eqref{eq:TDP}.  
In AdaDetect, the score is obtained either with a one-class classification (non-adaptive, blue) or a two-class classification (adaptive, orange); higher is better. 
\label{fig:AdaDKW_alldata}}
\end{figure}

\begin{figure}[h!]
\begin{center}
\begin{tabular}{ccc}
& Effect of $\delta$ & Effect of $\hat{m}_0$\\
\rotatebox{90}{\hspace{2.5cm}Shuttle} &\includegraphics[scale=0.25]{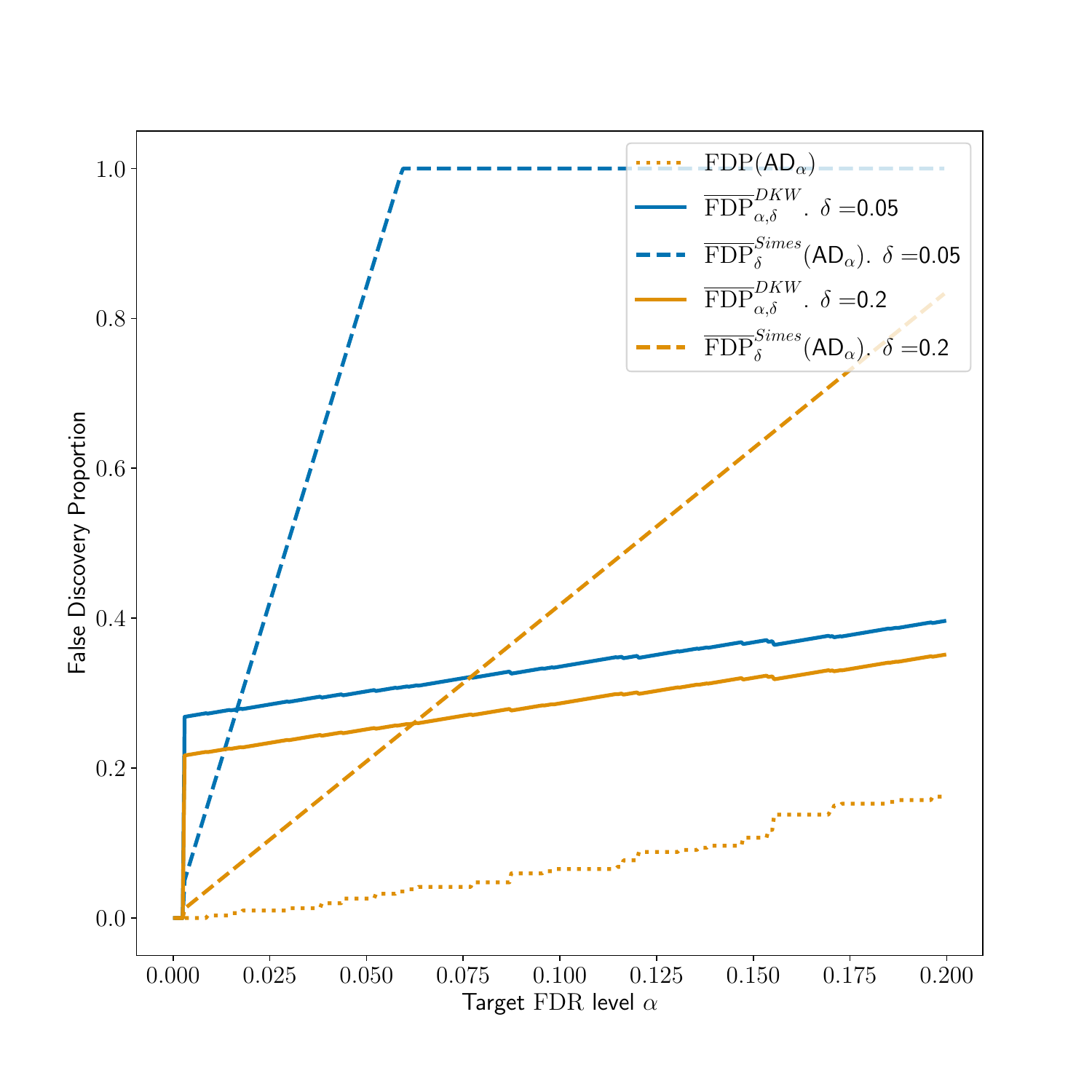}&\includegraphics[scale=0.25]{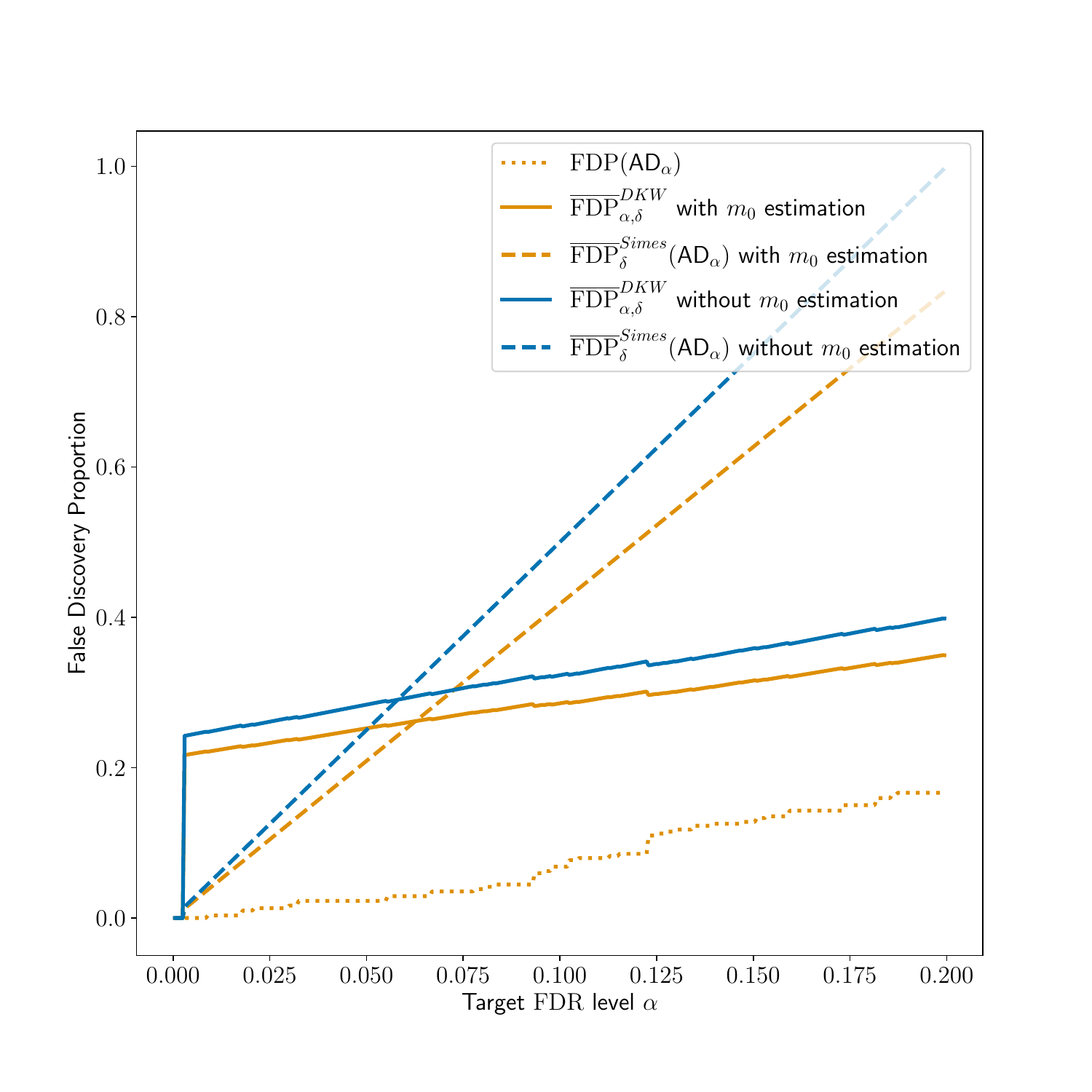}\\[-1cm]
\rotatebox{90}{\hspace{2.5cm}Credit Card} &\includegraphics[scale=0.25]{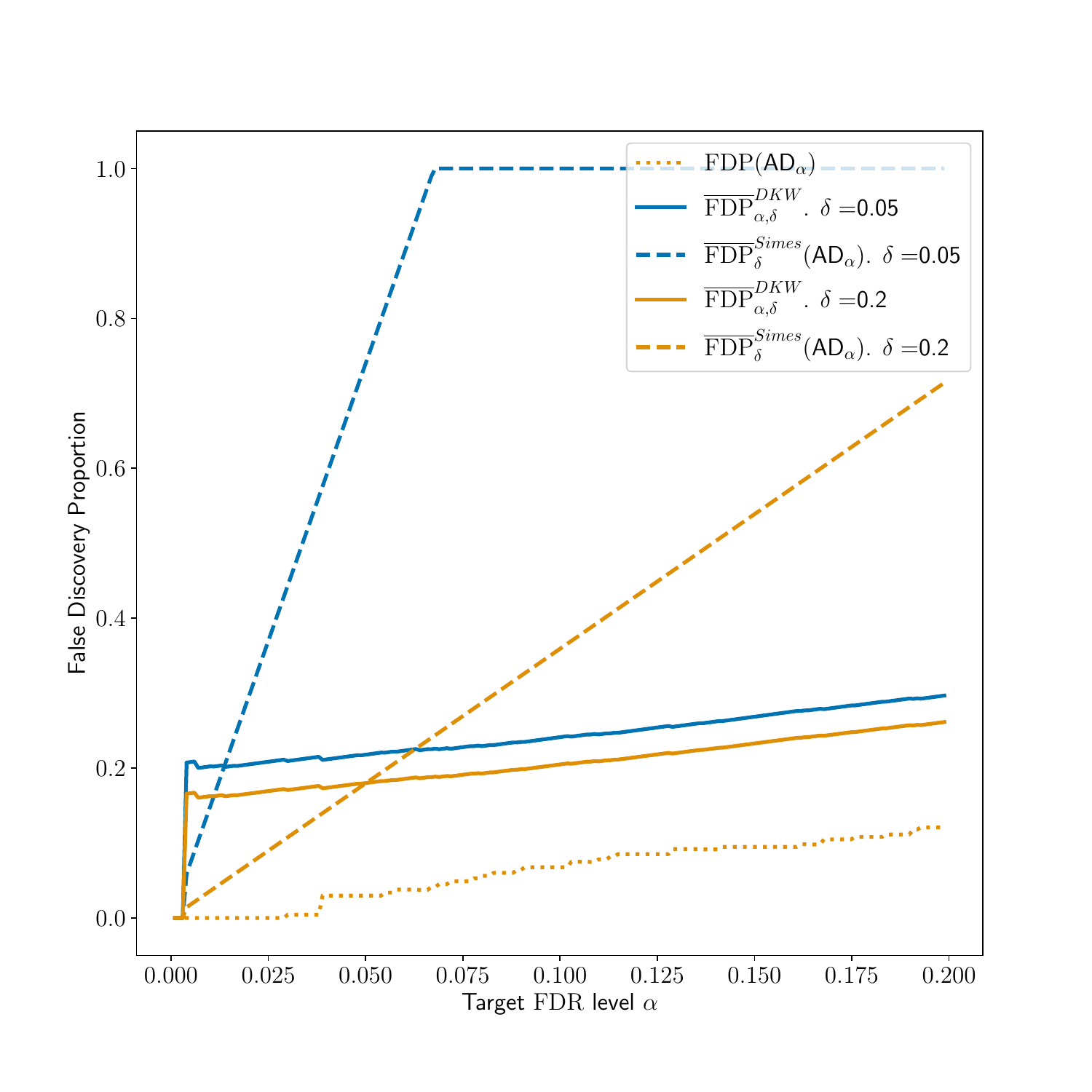}&\includegraphics[scale=0.25]{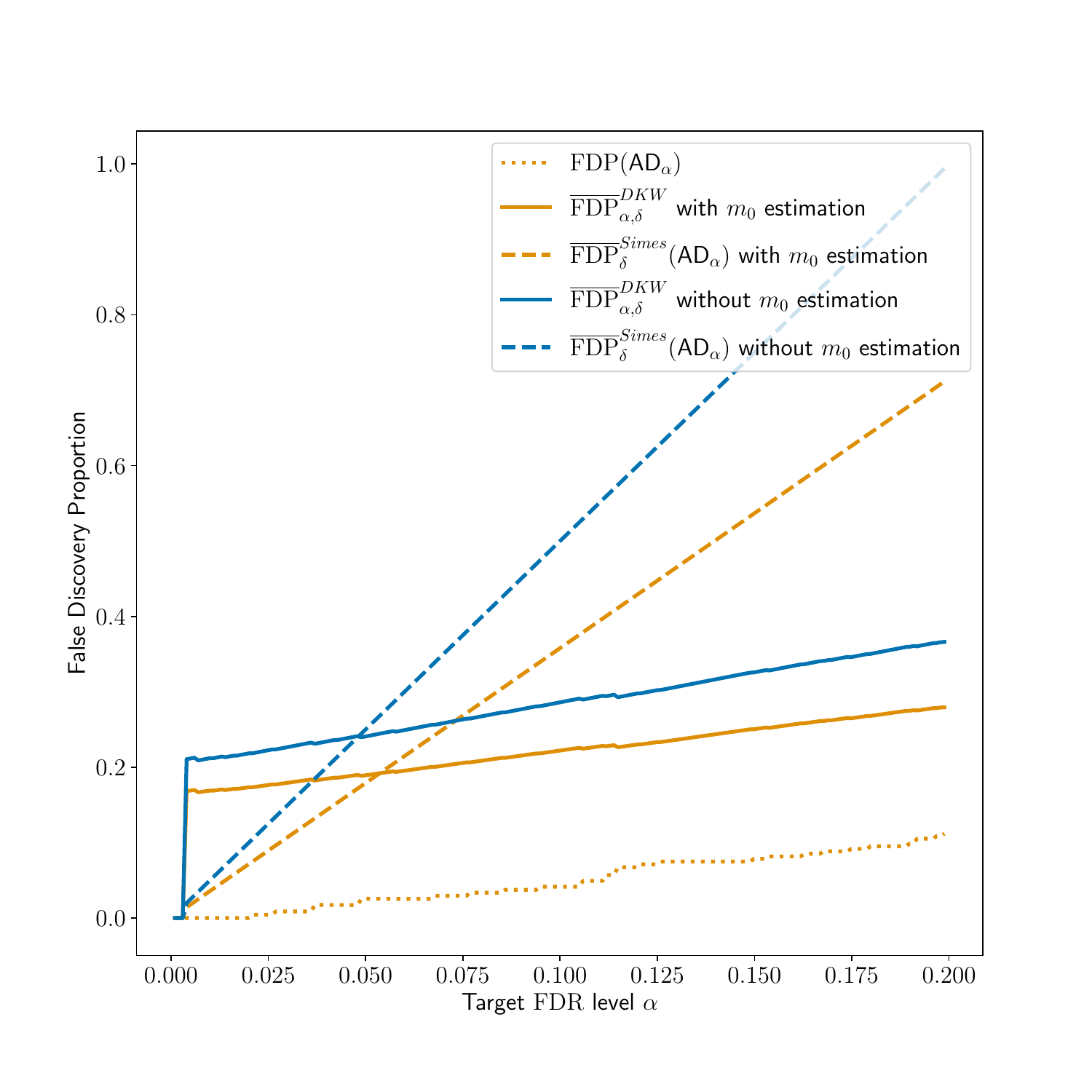}\\[-1cm]
\rotatebox{90}{\hspace{2.5cm}Mammography} &\includegraphics[scale=0.25]{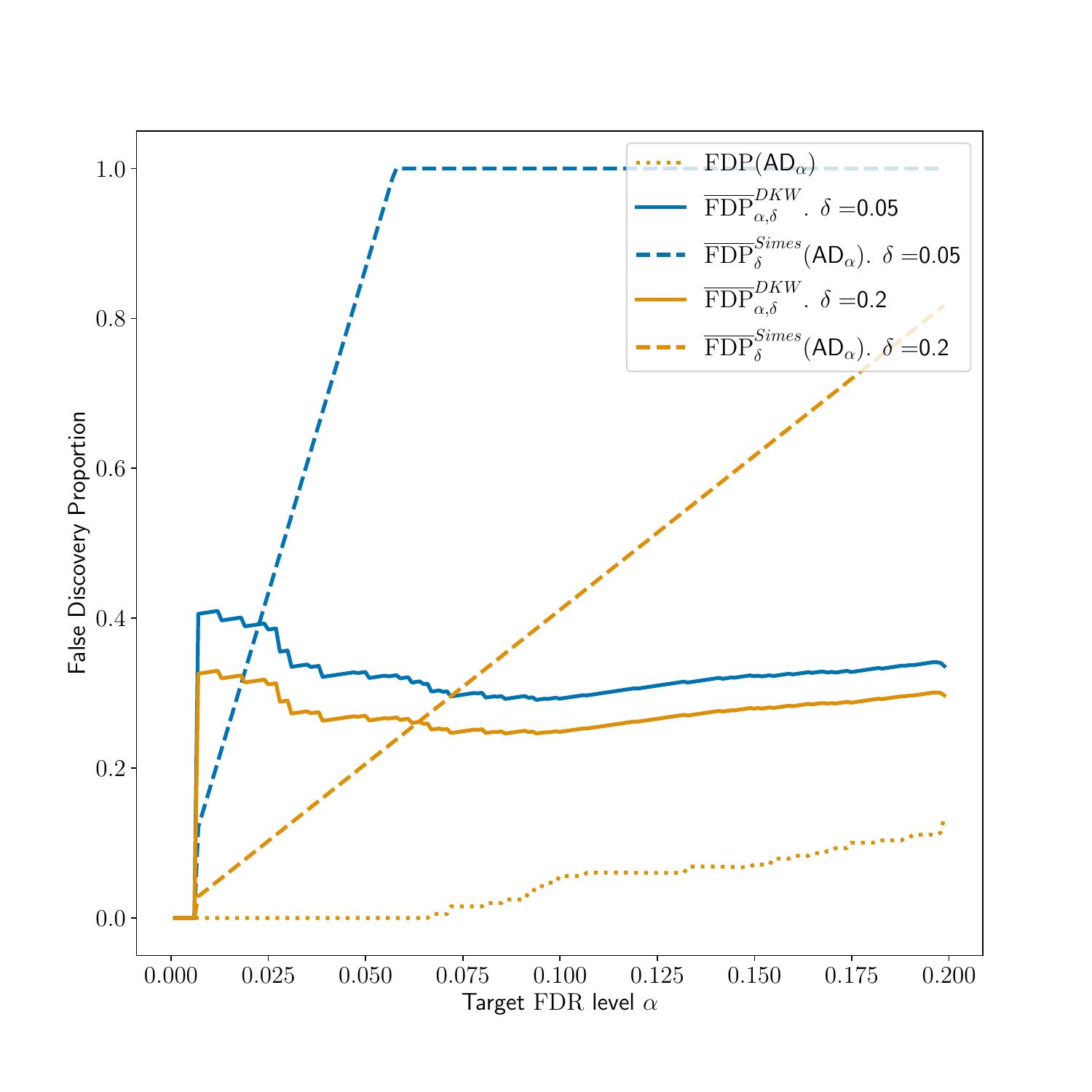}&\includegraphics[scale=0.25]{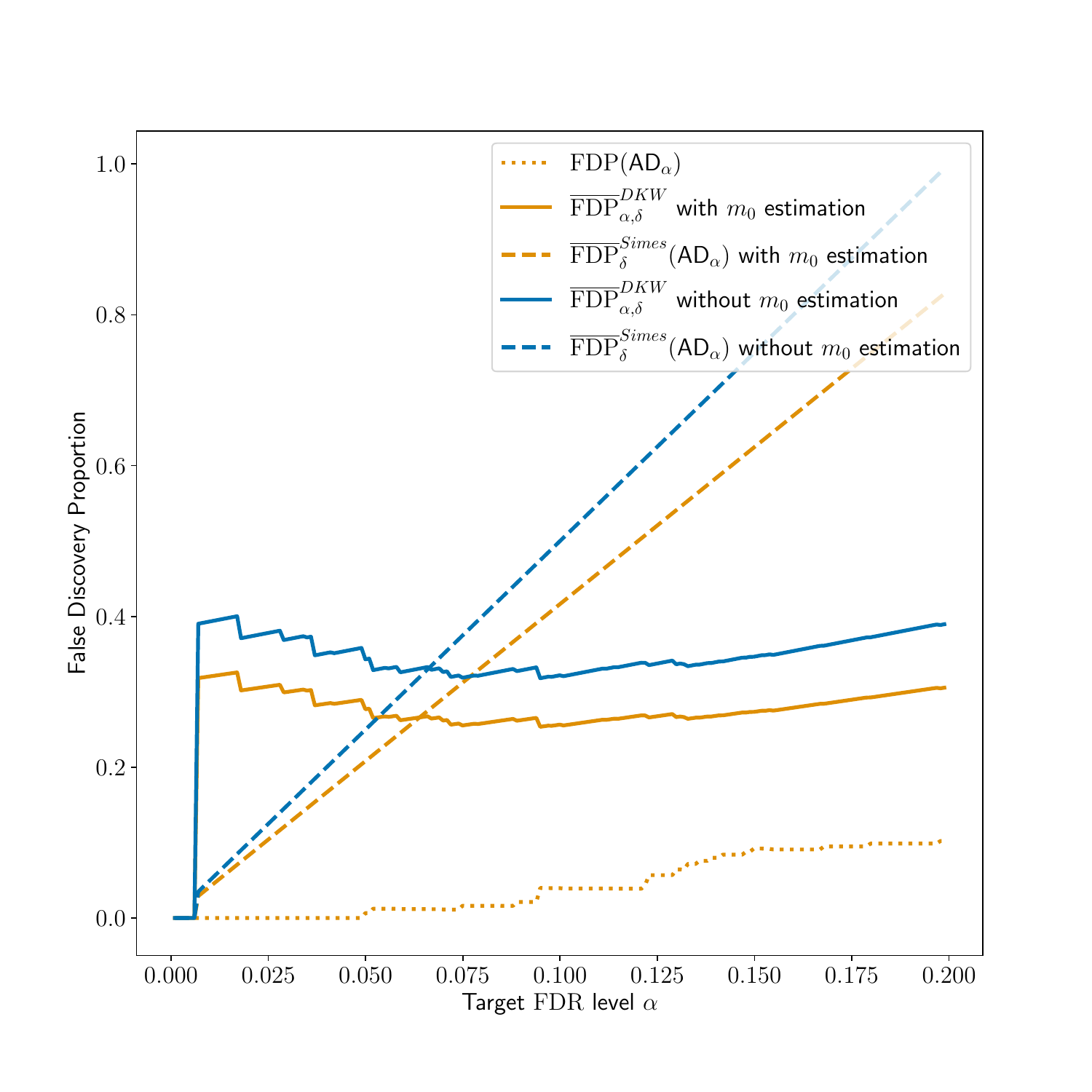}
\end{tabular}
\vspace{-1cm}
\end{center}
\caption{
  Same curves as Figure~\ref{fig:AdaDKW_alldata} (left), but only for two-class classification (adaptive, orange). Left: for comparison,
  the bounds ${\overline{\FDP}}$
  were also plotted for a smaller $\delta=0.05$ value (blue). Right: for comparison, bounds ${\overline{\FDP}}$ also plotted without an estimator of $m_0$
  (taking $m$ instead of $\hat{m}_0$).
\label{fig:AdaDKW_alldata2}}
\end{figure}

\end{document}